% This is samplepaper.tex, a sample chapter demonstrating the
% LLNCS macro package for Springer Computer Science proceedings;
% Version 2.20 of 2017/10/04
%
\documentclass[runningheads]{llncs}
\usepackage{graphicx}
\usepackage{marvosym}
% Used for displaying a sample figure. If possible, figure files should
% be included in EPS format.
%
% If you use the hyperref package, please uncomment the following line
% to display URLs in blue roman font according to Springer's eBook style:
% \renewcommand\UrlFont{\color{blue}\rmfamily}

\usepackage{amsmath}

\usepackage{mathtools}
\usepackage{dashrule}

\usepackage{graphicx,epsfig,color}

\usepackage{xypic}
\xyoption{rotate}
\input xy
\xyoption{all}
\usepackage{verbatim}
\usepackage{amssymb}
\usepackage{eucal}
\usepackage{enumerate}
\usepackage[shortlabels]{enumitem}

\usepackage{multicol}
\usepackage{manfnt}

\usepackage[latin1]{inputenc}

\usepackage{amsthm}
\usepackage{amsfonts}
\usepackage{mathrsfs}
\usepackage[all,2cell]{xy}
\CompileMatrices
\UseAllTwocells
\SilentMatrices
\usepackage{float}
\usepackage{fancybox}
\usepackage{hyperref}
\usepackage{wrapfig}
\usepackage{wasysym}

\usepackage{mathbbol}

%
% --- Modifications of mathcode (cf. TeX book p. 344) ---
%
\mathcode`:="003A  % : ordinary symbol
\mathcode`;="003B  % ; ordinary symbol
\mathcode`?="003F  % ? ordinary symbol
\mathcode`|="026A  % | ordinary symbol
%\mathcode`<="4268  % < abbreviates \langle
%\mathcode`>="5269  % > abbreviates \rangle

\mathchardef\ls="213C    % less symbol (< used as \langle)
\mathchardef\gr="213E    % greater symbol (> used as \rangle)
\mathchardef\uparrow="0222  % adaptation mathmode of uparrow
\mathchardef\downarrow="0223  % adaptation mathmode of downarrow

%\newcommand{\pippo}[1]{}    % Commento a lato, Filippo
%\newcommand{\fabio}[1]{}    % Commento a lato, Fabio
%\newcommand{\pawel}[1]{}    % Commento a lato, Fabio
%
    % Commento a lato, Filippo
    % Commento a lato, Fabio
    % Commento a lato, Fabio

%PRODUCTS
 %Cartesian Product of sets and functions
 %(Bi)Product of Vector Spaces and linear maps
 %Product of Matrices
 %Product of Streams
               % composition of functions and linear maps
   % dimension
  % transpose
            % generic field K
     % dual space
  % double dual space
                    % annihilator
 % kernel
             % "wa"
           % "lwa"
%\newcommand{\Span}{\mathrm{span}}  % span
%{\mathcal{R}}    % relation
 % for margin notes
 % for margin notes
\newcommand{\from}{\mathrel{:}}

\newcommand{\infcomp}{\mathrel{\uparrow}}
\newcommand{\ninfcomp}{\mathrel{\mathrlap{/}{\uparrow}}}

%%%%%%%%%%%%%%%%% %%%%%%%%%%%%%%%%%
%%%%%%%%%%%%%%%%% %%%%%%%%%%%%%%%%% %%%%%%%%%%%%%%%%%
%%%%%%%%%%%%%%%%%
%FABIO'S MACRO
%%%%%%%%%%%%%%%%% %%%%%%%%%%%%%%%%%
%%%%%%%%%%%%%%%%%
%%%%%%%%%%%%%%%%%
% ENVIRONMENTS

\newenvironment{Iff-RL}{\textbf{($\Rightarrow$)} }{\bigskip}
\newenvironment{Iff-LR}{\textbf{($\Leftarrow$)} }{}

% ENVIRONMENTS

%\spnewtheorem*{proofOf}[1]{Proof of #1}{\itshape}{\rmfamily}

% FONTS

\def \: {\colon}

% FUNCTORS

 % algebra for a PROP

 % Lifting of H
 % Lifting of T

 % inclusion functor

% CATEGORIES

%{\mf{C}}
% {\mf{Set}}
%{\mf{Rel}}

%Lawvere theory
%\def \Lw { \mathcal{L}^{\m{op}}_{\Sigma}} % old notation for Lawvere Theory

%Right Kan Extension

% FROBENIUS PAPER

\def \N {\mathbb{N}}
\def \Z {\mathbb{Z}}
 % arbitrary PROP
 % arbitrary PROP
 %{\mf{V}} % arbitrary strict symmetric monoidal category

 % PROP Composition
 % category of PROPs
 % category of PROPs
 % models of a theory
 % Algebra for a PROP
 %{\mf{P}}

 %{\mf{Vect}}

%  {\mf{ABi}}

\def \tns {\oplus}

 %{\mf{SV}} % Subspaces of Z2 Vector Spaces
 % Semantics maps

 % OPERATIONAL Semantics maps
 % Semantics maps

 % Arrow from the Initial object in Vect

\DeclareMathSymbol{\reversedExclMark}{\mathord}{operators}{"3C}
 % Arrow from the final object in Vect

\newcommand\Seqcomp{\lower5pt\hbox{$\includegraphics[height=.8cm]{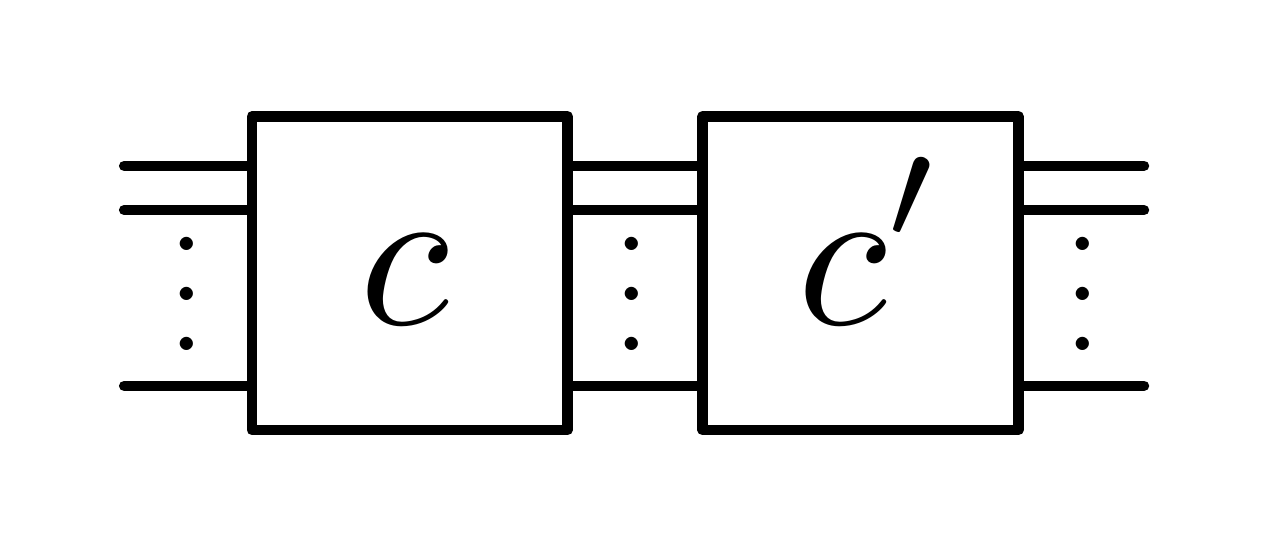}$}}
\newcommand\Parcomp{\lower5pt\hbox{$\includegraphics[height=.8cm]{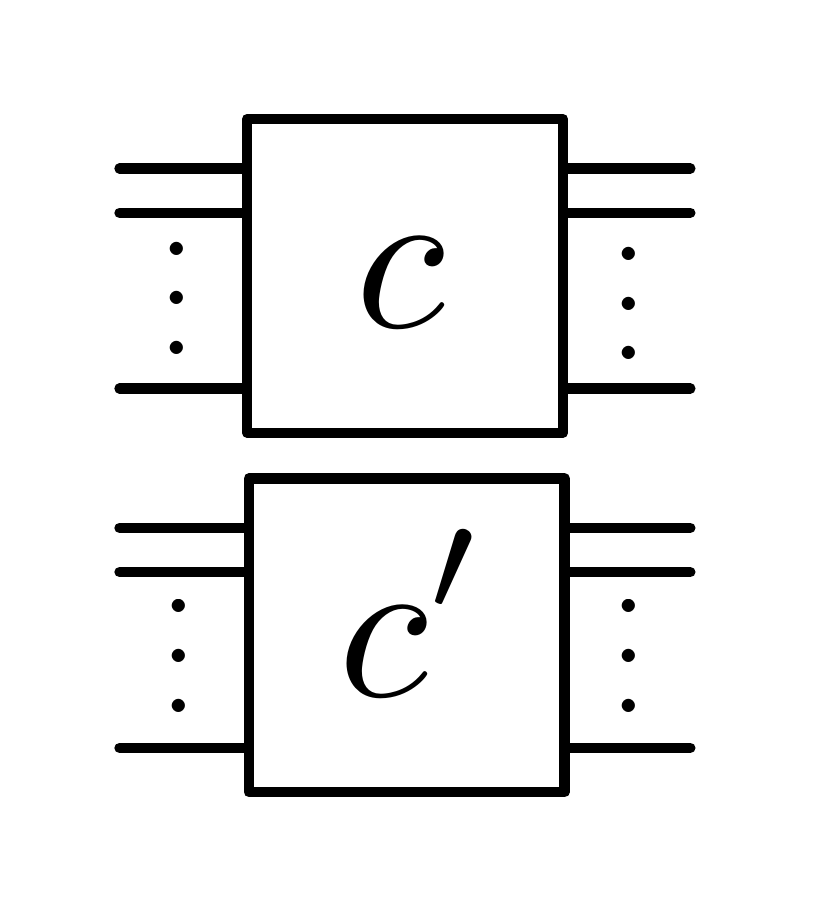}$}}

\newcommand\Gmult{\lower5pt\hbox{$\includegraphics[width=20pt]{graffles/Gmult.pdf}$}}
\newcommand\Gcomult{\lower5pt\hbox{$\includegraphics[width=20pt]{graffles/Gcomult.pdf}$}}
\newcommand\Gunit{\lower5pt\hbox{$\includegraphics[width=16pt]{graffles/Gunit.pdf}$}}
\newcommand\Gcounit{\lower5pt\hbox{$\includegraphics[width=16pt]{graffles/Gcounit.pdf}$}}
\newcommand\twoGcounit{\lower5pt\hbox{$\includegraphics[width=16pt]{graffles/twoGcounit.pdf}$}}

\newcommand\ASepUno{\lower3pt\hbox{$\includegraphics[width=28pt]{graffles/ASep1.pdf}$}}
\newcommand\ASepDue{\lower3pt\hbox{$\includegraphics[width=24pt]{graffles/ASep2.pdf}$}}

\newcommand\BSepOne{\lower3pt\hbox{\includegraphics[width=30pt]{graffles/BlackSep1.pdf}}}
\newcommand\BlackX{\lower5pt\hbox{\includegraphics[width=30pt]{graffles/BlackX.pdf}}}

\newcommand{\figref}[1]{Fig.~\ref{#1}}

\newcommand{\matrixNull}{\bullet}

% PIPPO'S MACROS
 % Semantics map QA --->Coapsn(Vect)
 % QA

 %QA - White Separability
 %QA - Black Separability

 % Semantics map AB --->Vect

    % --1-->
    % <--1--

    % --1-->

% New macros completeness PID

\def \poly {\mathsf{k}[x]} % polynomial ring
\def \frpoly {\mathsf{k}(x)} %field of fractions of the poly ring = rational laurent series
\def \laur {\mathsf{k}((x))} %field of laurent series
\def \fps {\mathsf{k}[[x]]} %field of formal power series
\def \field {\mathsf{k}} %field
 %field of fractions
\def \ratio {\mathsf{k}\langle x\rangle} %field of laurent series

\newcommand \LSum[1] {\sum_{i= #1}^{\infty}}

 % PROP of the PID R
 % PROP of the polynomial ring

% AB ring

% AB ring
 %
%\def \ABpoly {\HA{\poly}}
%\def \ABpolyop {\HA{\poly}^{\m{op}}}

\newcommand \IH[1]{\mathbb{IH}_{\scriptscriptstyle #1}}
 % IB ring
%\def \IBpoly {\IH{\poly}}
%\def \IBpolyw {\IH{\poly}^{w}}
%\def \IBpolyb {\IH{\poly}^{b}}

\def \AIH {\mathbb{aIH}} % IB ring

% the subscripts are driving me mad!!!!

 %IB ring - White Separability
 %IB ring - Black Separability

 % Right modules of a ring
 % Free modules
 % Kernel
 % Cokernel

%\def \VectSpFr {\mathsf{Vect}\ \!{\frPID}}
\def \poi {\,\ensuremath{;}\,}

\def \df {\ \ensuremath{:\!\!=}\ }

 % Restriction to the first r columns
 % trasnpose
 % photographic negative

%\newcommand\scalar{\!\lower4pt\hbox{$\includegraphics[width=22pt]{graffles/scalar.pdf}$}\!}
%\newcommand\scalar{\!\lower3pt\hbox{$\includegraphics[width=22pt]{graffles/scalarcircuitnospace.pdf}$}\!}
%\newcommand\coscalar{\lower8pt\hbox{$\includegraphics[width=20pt]{graffles/coscalar.pdf}$}}
%\newcommand\scalar{\!\lower5pt\hbox{$\includegraphics[width=20pt]{graffles/scalaralt.pdf}$}\!}
%\newcommand\coscalar{\!\lower5pt\hbox{$\includegraphics[width=20pt]{graffles/coscalaralt.pdf}$}\!}
\newcommand\unitscalar{\lower4pt\hbox{$\includegraphics[width=22pt]{graffles/unitscalar.pdf}$}}
\newcommand\scalarminusone{\lower8pt\hbox{$\includegraphics[width=30pt]{graffles/scalarminusone.pdf}$}}
\newcommand\antipode{\lower3pt\hbox{$\includegraphics[width=22pt]{graffles/antipode.pdf}$}}
\newcommand\antipodeop{\lower3pt\hbox{$\includegraphics[width=22pt]{graffles/antipodeop.pdf}$}}
\newcommand\antipodesquare{\lower3pt\hbox{$\includegraphics[width=22pt]{graffles/antipodesquare.pdf}$}}
\newcommand\circuitAdots{\lower6pt\hbox{$\includegraphics[width=30pt]{graffles/circuitAdots.pdf}$}}

\newcommand\wcounitn{\lower5pt\hbox{$\includegraphics[width=25pt]{graffles/wcounitn.pdf}$}}
\newcommand\bcounitn{\lower5pt\hbox{$\includegraphics[width=25pt]{graffles/bcounitn.pdf}$}}
\newcommand\lccn{\lower5pt\hbox{$\includegraphics[width=25pt]{graffles/lccn.pdf}$}}
\newcommand\rccn{\lower5pt\hbox{$\includegraphics[width=25pt]{graffles/rccn.pdf}$}}
\newcommand\idncircuit{\lower5pt\hbox{$\includegraphics[width=25pt]{graffles/idncircuit.pdf}$}}
\newcommand\circuitrbcounits{\lower5pt\hbox{$\includegraphics[width=25pt]{graffles/circuitrbcounits.pdf}$}}
\newcommand\lccB{\lower5pt\hbox{$\includegraphics[width=25pt]{graffles/rccr.pdf}$}}
\newcommand\rccB{\lower5pt\hbox{$\includegraphics[width=25pt]{graffles/lccl.pdf}$}}
\newcommand\IdBcounitc{\lower5pt\hbox{$\includegraphics[width=20pt]{graffles/IdBcounit.pdf}$}}
\newcommand\BcounitId{\lower5pt\hbox{$\includegraphics[width=20pt]{graffles/BcounitId.pdf}$}}
\newcommand\symNetTwoOne{\lower7pt\hbox{$\includegraphics[width=25pt]{graffles/symNet21.pdf}$}}
\newcommand\nscalar{\!\!\lower5pt\hbox{$\includegraphics[width=35pt]{graffles/nscalar.pdf}$}\!\!}
\newcommand\Wmultstar{\!\lower5pt\hbox{$\includegraphics[width=20pt]{graffles/Wmultstar.pdf}$}\!}
\newcommand\scalarstar{\!\!\lower7pt\hbox{$\includegraphics[width=35pt]{graffles/scalarstar.pdf}$}\!\!}
\newcommand\twoBcounit{\!\!\lower8pt\hbox{$\includegraphics[width=20pt]{graffles/lunitsr.pdf}$}\!\!}

\newcommand\delay{\!\lower6pt\hbox{$\includegraphics[width=25pt]{graffles/delaycircuit.pdf}$}\!}
\newcommand\scalarp{\!\lower4pt\hbox{$\includegraphics[width=22pt]{graffles/scalarp.pdf}$}\!}
\newcommand\scalarpstar{\!\lower3pt\hbox{$\includegraphics[width=25pt]{graffles/scalarpstar.pdf}$}\!}
\newcommand\scalarpop{\!\lower4pt\hbox{$\includegraphics[width=22pt]{graffles/scalarpop.pdf}$}\!}
\newcommand\nscalarp{\!\lower3pt\hbox{$\includegraphics[width=30pt]{graffles/nscalarp.pdf}$}\!}
\newcommand\circuitfibr{\!\lower3pt\hbox{$\includegraphics[width=44pt]{graffles/circuitfibr.pdf}$}\!}
\newcommand\ncircuitX{\!\lower3pt\hbox{$\includegraphics[width=30pt]{graffles/ncircuitX.pdf}$}\!}
\newcommand\scalarpone{\!\lower3pt\hbox{$\includegraphics[width=22pt]{graffles/scalarpone.pdf}$}\!}
\newcommand\scalarptwoop{\!\lower3pt\hbox{$\includegraphics[width=22pt]{graffles/scalarptwoop.pdf}$}\!}

\setlength{\multicolsep}{6.0pt plus 2.0pt minus 1.5pt}

%\newcommand{\tran}[1]{{#1}^t}

%%%%%%%%%%%%%%%%%%%%%%%%%%%%%%%%%%%
%%%%%%%%%%%%%%%%%%%%%%%%%%%%%%%%%%%
%%%%%% NEW MACROS FOR POPLS SUBMISSION - FABIO %%%%%%%%
%%%%%%%%%%%%%%%%%%%%%%%%%%%%%%%%%%%
%%%%%%%%%%%%%%%%%%%%%%%%%%%%%%%%%%%

%\theoremstyle{definition}\newtheorem{lemma}{Lemma}
%\theoremstyle{definition}\newtheorem{proposition}{Proposition}
%\theoremstyle{definition}\newtheorem{definition}{Definition}
%\theoremstyle{definition}\newtheorem{remark}{Remark}
%\theoremstyle{definition}\newtheorem{example}{Example}
%\theoremstyle{definition}\newtheorem{theorem}{Theorem}
%\theoremstyle{definition}\newtheorem{corollary}{Corollary}

\tikzset{x=1em, y=1.5ex}
}

\newcommand{\Wunit}{
\tikzset{x=1em, y=2.1ex}
\begin{tikzpicture}[baseline=-.5ex, scale=.7]
	\begin{pgfonlayer}{nodelayer}
		\node [style=white] (0) at (0.25, -0) {};
		\node [style=none] (1) at (1.5, -0) {};
	\end{pgfonlayer}
	\begin{pgfonlayer}{edgelayer}
		\draw (0) to (1.center);
	\end{pgfonlayer}
\end{tikzpicture}}
\tikzset{x=1em, y=1.5ex}
}
\newcommand{\Bcounit}{
\tikzset{x=1em, y=2.1ex}
\begin{tikzpicture}[scale=.7, baseline=-.5ex]
	\begin{pgfonlayer}{nodelayer}
		\node [style=black] (0) at (1.5, -0) {};
		\node [style=none] (1) at (0, -0) {};
	\end{pgfonlayer}
	\begin{pgfonlayer}{edgelayer}
		\draw (0) to (1.center);
	\end{pgfonlayer}
\end{tikzpicture}}
\tikzset{x=1em, y=1.5ex}
}
 \newcommand{\Bcomult}{
\tikzset{x=1em, y=2.1ex}
\InputIfFileExists{./generators/copy.tikz}{}{\input{./tikz/./generators/copy.tikz}}
\tikzset{x=1em, y=1.5ex}
}
\newcommand{\Wmult}{
\tikzset{x=1em, y=2.1ex}
\InputIfFileExists{./generators/add.tikz}{}{\input{./tikz/./generators/add.tikz}}
\tikzset{x=1em, y=1.5ex}
}
\newcommand{\scalar}{
\tikzset{x=1em, y=2.1ex}
\begin{tikzpicture}
	\begin{pgfonlayer}{nodelayer}
		\node [style=reg] (0) at (-0.25, 0) {$k$};
		\node [style=none] (1) at (1, 0) {};
		\node [style=none] (2) at (-1.5, 0) {};
	\end{pgfonlayer}
	\begin{pgfonlayer}{edgelayer}
		\draw (2.center) to (0);
		\draw (0) to (1.center);
	\end{pgfonlayer}
\end{tikzpicture}
}
\tikzset{x=1em, y=1.5ex}
}
\newcommand{\circuitX}{
\tikzset{x=1em, y=2.1ex}
\begin{tikzpicture}
	\begin{pgfonlayer}{nodelayer}
		\node [style=reg] (0) at (-0.25, -0) {$x$};
		\node [style=none] (1) at (1, -0) {};
		\node [style=none] (2) at (-1.5, -0) {};
	\end{pgfonlayer}
	\begin{pgfonlayer}{edgelayer}
		\draw (2.center) to (0);
		\draw (0) to (1.center);
	\end{pgfonlayer}
\end{tikzpicture}}
\tikzset{x=1em, y=1.5ex}
}
\newcommand{\Bunit}{
\tikzset{x=1em, y=2.1ex}
\begin{tikzpicture}[baseline=-.5ex, scale=.7]
	\begin{pgfonlayer}{nodelayer}
		\node [style=black] (0) at (0.25, 0) {};
		\node [style=none] (1) at (1.75, 0) {};
	\end{pgfonlayer}
	\begin{pgfonlayer}{edgelayer}
		\draw (0) to (1.center);
	\end{pgfonlayer}
\end{tikzpicture}
}
\tikzset{x=1em, y=1.5ex}
}

\tikzset{x=1em, y=1.5ex}
}

\tikzset{x=1em, y=1.5ex}
}

\tikzset{x=1em, y=1.5ex}
}

% NOTATION FOR CONSTANTS OF THE SIGNAL FLOW CALCULUS
\newcommand{\IdnetT}{
\tikzset{x=1em, y=2.1ex}
}
\tikzset{x=1em, y=1.5ex}
}
\newcommand{\symNetT}{
\tikzset{x=1em, y=2.1ex}
\InputIfFileExists{./generators/sym.tikz}{}{\input{./tikz/./generators/sym.tikz}}
\tikzset{x=1em, y=1.5ex}
}
\newcommand{\WunitT}{
\tikzset{x=1em, y=2.1ex}
}
\tikzset{x=1em, y=1.5ex}
}
\newcommand{\BcounitT}{
\tikzset{x=1em, y=2.1ex}
}
\tikzset{x=1em, y=1.5ex}
}
 \newcommand{\BcomultT}{
\tikzset{x=1em, y=2.1ex}
\InputIfFileExists{./generators/copy.tikz}{}{\input{./tikz/./generators/copy.tikz}}
\tikzset{x=1em, y=1.5ex}
}
\newcommand{\WmultT}{
\tikzset{x=1em, y=2.1ex}
\InputIfFileExists{./generators/add.tikz}{}{\input{./tikz/./generators/add.tikz}}
\tikzset{x=1em, y=1.5ex}
}
\newcommand{\scalarT}{
\tikzset{x=1em, y=2.1ex}
}
\tikzset{x=1em, y=1.5ex}
}
\newcommand{\circuitXT}{
\tikzset{x=1em, y=2.1ex}
}
\tikzset{x=1em, y=1.5ex}
}
\newcommand{\BunitT}{
\tikzset{x=1em, y=2.1ex}
}
\tikzset{x=1em, y=1.5ex}
}
\newcommand{\WcounitT}{
\tikzset{x=1em, y=2.1ex}
}
\tikzset{x=1em, y=1.5ex}
}
\newcommand{\WcomultT}{
\tikzset{x=1em, y=2.1ex}
\InputIfFileExists{./generators/co-add.tikz}{}{\input{./tikz/./generators/co-add.tikz}}
\tikzset{x=1em, y=1.5ex}
}
\newcommand{\BmultT}{
\tikzset{x=1em, y=2.1ex}
\InputIfFileExists{./generators/co-copy.tikz}{}{\input{./tikz/./generators/co-copy.tikz}}
\tikzset{x=1em, y=1.5ex}
}
\newcommand{\scalaropT}{
\tikzset{x=1em, y=2.1ex}
}
\tikzset{x=1em, y=1.5ex}
}
\newcommand{\circuitXopT}{
\tikzset{x=1em, y=2.1ex}
}
\tikzset{x=1em, y=1.5ex}
}
\newcommand{\ZeronetT}{
\tikzset{x=1em, y=2.1ex}
\InputIfFileExists{./generators/empty-diag.tikz}{}{\input{./tikz/./generators/empty-diag.tikz}}
\tikzset{x=1em, y=1.5ex}
}

\newcommand{\circuitminusone}{\!\lower5pt\hbox{$\includegraphics[width=20pt,height=15pt]{graffles/circuitminusone.pdf}$}\!}
\newcommand{\circuitminusoneop}{\!\lower5pt\hbox{$\includegraphics[width=20pt,height=15pt]{graffles/circuitminusoneop.pdf}$}\!}

%%%MACROS FOR PROOF OF THEOREM 5

\newcommand{\typ}{\mathrel{:}}

% DENOTATIONAL SEMANTICS
\newcommand{\lbbd}{\mathopen{[\![}}
\newcommand{\rbbd}{\mathclose{]\!]}}
\newcommand{\dsem}[1]{\lbbd #1 \rbbd}
\newcommand{\dsemO}{\dsem{\cdot}}

% OPERATIONAL SEMANTICS
%\newcommand{\lbbo}{\mathopen{<}}
%\newcommand{\rbbo}{\mathclose{>}}
\newcommand{\lbbo}{\mathopen{\langle}}
\newcommand{\rbbo}{\mathclose{\rangle}}
\newcommand{\osem}[1]{\lbbo #1 \rbbo}

\newcommand\idzcircuit{\lower5pt\hbox{$\includegraphics[width=20pt]{graffles/idzcircuit.pdf}$}}
\newcommand{\circuitXspan}{\!\lower4pt\hbox{$\includegraphics[width=40pt]{graffles/circuitXspan.pdf}$}\!}
\newcommand{\circuitXcospan}{\!\lower5pt\hbox{$\includegraphics[width=40pt]{graffles/circuitXcospan.pdf}$}\!}
\newcommand\zeroscalar{\lower3pt\hbox{$\includegraphics[width=20pt]{graffles/zeroscalar.pdf}$}}
\newcommand\zeroscalarr{\lower3pt\hbox{$\includegraphics[width=25pt]{graffles/zeroscalar2.pdf}$}}
\newcommand\rationalcircuit{\lower5pt\hbox{$\includegraphics[width=35pt]{graffles/rationalcircuit.pdf}$}}
\newcommand\Wccl{\lower5pt\hbox{$\includegraphics[width=22pt]{graffles/Wccl.pdf}$}}
\newcommand{\circuitkkop}{\!\lower4pt\hbox{$\includegraphics[width=32pt]{graffles/circuitkkop.pdf}$}\!}

\usepackage{centernot}

\newcommand\circuitUnoMinusX{\lower4.5pt\hbox{$\includegraphics[width=32pt]{graffles/circuit1-x.pdf}$}}
\newcommand\circuitUnoMinusXSquare{\lower5pt\hbox{$\includegraphics[width=35pt]{graffles/circuit1-xsquare.pdf}$}}

%%%%%%%%%%%%%%%%%%%%%%%%%%%%%%%%%%%
%%%%%%%%%%%%%%%%%%%%%%%%%%%%%%%%%%%
%%%%%% MACROS FROM PAWEL"S PETRI CALCULUS %%%%%%%%%%
%%%%%%%%%%%%%%%%%%%%%%%%%%%%%%%%%%%
%%%%%%%%%%%%%%%%%%%%%%%%%%%%%%%%%%%
\usepackage{prooftree}

\newcommand{\ruleLabel}[1]{#1}
\newcommand{\labelSep}{\,}

\newcommand{\bnfEq}{\; ::= \;}
\newcommand{\bnfSep}{\;\; | \;\;}

\makeatletter
\def\moverlay{\mathpalette\mov@rlay}
\def\mov@rlay#1#2{\leavevmode\vtop{%
\baselineskip\z@skip \lineskiplimit-\maxdimen
\ialign{\hfil$#1##$\hfil\cr#2\crcr}}}
\makeatother

% Runtime context
\newcommand{\fullcontext}[2]{\tcbox[size=small,on line, colback=black!10,  colframe=black!15, extrude right by=-1pt, extrude left by=-1pt, sharp corners=east, valign=center]{$#1\;\triangleright\;#2$}}

\newcommand{\context}[1]{\tcbox[size=small,on line, colback=black!10,  colframe=black!15, extrude right by=-2pt, extrude left by=-.5pt, sharp corners=east, valign=center]{$#1\;\triangleright$}}

\newcommand{\typeJudgment}[3]{{ {#2} \,\typ\, {#3}}}
\newcommand{\sort}[2]{\ensuremath{(#1,\,#2)}}

 \newcommand{\derivationRule}[3]{{\prooftree{ #1}\justifies{ #2}\using\ruleLabel{#3}\endprooftree}}
\newcommand{\reductionRule}[2]{{\prooftree{\scriptstyle #1}\justifies{\scriptstyle #2}\endprooftree}}

\newcommand\twarr[2]{%
\mathrel{\mathop{\moverlay{\scriptstyle\xrightarrow{\,#1\,}\cr{\lower.2em\hbox{$\scriptstyle{}_{#2}$}}}}}}
\newcommand\twarrw[2]{%
\mathrel{\mathop{\moverlay{\scriptstyle\Longrightarrow\cr{\lower-.6em\hbox{$\scriptstyle{}_{#1}$}}
\cr{\lower.3em\hbox{$\scriptstyle{}_{#2}$}}}}}}

\newcommand{\dtrans}[2]{\hbox{$\;\twarr{#1}{#2}\;$}}
\newcommand{\dtransw}[2]{\raise1pt\hbox{$\;\twarrw{#1}{#2}\;$}}
%\newcommand{\dtranswl}[2]{\raise1pt\hbox{$\;\twarrw{#1}{#2}\;$}}

%%%%%%%%%%%%%%%%%%%%%%%%%%%%%%%%%%%
%%%%%%%%%%%%%%%%%%%%%%%%%%%%%%%%%%%
%%%%%% MACROS FILIPPO %%%%%%%%%%
%%%%%%%%%%%%%%%%%%%%%%%%%%%%%%%%%%%
%%%%%%%%%%%%%%%%%%%%%%%%%%%%%%%%%%%

%\def \CD {\mathbb{C}} %Category of circuits Diagrams

%\def \FC {\mathsf{f}\CD}
%\def \FCop {\mathsf{f}\CD^{\m{op}}}
\def \CD {\mathsf{Circ}}
 % ALIAS DI FABIO FOR THE PROP OF CIRCUITS
\newcommand{\FC}{\mathsf{C}\overrightarrow{\hspace{-.1cm}\scriptstyle\mathsf{irc}}}

%\newcommand \Subspace[1]{\mathbb{SV}_{\scriptscriptstyle #1}}
%\def \SVR {\Subspace{\frPID}}
%% the subscripts are driving me mad!!!!
%\def \SVpoly {\Subspace{\frpoly}}
%\def \SM {\mathbb{SM}}
\usepackage{bbm}

\newcommand{\eqAIH}{\stackrel{\tiny \AIH }{=}}

\def\SFGform{\mathsf{SF}}
 % PROP of signal flow graphs
\def\ASFG{\mathsf{ASF}}
\newcommand\Tr[1]{\mathsf{Tr}^{#1}} % PROP of signal flow graphs

 % PROP of signal flow graphs
 % PROP of signal flow graphs

\newcommand{\Traj}{\mathsf{Traj}}

%%%%%%%%%% TIKZ STYLES AND MACROS %%%%%%%%%%

\usepackage{tikz}
\usetikzlibrary{arrows,backgrounds,shapes.geometric,shapes.misc,matrix,arrows.meta,decorations.markings}

\pgfdeclarelayer{background}
\pgfdeclarelayer{edgelayer}
\pgfdeclarelayer{nodelayer}
\pgfsetlayers{background,edgelayer,nodelayer,main}
\tikzset{baseline=-0.5ex}
\definecolor{light-gray}{gray}{.7}
\tikzstyle{none}=[inner sep=0pt]
\tikzstyle{plain}=[inner sep=0pt]
\tikzstyle{black}=[circle, draw=black, fill=black, inner sep=0pt, minimum size=3.5pt]
\tikzstyle{black-faded}=[circle, draw=light-gray, fill=light-gray, inner sep=0pt, minimum size=4pt]
\tikzstyle{white}=[circle, draw=black, fill=white, inner sep=0pt, minimum size=3.5pt]
\tikzstyle{white-faded}=[circle, draw=light-gray, fill=white, inner sep=0pt, minimum size=4.5pt]
\tikzstyle{delay}=[fill=black, regular polygon, regular polygon sides=3,rotate=-90, scale=.55]
\tikzstyle{delay-op}=[fill=black, regular polygon, regular polygon sides=3,rotate=90, scale=.55]
\tikzstyle{reg}=[draw, fill=white, rounded rectangle, rounded rectangle left arc=none, minimum height=1em, minimum width=1em, node font={\scriptsize}]
\tikzstyle{coreg}=[draw, fill=white, rounded rectangle, rounded rectangle right arc=none, minimum height=1em, minimum width=1em, node font={\scriptsize}]
\tikzstyle{rn}=[circle, draw=red, fill=red, inner sep=0pt, minimum size=4pt]
\tikzstyle{place}=[circle, draw=black, fill=white, inner sep=0pt, minimum size=8pt]

\newcommand{\ctikzfig}[1]{%
\begin{center}
  
\tikzset{x=1em, y=2.1ex}
\InputIfFileExists{#1.tikz}{}{\input{./tikz/#1.tikz}}
\tikzset{x=1em, y=1.5ex}

\end{center}}

\usepackage[most]{tcolorbox}
\usepackage{xstring}

% Colored background macro for diagrams
%\newcommand{\boxtikzfig}[2]{
%\IfEqCase{#2}{
%        {l}{\tcbox[size=small,on line,before upper=\strut, colback=black!10, sharp corners,  colframe=black, extrude right by=-.1cm, extrude left by=-.55cm]{\tikzfig{#1}}}%
%        {r}{\tcbox[size=small,on line,before upper=\strut, colback=black!10, sharp corners, colframe=black!13, extrude right by=-.55cm, extrude left by=-.1cm]{\tikzfig{#1}}}
%        {b}{\tcbox[size=small,on line,before upper=\strut, colback=black!10, sharp corners, colframe=black!13, extrude right by=-.55cm, extrude left by=-.55cm]{\tikzfig{#1}}}}}

\newcommand{\boxtikzfig}[2]{
\tikzset{x=1em, y=2.1ex}
\InputIfFileExists{#1.tikz}{}{\input{./tikz/#1.tikz}}
\tikzset{x=1em, y=1.5ex}
}
%Old boxed tikz figure {
%\IfEqCase{#2}{
%{r}{\framebox[0.9\width][c]{\!\tikzfig{#1}\!\!\!\!\!\!\!}}
%{l}{\framebox[0.9\width][c]{\!\!\!\!\!\tikzfig{#1}\!\!\!}}
%{b}{\framebox[0.9\width][c]{\!\!\!\!\!\!\!\tikzfig{#1}\!\!\!\!\!\!\!}}}}

\newcommand{\myeq}[1]{\mathrel{\overset{\makebox[0pt]{\mbox{\normalfont\scriptsize\sffamily (#1)}}}{=}}}

\newcommand{\One}{
\tikzset{x=1em, y=2.1ex}
\begin{tikzpicture}[baseline=-.5ex]
	\begin{pgfonlayer}{nodelayer}
		\node [style=none] (0) at (-0.75, -0) {};
		\node [style=none] (1) at (0.25, -0) {};
		\node [style=none] (2) at (-0.75, 0.25) {};
		\node [style=none] (3) at (-0.75, -0.25) {};
	\end{pgfonlayer}
	\begin{pgfonlayer}{edgelayer}
		\draw (0.center) to (1.center);
		\draw (3.center) to (2.center);
	\end{pgfonlayer}
\end{tikzpicture}}
\tikzset{x=1em, y=1.5ex}
}
\newcommand{\coOne}{
\tikzset{x=1em, y=2.1ex}
\begin{tikzpicture}[baseline=-.5ex]
	\begin{pgfonlayer}{nodelayer}
		\node [style=none] (0) at (0.25, 0) {};
		\node [style=none] (1) at (-0.75, 0) {};
		\node [style=none] (2) at (0.25, 0.25) {};
		\node [style=none] (3) at (0.25, -0.25) {};
	\end{pgfonlayer}
	\begin{pgfonlayer}{edgelayer}
		\draw (0.center) to (1.center);
		\draw (3.center) to (2.center);
	\end{pgfonlayer}
\end{tikzpicture}
}
\tikzset{x=1em, y=1.5ex}
}
\def \ACD {\mathsf{ACirc}}

\newcommand{\Obseq}{\equiv}

\newcommand \ARel[1]{{\mathsf{ARel}}_{\scriptscriptstyle #1}}

\newcommand{\tcoscalar}[1]{
\begin{tikzpicture}[x=1em, y=2.1ex]
	\begin{pgfonlayer}{nodelayer}
		\node [style=coreg] (0) at (-0.25, 0) {$#1$};
		\node [style=none] (1) at (1.25, 0) {};
		\node [style=none] (2) at (-1.75, 0) {};
	\end{pgfonlayer}
	\begin{pgfonlayer}{edgelayer}
		\draw (2.center) to (0);
		\draw (0) to (1.center);
	\end{pgfonlayer}
\end{tikzpicture}}

\newcommand{\tscalar}[1]{
\begin{tikzpicture}[x=1em, y=2.1ex]
	\begin{pgfonlayer}{nodelayer}
		\node [style=reg] (0) at (-0.25, 0) {$#1$};
		\node [style=none] (1) at (1.25, 0) {};
		\node [style=none] (2) at (-1.75, 0) {};
	\end{pgfonlayer}
	\begin{pgfonlayer}{edgelayer}
		\draw (2.center) to (0);
		\draw (0) to (1.center);
	\end{pgfonlayer}
\end{tikzpicture}}

% Register with state
\newcommand{\stregister}[1]{
\begin{tikzpicture}[x=1em, y=2.1ex]
	\begin{pgfonlayer}{nodelayer}
		\node [style=reg] (0) at (-0.25, 0) {$x$};
		\node [style=none] (1) at (1.25, 0) {};
		\node [style=none] (2) at (-1.75, 0) {};
		\node [style=none] (3) at (-0.25, 1) {\scriptsize $#1$};
	\end{pgfonlayer}
	\begin{pgfonlayer}{edgelayer}
		\draw (2.center) to (0);
		\draw (0) to (1.center);
	\end{pgfonlayer}
\end{tikzpicture}
}
\newcommand{\stcoregister}[1]{
\begin{tikzpicture}[x=1em, y=2.1ex]
	\begin{pgfonlayer}{nodelayer}
		\node [style=coreg] (0) at (-0.25, 0) {$x$};
		\node [style=none] (1) at (1.25, 0) {};
		\node [style=none] (2) at (-1.75, 0) {};
		\node [style=none] (3) at (-0.25, 1) {\scriptsize $#1$};
	\end{pgfonlayer}
	\begin{pgfonlayer}{edgelayer}
		\draw (2.center) to (0);
		\draw (0) to (1.center);
	\end{pgfonlayer}
\end{tikzpicture}
}

\newcommand{\cospanregs}[2]{
\begin{tikzpicture}[x=1em, y=2.1ex]
	\begin{pgfonlayer}{nodelayer}
		\node [style=reg] (0) at (-0.25, 0) {$x$};
		\node [style=coreg] (1) at (1.25, 0) {$x$};
		\node [style=none] (2) at (-1.5, 0) {};
		\node [style=none] (3) at (-0.25, 1) {\scriptsize $#1$};
		\node [style=none] (4) at (2.5, 0) {};
		\node [style=none] (5) at (1.25, 1) {\scriptsize $#2$};
	\end{pgfonlayer}
	\begin{pgfonlayer}{edgelayer}
		\draw (2.center) to (0);
		\draw (0) to (1);
		\draw (1) to (4.center);
	\end{pgfonlayer}
\end{tikzpicture}
}

\newcommand{\spanregs}[1]{
\begin{tikzpicture}[x=1em, y=2.1ex]
	\begin{pgfonlayer}{nodelayer}
		\node [style=coreg] (0) at (-0.25, 0) {$x$};
		\node [style=reg] (1) at (1.25, 0) {$x$};
		\node [style=none] (2) at (-1.5, 0) {};
		\node [style=none] (3) at (-0.25, 1) {\scriptsize $#1$};
		\node [style=none] (4) at (2.5, 0) {};
		\node [style=none] (5) at (1.25, 1) {\scriptsize $#1$};
	\end{pgfonlayer}
	\begin{pgfonlayer}{edgelayer}
		\draw (2.center) to (0);
		\draw (0) to (1);
		\draw (1) to (4.center);
	\end{pgfonlayer}
\end{tikzpicture}
}
\newcommand{\onetwocoregs}[2]{
\begin{tikzpicture}[x=1em, y=2.1ex]
	\begin{pgfonlayer}{nodelayer}
		\node [style=none] (0) at (-1.75, 0) {};
		\node [style=coreg] (1) at (-0.5, 0) {$x$};
		\node [style=none] (2) at (-1.75, 0.25) {};
		\node [style=none] (3) at (-1.75, -0.25) {};
		\node [style=coreg] (4) at (1, 0) {$x$};
		\node [style=none] (5) at (2.25, 0) {};
		\node [style=none] (6) at (-0.5, 1) {\scriptsize $#1$};
		\node [style=none] (7) at (1, 1) {\scriptsize $#2$};
	\end{pgfonlayer}
	\begin{pgfonlayer}{edgelayer}
		\draw (0.center) to (1);
		\draw (3.center) to (2.center);
		\draw (4) to (1);
		\draw (4) to (5.center);
	\end{pgfonlayer}
\end{tikzpicture}}

\newcommand{\onecoreg}[1]{
\begin{tikzpicture}[x=1em, y=2.1ex]
	\begin{pgfonlayer}{nodelayer}
		\node [style=none] (0) at (-1.75, 0) {};
		\node [style=coreg] (1) at (-0.5, 0) {$x$};
		\node [style=none] (2) at (-1.75, 0.25) {};
		\node [style=none] (3) at (-1.75, -0.25) {};
		\node [style=none] (4) at (0.75, 0) {};
		\node [style=none] (6) at (-0.5, 1) {\scriptsize $#1$};
	\end{pgfonlayer}
	\begin{pgfonlayer}{edgelayer}
		\draw (0.center) to (1);
		\draw (3.center) to (2.center);
		\draw (4.center) to (1);
	\end{pgfonlayer}
\end{tikzpicture}}

\begin{document}
\title{Contextual Equivalence for Signal Flow Graphs}
%
%\titlerunning{Abbreviated paper title}
% If the paper title is too long for the running head, you can set
% an abbreviated paper title here
%
\author{Filippo Bonchi\inst{1}%\orcidID{0000-1111-2222-3333}
\and
Robin Piedeleu\inst{2}\thanks{Supported by EPSRC grant EP/R020604/1.}%\orcidID{1111-2222-3333-4444}
\and
Pawe{\l} Soboci{\'n}ski\inst{3}\thanks{Supported
by the ESF funded Estonian IT Academy research measure
(project 2014-2020.4.05.19-0001)}
\and
Fabio Zanasi\inst{2}$^\star$%\orcidID{2222--3333-4444-5555}
}
\authorrunning{F. Bonchi et al.}
% First names are abbreviated in the running head.
% If there are more than two authors, 'et al.' is used.
%
\institute{Universit\`a di Pisa, Italy
%\email{lncs@springer.com}\\
%\url{http://www.springer.com/gp/computer-science/lncs}
\and
University College London, UK,
\email{\{r.piedeleu, f.zanasi\}@ucl.ac.uk}\and
Tallinn University of Technology, Estonia
%\email{\{abc,lncs\}@uni-heidelberg.de}
}
\maketitle              % typeset the header of the contribution
\begin{abstract}
We extend the signal flow calculus---a compositional account of the classical signal flow graph
model of computation---to encompass affine behaviour, and furnish it with a novel
operational semantics. The increased expressive power allows us to define
a canonical notion of contextual equivalence, which we show to coincide with denotational equality.
Finally, we characterise the realisable fragment of the calculus: those terms that express the computations
of (affine) signal flow graphs.
\keywords{signal flow graphs  \and affine relations \and full abstraction \and contextual equivalence \and string diagrams}
\end{abstract}

\section{Introduction}

Compositional accounts of models of computation often lead one to consider \emph{relational} models because a decomposition of an input-output system might consist of internal parts where flow and causality are not always easy to assign. These insights led Willems~\cite{Willems2007} to introduce a new current of control theory, called \emph{behavioural} control: roughly speaking, behaviours and observations are of prime concern, notions such as state, inputs or outputs are secondary. Independently, programming language theory converged on similar ideas, with \emph{contextual equivalence}~\cite{morris1969lambda,plotkin1975call} often considered as \emph{the} equivalence: programs are judged to be different if we can find some context in which one behaves differently from the other, and what is observed about ``behaviour'' is often something quite canonical and simple, such as termination. Hoare~\cite{Hoare1985} and Milner~\cite{Milner1980} discovered that these programming language theory innovations also bore fruit in the non-deterministic context of concurrency. 
Here again, research converged on studying simple and canonical contextual equivalences~\cite{Milner1992a,Honda1995}.

This paper brings together all of the above threads. The model of computation of interest for us is that of signal flow graphs~\cite{Shannon1942,mason1953feedback}, which are feedback systems well known in control theory~\cite{mason1953feedback} and widely used in the modelling of linear dynamical systems (in continuous time) and signal processing circuits (in discrete time). 
The \emph{signal flow calculus}~\cite{BonchiSZ17,Bonchi2015} is a syntactic presentation with an underlying compositional  denotational semantics in terms of linear relations. Armed with \emph{string diagrams}~\cite{Selinger2009} as a syntax, the tools and concepts of programming language theory and concurrency theory can be put to work and the calculus can be equipped with a structural operational semantics. However, while in previous work~\cite{Bonchi2015} a connection was made between operational equivalence (essentially trace equivalence) and denotational equality, the signal flow calculus was not quite expressive enough for contextual equivalence to be a useful notion.

The crucial step turns out to be moving from \emph{linear} relations to \emph{affine} relations, i.e. linear subspaces translated by a vector. In recent work~\cite{BonchiPSZ19}, we showed that they can be used to study important physical phenomena, such as current and voltage sources in electrical engineering, as well as fundamental synchronisation primitives in concurrency, such as mutual exclusion. Here we show that, in addition to yielding compelling mathematical domains, affinity proves to be the magic ingredient that ties the different components of the story of signal flow graphs together: it provides us with a canonical and simple notion of observation to use for the \emph{definition} of contextual equivalence, and gives us the expressive power to prove a bona fide full abstraction result that relates contextual equivalence with denotational equality.

To obtain the above result, we extend the signal flow calculus to handle affine behaviour. While the denotational semantics and axiomatic theory appeared in~\cite{BonchiPSZ19}, the operational account appears here for the first time and requires some technical innovations: instead of traces, we consider \emph{trajectories}, which are infinite traces that may start in the past. To record the time, states of our transition system have a runtime environment that keeps track of the global clock.

Because the affine signal flow calculus is oblivious to flow directionality, some terms exhibit pathological operational behaviour. We illustrate these phenomena with several examples. Nevertheless, for the linear sub-calculus, it is known~\cite{Bonchi2015} that every term is denotationally equal to an executable realisation: one that is in a form where a consistent flow can be identified, like the classical notion of signal flow graph. We show that the question has a more subtle answer in the affine extension: not all terms are realisable as (affine) signal flow graphs. However, we are able to characterise the class of diagrams for which this is true.

\vspace{-.1cm}

\paragraph{Related work.} Several authors studied signal flow graphs by exploiting concepts and techniques of programming language semantics, see e.g.~\cite{Prak2014,DBLP:conf/lics/Milius10,DBLP:journals/tcs/Rutten05,BaezErbele-CategoriesInControl}. The most relevant for this paper is \cite{BaezErbele-CategoriesInControl}, which, independently from~\cite{BonchiSZ17}, proposed the same syntax and axiomatisation for the ordinary signal flow calculus and shares with our contribution the same methodology: the use of
\emph{string diagrams} as a mathematical playground for the compositional study of different sorts of systems.
The idea is common to diverse, cross-disciplinary research programmes, including Categorical Quantum Mechanics~\cite{Abramsky2004,Coecke2008,Coecke2017}, Categorical Network Theory~\cite{Baez2014}, Monoidal Computer~\cite{Pavlovic13,Pavlovic14a} and the analysis of (a)synchronous circuits~\cite{Ghica2013,Ghica2016}.

\vspace{-.1cm}

\paragraph{Outline} In Section~\ref{sec:SFcalculus} we recall the affine signal flow calculus. Section~\ref{sec:opsem} introduces the operational semantics for the calculus. Section~\ref{sec:fullabs} defines contextual equivalence and proves full abstraction. Section~\ref{sec:sfg} introduces a well-behaved class of circuits, that denotes functional input-output systems, laying the groundwork for Section~\ref{sec:realisability}, in which the concept of realisability is introduced before a characterisation of which circuit diagrams are realisable. Missing proofs are presented in Appendix~\ref{app:proofs}.

\section{Background: the Affine Signal Flow Calculus}\label{sec:SFcalculus}

The \emph{Affine Signal Flow Calculus} extends the signal flow calculus \cite{Bonchi2015} with an extra generator $\One$ that allows to express affine relations. In this section, we first recall its syntax and denotational semantics from \cite{BonchiPSZ19} and then we highlight two key properties for proving full abstraction that are enabled by the affine extension. The operational semantics is delayed to the next section.

\begin{figure*}[ht]
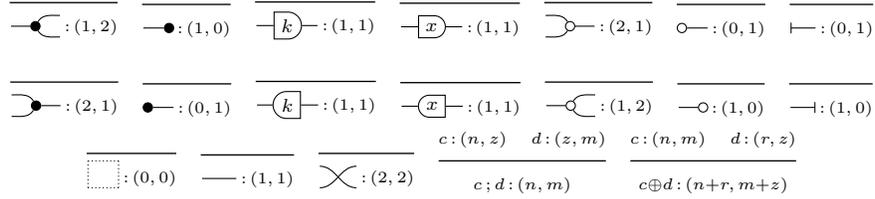

\[
\reductionRule{}{ \typeJudgment{}{\BcomultT}{\sort{1}{2}} }\quad
\reductionRule{}{ \typeJudgment{}{\BcounitT}{\sort{1}{0}} }\quad
\reductionRule{}{ \typeJudgment{}{\scalarT}{\sort{1}{1}} }\quad
\reductionRule{}{ \typeJudgment{}{\circuitXT}{\sort{1}{1}} }\quad
\reductionRule{}{ \typeJudgment{}{\WmultT}{\sort{2}{1}} }\quad
\reductionRule{}{ \typeJudgment{}{\WunitT}{\sort{0}{1}} }\quad
\reductionRule{}{ \typeJudgment{}{\One}{\sort{0}{1}} }
\]
\[
\reductionRule{}{ \typeJudgment{}{\BmultT}{\sort{2}{1}} }\quad
\reductionRule{}{ \typeJudgment{}{\BunitT}{\sort{0}{1}} } \quad
\reductionRule{}{ \typeJudgment{}{\scalaropT}{\sort{1}{1}} }\quad
\reductionRule{}{ \typeJudgment{}{\circuitXopT}{\sort{1}{1}} }\quad
\reductionRule{}{ \typeJudgment{}{\WcomultT}{\sort{1}{2}} }\quad
\reductionRule{}{ \typeJudgment{}{\WcounitT}{\sort{1}{0}} }\quad
\reductionRule{}{ \typeJudgment{}{\coOne}{\sort{1}{0}} }
\]
\[
\reductionRule{}{ \typeJudgment{}{\ZeronetT}{\sort{0}{0}} }\quad
\reductionRule{}{ \typeJudgment{}{\IdnetT}{\sort{1}{1}} }\quad
\reductionRule{}{ \typeJudgment{}{\symNetT}{\sort{2}{2}} }\quad
\reductionRule{ \typeJudgment{}{c}{\sort{n}{z}} \quad \typeJudgment{}{d}{\sort{z}{m}} }
{ \typeJudgment{}{c\poi d}{\sort{n}{m}} }\quad
\reductionRule{ \typeJudgment{}{c}{\sort{n}{m}} \quad \typeJudgment{}{d}{\sort{r}{z}} }
{ \typeJudgment{}{c \tns d}{\sort{n+r}{m+z}} }
\]
\caption{Sort inference rules.\label{fig:sortInferenceRules}}
\end{figure*}

\vspace{-.2cm}

\subsection{Syntax}\label{sec:syntax}

\vspace{-.2cm}

%^{\labelSep k}
\begin{align}%\label{eq:SFcalculusSyntax}
c \bnfEq &  \BcounitT \bnfSep \BcomultT \bnfSep \scalarT \bnfSep \circuitXT   \bnfSep \WmultT \bnfSep \WunitT \bnfSep \One \bnfSep \label{eq:SFcalculusSyntax1} \\
& \,\BunitT \!\bnfSep \BmultT \bnfSep \scalaropT \bnfSep \!\circuitXopT \bnfSep  \!\WcomultT \bnfSep \WcounitT \bnfSep \coOne \bnfSep \label{eq:SFcalculusSyntax2}\\ %\nonumber
& \,\ZeronetT \bnfSep \IdnetT \bnfSep \!\!\symNetT   \bnfSep c\tns c \bnfSep c \poi c \label{eq:SFcalculusSyntax3}
\end{align}
The syntax of the calculus, generated by the grammar above, is parametrised over a given field $\field$, with $k$ ranging over $\field$. We refer to the constants in rows~\eqref{eq:SFcalculusSyntax1}-\eqref{eq:SFcalculusSyntax2} as \emph{generators}. 
Terms are constructed from generators, $\ZeronetT$, $\IdnetT$, $\symNetT$, and the two binary operations in~\eqref{eq:SFcalculusSyntax3}.
We will only consider those terms that are \emph{sortable}, i.e. they can be associated with a pair  $\sort{n}{m}$, with $n,m\in \N$. Sortable terms are called \emph{circuits}: intuitively, a circuit with sort $\sort{n}{m}$ has $n$ ports on the left and $m$ on the right. The sorting discipline is given in Fig.~\ref{fig:sortInferenceRules}. We delay discussion of computational intuitions to Section~\ref{sec:opsem} but, for the time being, we observe that the generators of row \eqref{eq:SFcalculusSyntax2} are those of row \eqref{eq:SFcalculusSyntax1} ``reflected about the $y$-axis''.

\vspace{-.2cm}

\subsection{String Diagrams} 

\vspace{-.2cm}

It is convenient %for our developments to organise the circuit syntax into a 
to consider circuits as the arrows of a 
symmetric monoidal category $\ACD$ (for Affine Circuits). Objects of $\ACD$ are natural numbers (thus $\ACD$ is a \emph{prop}~\cite{MacLane1965}) and morphisms $n \to m$ are the circuits of sort $\sort{n}{m}$, quotiented by the laws of symmetric monoidal categories \cite{mclane,Selinger2009}\footnote{This quotient is harmless: both the denotational semantics from \cite{BonchiPSZ19} and the operational semantics we introduce in this paper satisfy those axioms on the nose.}. The circuit grammar yields the symmetric monoidal structure of $\ACD$: sequential composition is given by $c \poi d$, the monoidal product is given by $c \tns d$, and identities and symmetries are built by pasting together $\IdnetT$ and $\symNetT$ in the obvious way. We will adopt the usual convention of writing morphisms of $\ACD$ as \emph{string diagrams}, meaning that  
$
c \poi c' \text{ is drawn }
\lower9pt\hbox{$\includegraphics[height=.8cm]{graffles/seqcomp.pdf}$}
\;
\text{ and } c \tns c' \text{ is drawn }
\lower15pt\hbox{$\includegraphics[height=1.2cm]{graffles/tensor.pdf}$}.
$
More succinctly, $\ACD$ is the free prop on generators \eqref{eq:SFcalculusSyntax1}-\eqref{eq:SFcalculusSyntax2}.
The free prop on \eqref{eq:SFcalculusSyntax1}-\eqref{eq:SFcalculusSyntax2} sans $\One$ and $\coOne$, hereafter called $\CD$, is the signal flow calculus from \cite{Bonchi2015}.

\begin{example}\label{ex:loop}
The diagram
{\lower12pt\hbox{$\includegraphics[height=1.2cm]{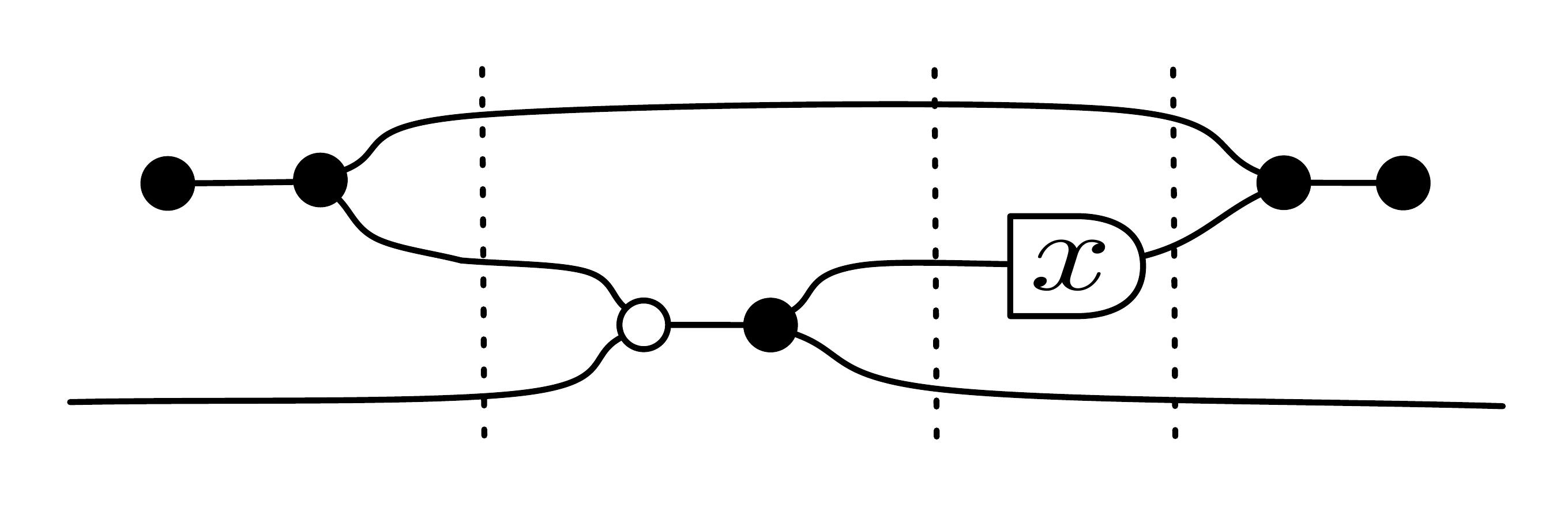}$}}
represents the circuit 
\[ ( (\BunitT\poi \BcomultT) \tns \IdnetT) \poi
(\IdnetT \tns (\WmultT \poi \BcomultT))
\poi (((\IdnetT \tns \circuitXT) \tns \IdnetT)
\poi ( (\BmultT \poi \BcounitT) \tns \IdnetT)).\] 
\end{example}

\subsection{Denotational Semantics and Axiomatisation}\label{sec:denotational} 

The semantics of circuits can be given denotationally by means of affine relations.

\begin{definition}
Let $\field$ be a field. An affine subspace of $\field^d$ is a subset $V\subseteq \field^d$ that is either empty or for which  there exists a vector $a\in\field^d$ and a linear subspace $L$ of $\field^d$ such that $V = \{a+v \mid v\in L\}$.
A \emph{$\field$-affine relation} of type $n \to m$ is an affine subspace of $\field^n\times\field^m$, considered as a $\field$-vector space.
\end{definition}
Note that every linear subspace is affine, taking $a$ above to be the zero vector. 
Affine relations can be organised into a prop:

\begin{definition}\label{DEF_SV} Let $\field$ be a field. Let $\ARel{\field}$ be the following prop:
 \begin{itemize}
\item arrows $n\to m$ are $\field$-affine relations.
 \item composition is relational: given
   $
   G=\{(u,v)\,|\,u\in\field^n,v\in\field^m\} \text{ and }
   H=\{(v,w)\,|\,v\in\field^m,w\in\field^l\}$,
    their composition is $G\poi H \df \{(u,w)\,|\,\exists v. (u,v)\in G \wedge (v,w)\in H\}$.
   \item monoidal product given by $G\tns H = \left\{
   \left(\left(\begin{array}{c}\!\!u\!\! \\ \!\!u'\!\!\end{array}\right),
   \left(\begin{array}{c}\!\!v\!\! \\ \!\!v'\!\!\end{array}\right)\right) \,|\, (u,v)\in G, (u',v')\in H\right\}$.
\end{itemize}
\end{definition}

In order to give semantics to $\ACD$, we use the prop of affine relations over the field $\frpoly$ of fractions of polynomials in $x$ with coefficients from $\field$. Elements $q\in \frpoly$ are a fractions $\frac{k_0+k_1\cdot x^1 + k_2 \cdot x^2 + \dots + k_n \cdot x^n}{l_0+l_1\cdot x^1 + l_2 \cdot x^2 + \dots + l_m \cdot l^m}$ for some $n,m\in \N$ and $k_i,l_i\in \field$. Sum, product, $0$ and $1$ in $\frpoly$ are defined as usual.

\begin{definition}\label{def:SemIB}
    The prop morphism $\dsemO \: \ACD \to \ARel{\frpoly}$ is inductively defined on circuits as follows. For the generators in \eqref{eq:SFcalculusSyntax1}
\[
\Bcomult \ \longmapsto \ \left\{ %
				\left( p ,\left(\begin{array}{c}
				 \! p\! \\
				 \! p\!
				\end{array}\right)\right) \mid p \in \frpoly \right\} 
\qquad 
\Wmult \ \longmapsto \   \left\{ \left( \left(%
				\begin{array}{c}
				  \!p\! \\
				  \!q\!
				\end{array}\right)\!\!, p+q\right) \mid p,q \in \frpoly \right\} 
\]
\[
\Bcounit \ \longmapsto \ \{ (p, \matrixNull ) \mid p\in\frpoly \}\qquad\quad
\Wunit  \ \longmapsto \  \{( \matrixNull ,0) \}\qquad\qquad\One \; \longmapsto \{ (\matrixNull, 1 )\}
\]
\vspace{-5pt}
\[
\tscalar{r}  \ \longmapsto \  \{ (p , p \cdot r) \mid p \in \frpoly \} \qquad
\circuitXT  \ \longmapsto \  \{ (p , p \cdot x) \mid p \in \frpoly \}\]
where $\matrixNull$ is the only element of $\frpoly^0$.
The semantics of components in \eqref{eq:SFcalculusSyntax2} is symmetric, e.g. $\Bunit$ is mapped to $\{ (p, \matrixNull ) \mid p\in\frpoly \}$. For \eqref{eq:SFcalculusSyntax3}
\begin{align*}
\IdnetT\!  \ \longmapsto \ \{(p,p) \mid p \in \frpoly \} 
\qquad
\symNetT  \ \longmapsto \  \left\{\left( \left(%
				\begin{array}{c}
				  \!p \!\\
				  \!q\!
				\end{array}\right), \left(%
				\begin{array}{c}
				 \!q\!\\
				\!p\!
				\end{array}\right)\right) \mid p, q \in \frpoly \right\} \\
\ZeronetT  \ \longmapsto \  \{ (\matrixNull , \matrixNull)\} 
\qquad
c_1\tns c_2\ \longmapsto\ \dsem{c_1} \tns\dsem{c_2}  \qquad 
c_1\poi c_2\ \longmapsto\ \dsem{c_1} \poi\dsem{c_2}
\end{align*}
\end{definition}
The reader can easily check that the pair of $1$-dimensional vectors $\left(1, \frac{1}{1-x}\right)\in \frpoly^1\times \frpoly^1$ belongs to the denotation of the circuit in Example \ref{ex:loop}.

The denotational semantics enjoys a sound and complete axiomatisation. The axioms involve only basic interactions between the generators~\eqref{eq:SFcalculusSyntax1}-\eqref{eq:SFcalculusSyntax2}. The resulting theory is that of \emph{Affine Interacting Hopf Algebras} ($\AIH$).
The generators in \eqref{eq:SFcalculusSyntax1} form a Hopf algebra, those in \eqref{eq:SFcalculusSyntax2} form another Hopf algebra, and the interaction of the two give rise to two Frobenius algebras. 
We recall the full set of equations in the Appendix~\ref{sec:axiomatisation}
We refer the reader to \cite{BonchiPSZ19} for the full set of equations and all further details.

\begin{proposition} For all $c,d$ in $\ACD$, $\dsem{c} = \dsem{d}$ if and only if $c \eqAIH d$.\end{proposition}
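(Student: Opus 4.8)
The plan is to prove the two directions separately. The direction $c \eqAIH d \Rightarrow \dsem c = \dsem d$ (soundness) is routine: since $\dsemO \colon \ACD \to \ARel{\frpoly}$ is a prop morphism and $\eqAIH$ is by definition the smallest congruence with respect to $\poi$ and $\tns$ generated by the $\AIH$ axioms, it suffices to verify that both sides of each individual axiom receive the same denotation. Each such check is a finite computation on the affine relations assigned to the generators in Definition~\ref{def:SemIB}, involving only elementary linear-algebraic manipulation over $\frpoly$. The substance of the proposition is the converse, completeness: $\dsem c = \dsem d \Rightarrow c \eqAIH d$.

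For completeness, I would reduce the affine case to the already-established completeness of the \emph{linear} theory of Interacting Hopf Algebras $\IH{}$ with respect to the prop $\LinRel{\frpoly}$ of linear relations over $\frpoly$. The bridge is homogenisation: the extra generator $\One \colon 0 \to 1$, with $\dsem \One = \{(\matrixNull, 1)\}$, represents the constant $1$, and an affine relation of type $n \to m$ is recovered from a linear relation of type $1+n \to m$ by feeding the value $1$ into the first wire. Concretely, the key structural lemma is a normal-form result proved purely in $\AIH$: every circuit $c \colon n \to m$ of $\ACD$ satisfies $c \eqAIH (\One \tns \id_n) \poi c'$ for some circuit $c'$ of the linear fragment $\CD$, of type $1+n \to m$; that is, the affine generators $\One, \coOne$ can always be pushed to a single wire at the left boundary. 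This is where the Hopf-algebra and Frobenius interaction axioms of $\AIH$ do the real work.

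Granting this normal form, one writes $c \eqAIH (\One \tns \id_n)\poi c'$ and $d \eqAIH (\One \tns \id_n)\poi d'$, and the task becomes to show that $\dsem c = \dsem d$ forces $c' \eqAIH d'$ in the linear theory, which then yields $c \eqAIH d$ by congruence. The hypothesis $\dsem c = \dsem d$ gives equality of the two affine relations obtained by intersecting $\dsem{c'}$ and $\dsem{d'}$ with the fibre over $1$; the obstacle is that this fibre does not on its own determine the ambient linear relation, so $\dsem{c'} = \dsem{d'}$ need not follow directly. The fix is to arrange the normal form to be canonical, choosing $c'$ whose denotation is exactly the homogenisation, i.e. the linear span of $\{1\}\times\dsem c$, so that the assignment $\dsem{c'} = \widehat{\dsem c}$ is injective in $\dsem c$. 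One must also treat the empty affine relation separately, since its homogenisation degenerates, corresponding to a linear relation disjoint from the fibre over $1$ and realised using $\coOne$. Once $\dsem{c'} = \dsem{d'}$ is secured, completeness of $\IH{}$ gives $c' \eqAIH d'$ and we are done.

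The main obstacle is thus twofold: establishing the canonical normal form inside $\AIH$ (pushing all occurrences of $\One$ and $\coOne$ to the boundary while controlling the resulting linear part so that it equals the homogenisation), and proving that homogenisation is injective on nonempty affine spaces with the empty case handled by hand. Both hinge on a careful analysis of how the affine axioms interact with the linear Frobenius and Hopf structure; as a by-product, combining the normal form with fullness of the linear semantics establishes that $\dsemO$ is itself full, which is reused elsewhere in the development.
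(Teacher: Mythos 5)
Your proposal is correct and follows essentially the same route as the proof this paper relies on: the proposition is not proved here but imported from~\cite{BonchiPSZ19}, where completeness is likewise obtained by soundness-checking each axiom against the denotations of Definition~\ref{def:SemIB} and then reducing to completeness of the linear theory $\IH{}$ via homogenisation --- factoring every circuit as $(\One \tns \id_n) \poi c'$ with $c'$ in $\CD$, arranging the linear part to denote the span of $\{1\}\times\dsem{c}$, and treating the empty relation separately using the axiom $(\varnothing)$ of Figure~\ref{fig:ih}. The normal form you posit is exactly what this paper itself establishes later as Proposition~\ref{prop:single-foot} together with the $\hat{c}$ construction of Section~\ref{sec:realisability}, so your plan matches the intended argument.
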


\subsection{Affine vs Linear Circuits}

It is important to highlight the differences between $\ACD$ and $\CD$. The latter is the purely linear fragment: circuit diagrams of $\CD$ denote exactly the \emph{linear} relations over $\frpoly$ \cite{Bonchi2014b}, while those of $\ACD$ denote the  \emph{affine} relations over $\frpoly$. 

The additional expressivity afforded by affine circuits is essential for our development. 
One crucial property is that every polynomial fraction can be expressed as an affine circuit of sort $\sort{0}{1}$. 
\begin{lemma}\label{prop:laurentCircuit}
For all $p \in \frpoly$, there is $c_{p}\in \ACD[0,1]$ with $\dsem{c_{p}}=\{(\matrixNull, p)\}$.
\end{lemma}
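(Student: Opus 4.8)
The plan is to build $c_p$ compositionally by induction on the structure of the fraction $p = \frac{n(x)}{d(x)} \in \frpoly$, using the generators to realise the two basic ingredients: building up a polynomial as an affine scalar of sort $\sort{0}{1}$, and then dividing by a polynomial. The starting point is that $\One$ already gives us $\{(\matrixNull, 1)\}$, i.e. a circuit realising the constant $1 \in \frpoly$; the generator $\scalarT[k]$ multiplies its input by $k \in \field$, and $\circuitXT$ multiplies by $x$. So from $\One$ we can reach any monomial $k \cdot x^i$ by post-composing with $i$ copies of $\circuitXT$ and one $\scalarT[k]$, and we can sum such circuits of sort $\sort{0}{1}$ using $\WmultT$ (whose semantics is addition) together with $\WunitT$ as the unit for the sum. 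This realises every polynomial $n(x) \in \poly$ as a circuit $c_{n(x)} \in \ACD[0,1]$ with $\dsem{c_{n(x)}} = \{(\matrixNull, n(x))\}$.

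Next I would handle division. The key observation is that for a nonzero polynomial $d(x)$, the linear circuit $d(x)$ read as a map of sort $\sort{1}{1}$ — built from $\scalarT$, $\circuitXT$, the comonoid $\BcomultT$ to fan out copies, and $\WmultT$ to recombine — has denotation $\{(q, q\cdot d(x)) \mid q \in \frpoly\}$. Turning the wires around, i.e. using the opposite generators from row~\eqref{eq:SFcalculusSyntax2} (or equivalently pre/post-composing with the Frobenius cups and caps available in $\ACD$), yields a circuit whose denotation is the converse relation $\{(q \cdot d(x), q) \mid q \in \frpoly\}$, which is precisely ``multiply by $\frac{1}{d(x)}$''. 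Composing $c_{n(x)}$ of sort $\sort{0}{1}$ with this division-by-$d(x)$ circuit of sort $\sort{1}{1}$ produces a circuit of sort $\sort{0}{1}$ whose denotation is $\{(\matrixNull, n(x)\cdot \tfrac{1}{d(x)})\} = \{(\matrixNull, p)\}$, as required.

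The verification that each composite has the claimed denotation is a routine unwinding of Definition~\ref{def:SemIB}, since $\dsemO$ is a prop morphism and hence commutes with $\poi$ and $\tns$; the semantics of each elementary step (multiply by $k$, multiply by $x$, add, and the converse relation for division) is immediate from the generator clauses. The main obstacle I expect is the division step: one must check that the converse relation $\{(q\cdot d(x), q)\}$ is genuinely realisable in $\ACD$ and equals $\frac{1}{d(x)}$ as an element of $\frpoly$ — this relies on $d(x)$ being invertible in the field of fractions (so $d(x) \neq 0$), and on the fact that the linear fragment $\CD$ already suffices to express multiplication by any polynomial together with its relational converse. This is exactly where the Frobenius structure of the Interacting Hopf Algebra theory is used: it provides the turnback wires that convert a ``multiply by $d(x)$'' map into a ``multiply by $\frac{1}{d(x)}$'' map. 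Since all of this machinery lives already in $\CD \subseteq \ACD$ and only the single generator $\One$ is needed to seed the construction, the lemma follows.
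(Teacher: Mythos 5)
Your proof is correct, and it shares the paper's central idea: use $\One$ to seed the constant $1$, then compose with a linear circuit of sort $\sort{1}{1}$ whose denotation is the graph $\{(q,\, q\cdot p) \mid q \in \frpoly\}$ of multiplication by $p$. Where you differ is in how that linear circuit is obtained. The paper's proof is essentially a citation: by fullness of the denotational semantics of the linear fragment $\CD$ (a result of~\cite{Bonchi2014b}), the linear subspace generated by the pair $(1,p)$ is the denotation of some circuit $c$ in $\CD$, and then $\dsem{\One \poi c} = \{(\matrixNull, p)\}$. You instead re-derive, by hand, exactly the instance of fullness that is needed: write $p = n(x)/d(x)$ with $d(x) \neq 0$, realise the numerator by chaining $\One$ through registers, scalars and adders, and realise division by $d(x)$ as the mirror image (equivalently, the Frobenius-bent version) of the multiplication-by-$d(x)$ circuit, using that the converse of the graph of multiplication by a nonzero $d(x)$ is the graph of multiplication by $1/d(x)$ in the field $\frpoly$. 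Your division step is sound: the mirrored generators of row~\eqref{eq:SFcalculusSyntax2} denote the converse relations, converse commutes with relational composition and monoidal product, and invertibility of $d(x)$ makes the converse relation single-valued and total, so the composite indeed denotes $\{(\matrixNull, p)\}$. What your route buys is self-containedness: every step is checkable directly from Definition~\ref{def:SemIB}, and it makes explicit where invertibility of $d(x)$ and the mirrored generators enter. What the paper's route buys is brevity, by outsourcing precisely this construction to the known fullness theorem for $\CD$; beyond that, the two arguments follow the same decomposition of the problem.
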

\begin{proof}
For each $p \in \frpoly$, let $P$ be the linear subspace generated by the pair of $1$-dimensional vectors $(1,p)$. By fullness of  the denotational semantics of $\CD$ \cite{Bonchi2014b}, there exists a circuit $c$ in $\CD$  such that $\dsem{c}=P$. Then, $\dsem{\One \poi c} = \{(\matrixNull, p)\}$. 
\end{proof}

The above observation yields the following:
\begin{proposition}\label{prop:denotation-context}
Let $(u,v)\in  \frpoly^n\times  \frpoly^m$. There exist circuits $c_u\in \ACD[0,n]$ and $c_v\in \ACD[m,0]$ such that $\dsem{c_u}=\{(\matrixNull, u)\}$ and $\dsem{c_v}=\{(v,\matrixNull)\}$.
\end{proposition}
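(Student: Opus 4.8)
The plan is to reduce the statement to the single-component case already settled by Lemma~\ref{prop:laurentCircuit}, and then to assemble the components with the monoidal product, exploiting that $\dsemO$ is a prop morphism and hence preserves $\tns$ on the nose. Concretely, both $c_u$ and $c_v$ will be built as monoidal products of one-port circuits, one for each coordinate of $u$ (resp. $v$).

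For $c_u$, I would write $u=(u_1,\dots,u_n)$ with each $u_i\in\frpoly$. Lemma~\ref{prop:laurentCircuit} supplies circuits $c_{u_i}\in\ACD[0,1]$ with $\dsem{c_{u_i}}=\{(\matrixNull,u_i)\}$. Setting $c_u \df c_{u_1}\tns\cdots\tns c_{u_n}$, the sort is $\sort{0}{n}$ since the left boundaries sum to $0$ and the right boundaries to $n$. Because $\dsemO$ preserves $\tns$, we get $\dsem{c_u}=\dsem{c_{u_1}}\tns\cdots\tns\dsem{c_{u_n}}$; since $\frpoly^0$ has $\matrixNull$ as its only element, the stacked left components collapse to $\matrixNull$ while the stacked right components give $u$, so $\dsem{c_u}=\{(\matrixNull,u)\}$ as required.

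For $c_v$ the only extra ingredient is the \emph{mirror} of Lemma~\ref{prop:laurentCircuit}: for every $p\in\frpoly$ there is a circuit $c_p^{\mathrm{op}}\in\ACD[1,0]$ with $\dsem{c_p^{\mathrm{op}}}=\{(p,\matrixNull)\}$. This follows by the same fullness argument run in the opposite direction: by fullness of the denotational semantics of $\CD$ there is a linear circuit $c'$ of sort $\sort{1}{1}$ whose denotation is the subspace generated by $(p,1)$, namely $\{(kp,k)\mid k\in\frpoly\}$; composing on the right with $\coOne$, whose semantics is $\{(1,\matrixNull)\}$, forces the shared variable to equal $1$ and hence $\dsem{c' \poi \coOne}=\{(p,\matrixNull)\}$. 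Writing $v=(v_1,\dots,v_m)$ and taking $c_v \df c_{v_1}^{\mathrm{op}}\tns\cdots\tns c_{v_m}^{\mathrm{op}}\in\ACD[m,0]$ then yields $\dsem{c_v}=\{(v,\matrixNull)\}$ by the identical monoidal computation.

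I do not expect a genuine obstacle: once Lemma~\ref{prop:laurentCircuit} is available, the construction is routine. The only point demanding a little care is the $c_v$ direction, where one must either derive the mirrored lemma explicitly (as above, via $\coOne$ and fullness of $\CD$) or instead invoke the reflection symmetry of the calculus that swaps the two rows of generators and transposes relations. I would favour the explicit $\coOne$ construction, as it keeps the argument self-contained and avoids having to set up the reflection functor formally.
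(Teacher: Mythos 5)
Your proposal is correct and follows essentially the same route as the paper: decompose $u$ and $v$ coordinatewise, apply Lemma~\ref{prop:laurentCircuit} to each coordinate, and tensor the resulting one-port circuits, using that $\dsemO$ preserves $\tns$. The only difference is that where the paper merely remarks that Lemma~\ref{prop:laurentCircuit} ``also holds with $0$ and $1$ switched,'' you spell out that mirrored lemma explicitly (fullness of $\CD$ applied to the subspace generated by $(p,1)$, then composition with $\coOne$), which is exactly the argument the paper leaves implicit.
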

\begin{proof}
Let $u={\tiny\left(%
				\begin{array}{c}
				  \!p_1\! \\
				  \!\vdots\! \\
				  \!p_n\!
				\end{array}\right)} \text{ and } v={\tiny\left(%
				\begin{array}{c}
				  \!q_1\! \\
				  \!\vdots\! \\
				  \!q_m\!
				\end{array}\right)}\text{.}$
By Lemma \ref{prop:laurentCircuit}, for each $p_i$, there exists a circuit $c_{p_i}$ such that $\dsem{c_{p_i}}=\{(\matrixNull, p_i)\}$. Let $c_u = c_{p_1} \tns \dots \tns c_{p_n}$. Then $\dsem{c_u}=\{(\matrixNull, u)\}$. For $c_v$, it is enough to see that Proposition \ref{prop:laurentCircuit} also holds with $0$ and $1$ switched, then use the argument above.
\end{proof}

Proposition~\ref{prop:denotation-context} 
 asserts that any behaviour $(u,v)$ occurring in the denotation of some circuit $c$, i.e., such that $(u,v)\in\dsem{c}$, can be expressed by a pair of circuits $(c_u,c_v)$. We will, in due course, think of such a pair as a \emph{context}, namely an environment with which a circuit can interact.
Observe that this is not possible with the linear fragment $\CD$, since the only singleton linear subspace is $0$.

Another difference between linear and affine concerns circuits of sort $\sort{0}{0}$. Indeed $\frpoly^0=\{\matrixNull\}$, and the only linear relation over $\frpoly^0\times \frpoly^0$ is the singleton $\{(\matrixNull, \matrixNull)\}$, which is $id_0$ in $\ARel{\frpoly}$. But there is another affine relation, namely the \emph{empty relation} $\emptyset \in \frpoly^0\times \frpoly^0$. This can be represented by $
\tikzset{x=1em, y=2.1ex}
\begin{tikzpicture}[baseline=-.5ex]
	\begin{pgfonlayer}{nodelayer}
		\node [style=none] (0) at (-0.75, -0) {};
		\node [style=white] (1) at (0.25, -0) {};
		\node [style=none] (2) at (-0.75, -0.25) {};
		\node [style=none] (3) at (-0.75, 0.25) {};
	\end{pgfonlayer}
	\begin{pgfonlayer}{edgelayer}
		\draw (0.center) to (1);
		\draw (3.center) to (2.center);
	\end{pgfonlayer}
\end{tikzpicture}}
\tikzset{x=1em, y=1.5ex}
$, for instance, since  $\dsem{
\tikzset{x=1em, y=2.1ex}
}
\tikzset{x=1em, y=1.5ex}
}\,=\{(\bullet,1)\}\poi\{(0,\bullet)\} = \emptyset$.

\begin{proposition}\label{prop:twopossibility}
Let $c\in \ACD[0,0]$. Then $\dsem{c}$ is either $id_0$ or $\emptyset$.
\end{proposition}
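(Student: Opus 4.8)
The plan is to analyze what affine relations of type $0 \to 0$ are possible in $\ARel{\frpoly}$ and show the denotation lands in this small set. Since $\dsemO \colon \ACD \to \ARel{\frpoly}$ is a prop morphism, $\dsem{c}$ is an affine relation of type $0 \to 0$, i.e.\ an affine subspace of $\frpoly^0 \times \frpoly^0$. Now $\frpoly^0 = \{\matrixNull\}$ is the zero-dimensional vector space, so $\frpoly^0 \times \frpoly^0$ has exactly one element, $(\matrixNull, \matrixNull)$, and therefore has exactly two subsets: the empty set $\emptyset$ and the singleton $\{(\matrixNull,\matrixNull)\}$. The first is $\emptyset$ and the second is $id_0$, as already observed in the text just before the statement.

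The only remaining point is to confirm that both of these subsets genuinely \emph{are} affine subspaces, so that no further case-splitting is needed: the singleton $\{(\matrixNull,\matrixNull)\}$ is the linear subspace $\{0\}$ (hence affine, taking $a$ to be the zero vector and $L = \{0\}$), and $\emptyset$ is affine by the explicit clause in the definition of affine subspace that permits the empty set. Thus the set of affine relations of type $0 \to 0$ is precisely $\{\,\emptyset,\ id_0\,\}$, and since $\dsem{c}$ must be one of them, the conclusion follows immediately.

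In short, I would carry this out in three quick steps: first, observe that $\dsem{c}$ is by definition an affine subspace of the one-element space $\frpoly^0 \times \frpoly^0$; second, note that a one-element ambient set has only two subsets; third, identify these two subsets as $id_0$ and $\emptyset$ and check each is a bona fide affine subspace. I do not anticipate any genuine obstacle here: the proposition is essentially a finiteness observation about $\frpoly^0$, and the only thing to be careful about is not to overlook that $\emptyset$ is explicitly allowed as an affine subspace (otherwise one might wrongly conclude $\dsem{c} = id_0$ always). The content of the result is really that affine circuits of sort $\sort{0}{0}$ can detect \emph{inconsistency} — the empty behaviour — which is exactly the extra expressive power that the witness $\tikzfig{empty}$ in the preceding paragraph exhibits, and which the purely linear calculus $\CD$ lacks.
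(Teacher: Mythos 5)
Your proof is correct and is essentially the paper's own argument: the paper justifies this proposition through the prose immediately preceding it, observing that $\frpoly^0=\{\matrixNull\}$ so the only subsets of $\frpoly^0\times\frpoly^0$ are the singleton $\{(\matrixNull,\matrixNull)\}=id_0$ and $\emptyset$, both of which are affine relations. Your only addition is the explicit check that each of the two subsets is a bona fide affine subspace, which the paper leaves tacit.
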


\section{Operational Semantics for Affine Circuits}\label{sec:opsem}

\begin{figure*}[ht]
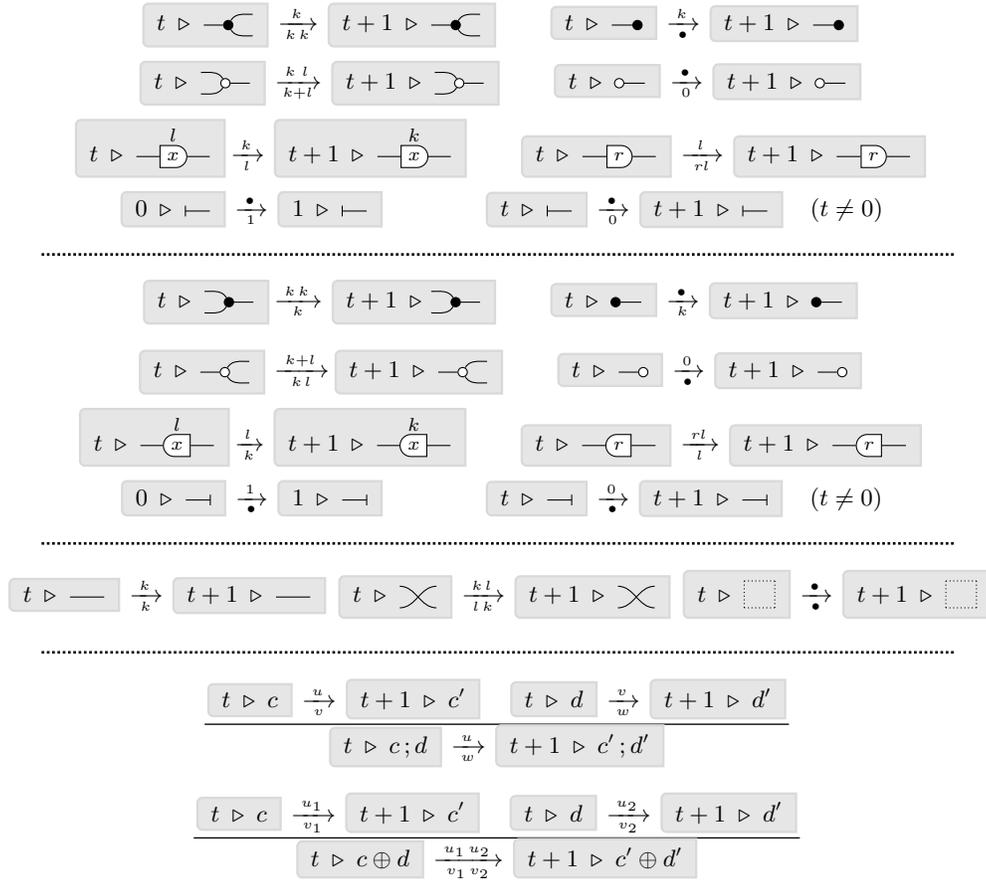

\[
\fullcontext{t}{\BcomultT} \dtrans{\,k\,}{k \labelSep k} \fullcontext{t+1}{\BcomultT}\quad\quad
\fullcontext{t}{\BcounitT} \dtrans{k}{\bullet} \fullcontext{t+1}{\BcounitT}
\]
\[
\fullcontext{t}{\WmultT} \dtrans{k \labelSep\, l }{k+l} \fullcontext{t+1}{\WmultT} \quad\quad
\fullcontext{t}{\WunitT} \dtrans{\bullet}{0} \fullcontext{t+1}{\WunitT}
\]
\[
\fullcontext{t}{\stregister{l}}\! \dtrans{k}{l} \fullcontext{t+1}{\stregister{k}}  \qquad 
\fullcontext{t}{\tscalar{r}}  \dtrans{\,\,l\,\,}{rl} \fullcontext{t+1}{\tscalar{r}}
\]
\[
\fullcontext{0}{\One} \dtrans{\bullet}{1} \fullcontext{1}{\One} \qquad \qquad \fullcontext{t}{\One} \dtrans{\bullet}{0} \fullcontext{t+1}{\One} \quad (t \neq 0)
\]
%\hrulefill
\hdashrule{\linewidth}{1pt}{1pt}
\[
\fullcontext{t}{\BmultT}\! \dtrans{k \labelSep k}{k} \fullcontext{t+1}{\!\BmultT} \quad\quad
\fullcontext{t}{\BunitT} \dtrans{\bullet}{k} {\fullcontext{t+1}{\BunitT}}
\]
\[
\fullcontext{t}{\WcomultT}\dtrans{ k+l}{k \labelSep l }\! \fullcontext{t+1}{\WcomultT}\quad\quad
\fullcontext{t}{\WcounitT} \dtrans{0}{\bullet} \fullcontext{t+1}{\WcounitT}
\]
\[
\fullcontext{t}{\stcoregister{l}}\! \dtrans{l}{k} \fullcontext{t+1}{\stcoregister{k}}
\qquad
\fullcontext{t}{\tcoscalar{r}}  \dtrans{rl}{\,l\,} \fullcontext{t+1}{\tcoscalar{r}}
\]
%\[
%\context{0}\coOne \dtrans{\bullet}{1} \context{1} \WcounitT \qquad \qquad \context{t} \coOne \dtrans{\bullet}{0} \context{t+1} \coOne \quad p < 0
%\]
\[
\fullcontext{0}{\coOne} \dtrans{1}{\bullet} \fullcontext{1}{\coOne} \qquad \qquad \fullcontext{t} {\coOne} \dtrans{0}{\bullet} \fullcontext{t+1}{\coOne} \quad (t \neq 0)
\]
\hdashrule{\linewidth}{1pt}{1pt}
\[\mathclap{
\fullcontext{t}{\IdnetT}  \dtrans{k}{k} \fullcontext{t+1}{\IdnetT} \;
\fullcontext{t}{\symNetT} \dtrans{k \labelSep l}{l \labelSep k} \fullcontext{t+1}{\symNetT}\;
\fullcontext{t}{\ZeronetT}\dtrans{\bullet}{\bullet}\fullcontext{t+1}{\ZeronetT}
}\]
\hdashrule{\linewidth}{1pt}{1pt}
\[
\derivationRule{\fullcontext{t}{c}\dtrans{u}{v} \fullcontext{t+1}{c'}\quad \fullcontext{t}{d}\dtrans{v}{w} \fullcontext{t+1}{d'} 
%\quad  u\in \field^{n}  \quad v\in \field^{m}\quad w\in \field^{l}  
}
{\fullcontext{t}{c\poi d} \dtrans{u}{w} \fullcontext{t+1}{c'\poi d'}}{} 
\]
\[
\derivationRule{\fullcontext{t}{c}\dtrans{u_1}{v_1} \fullcontext{t+1}{c'}\quad \fullcontext{t}{d}\dtrans{u_2}{v_2} \fullcontext{t+1}{d'} 
%\quad  u_i\in \field^{n_i} \quad v_i\in \field^{m_i}   
}
{\fullcontext{t}{c\tns d} \dtrans{u_1 \labelSep u_2}{v_1 \labelSep v_2} \fullcontext{t+1}{c'\tns d'}}{} 
\]
\caption{Structural rules for operational semantics, with $p\in\Z$, $k,l$ ranging over $\field$ and $u,v,w$ vectors of elements of $\field$ of the appropriate size. The only vector of $\field^0$ is written as $\bullet$ (as in Definition \ref{def:SemIB}), while a vector $(k_1 \; \dots \; k_n)^T \in \field^n$ as $k_1 \dots k_n$\label{fig:operationalSemantics}.}
\end{figure*}

Here we give the structural operational semantics of affine circuits, building on previous work~\cite{Bonchi2015} that considered only the core linear fragment, $\CD$. We consider circuits to be \emph{programs} that have an observable behaviour. Observations are possible interactions at the circuit's interface. Since there are two interfaces: a left and a right, each transition has two labels.

In a transition $\fullcontext{t}{c}\dtrans{v}{w} \fullcontext{t'}{c'}$, $c$ and $c'$ are \emph{states}, that is, circuits augmented with information about which values $k \in \field$ are stored in each register ($\circuitXT$ and $\circuitXopT$) at that instant of the computation. When transitioning to $c'$,
the $v$
% (resp. $w$) 
above  
%(resp. below) 
the arrow is a vector of values with which $c$ synchronises on the left, 
%(resp. right)
and the $w$ below the arrow accounts for the synchronisation on the right.
States are decorated with runtime contexts: $t$ and $t'$ are (possibly negative) integers that---intuitively---indicate the time when the transition happens. 
Indeed, in~\figref{fig:operationalSemantics}, every rule advances time by $1$ unit. 
``Negative time'' is important: as we shall see in Example~\ref{example:1overX}, some executions must start in the past.

The rules in the top section of \figref{fig:operationalSemantics} provide the semantics for the generators in \eqref{eq:SFcalculusSyntax1}: $\Bcomult$ is a \emph{copier}, duplicating the signal arriving on the left; $\Bcounit$ accepts any signal on the left and discards it, producing nothing on the right; $\Wmult$ is an \emph{adder} that takes two signals on the left and emits their sum on the right, $\Wunit$ emits the constant $0$ signal on the right; $\scalar$ is an \emph{amplifier}, multiplying the signal on the left by the scalar $k\in \field$. All the generators described so far are stateless. State is provided by
$\stregister{l}$
which is a \emph{register}; a synchronous one place buffer with the value $l$ stored. When it receives some value $k$ on the left, it emits $l$ on the right and stores $k$. The behaviour of the affine generator $\One$ depends on the time: when $t=0$, it emits $1$, otherwise it emits $0$. Observe that the behaviour of all other generators is time-independent.

So far, we described the behaviour of the components in \eqref{eq:SFcalculusSyntax1} using the intuition that signal flows from left to right: in a transition $\dtrans{v}{w}$, the signal $v$ on the left is thought as trigger and $w$ as effect. For the generators in \eqref{eq:SFcalculusSyntax2}, whose behaviour is defined by  the rules in the second section of \figref{fig:operationalSemantics}, the behaviour is symmetric---indeed, here it is helpful to think of signals as flowing from right to left. The next section of~\figref{fig:operationalSemantics} specifies the behaviours of the structural connectors of~\eqref{eq:SFcalculusSyntax3}: $\symNetT$ is a \emph{twist}, swapping two signals, $\ZeronetT$ is the empty circuit and $\IdnetT$ is the \emph{identity} wire: the signals on the left and on the right ports are equal. Finally, the rule for sequential $\poi$ composition forces the two components to have the same value $v$ on the shared interface, while for parallel $\tns$ composition, components can proceed independently. Observe that both forms of composition require component transitions to happen at the same time.

\begin{definition} Let $c\in \ACD$. The \emph{initial state} $c_0$ of $c$ is the one where all the registers store $0$.
%Let $t\leq 0$.
A \emph{computation} of $c$ starting at time $t\leq 0$ is a (possibly infinite) sequence of transitions 
\begin{equation}\label{eq:computation}
\fullcontext{t}{c_0} \dtrans{v_t}{w_t} \fullcontext{t+1}{c_{1}}  \dtrans{v_{t+1}}{w_{t+1}} \fullcontext{t+2}{c_{2}} \dtrans{v_{t+2}}{w_{t+2}} \dots
\end{equation}
\end{definition}
Since %our operational rules force each 
all transitions increment the time by $1$, it suffices to record the time at which a computation starts. As a result, to simplify notation, we will omit the runtime context after the first transition and, instead of~\eqref{eq:computation}, write
\[\context{t}c_0 \dtrans{v_t}{w_t} c_{1}  \dtrans{v_{t+1}}{w_{t+1}} c_{2} \dtrans{v_{t+2}}{w_{t+2}} \dots\]

\begin{example}\label{exm:opsem}
The circuit in Example \ref{ex:loop} can perform the following computation. 
\begin{multline*}
\context{0} \!{
\tikzset{x=1em, y=2.1ex}
\InputIfFileExists{accu-buffer-0.tikz}{}{\input{./tikz/accu-buffer-0.tikz}}
\tikzset{x=1em, y=1.5ex}
}\!\!\! \dtrans{1}{1} 
 \!{
\tikzset{x=1em, y=2.1ex}
\InputIfFileExists{accu-buffer-1.tikz}{}{\input{./tikz/accu-buffer-1.tikz}}
\tikzset{x=1em, y=1.5ex}
}\!\!\! \dtrans{0}{1} %\\
 \!{
\tikzset{x=1em, y=2.1ex}
\InputIfFileExists{accu-buffer-1.tikz}{}{\input{./tikz/accu-buffer-1.tikz}}
\tikzset{x=1em, y=1.5ex}
}\!\!\! \dtrans{0}{1} \cdots
\end{multline*}
\end{example}
In the example above, the flow has a clear left-to-right orientation, albeit with a feedback loop. For arbitrary circuits of $\ACD$ this is not always the case, which sometimes results in unexpected operational behaviour.

\begin{example}\label{example:1overX}
In $
\tikzset{x=1em, y=2.1ex}
\InputIfFileExists{one-coreg.tikz}{}{\input{./tikz/one-coreg.tikz}}
\tikzset{x=1em, y=1.5ex}
$ is not possible to identify a consistent flow: $\One$ goes from left to right, while $\circuitXopT$ from right to left. Observe that there is no computation starting at $t=0$, since in the initial state the register contains $0$ while $\One$ must emit $1$. There is, however, a (unique!) computation starting at time $t=-1$, that loads the register with $1$ before $\One$ can also emit $1$ at time $t=0$.
\[\context{-1} \onecoreg{0}
 \dtrans{\bullet}{1} \onecoreg{1} \dtrans{\bullet}{0} \onecoreg{0} \dtrans{\bullet}{0} \onecoreg{0} \dtrans{\bullet}{0} \dots\]
Similarly, $\onetwocoregs{}{}$ features a unique computation starting at time $t=-2$.
\[
\context{-2} \onetwocoregs{0}{0}
 \dtrans{\bullet}{1} \onetwocoregs{0}{1} \dtrans{\bullet}{0} \onetwocoregs{1}{0} \dtrans{\bullet}{0} \onetwocoregs{0}{0} \dtrans{\bullet}{0} \dots
\] 
\end{example}

It is worthwhile clarifying the reason why, in the affine calculus, some computations start in the past. As we have already mentioned, in the linear fragment the semantics of all generators is time-independent. It follows easily that time-independence is a property enjoyed by all purely linear circuits.  The behaviour of $\One$, however, enforces a particular action to occur at time 0. Considering this in conjunction with a right-to-left register results in $\onecoreg{}$, and the effect is to anticipate that action by one step to time -1, as shown in Example~\ref{example:1overX}. It is obvious that this construction can be iterated, and it follows that the presence of a single time-dependent generator results in a calculus in which the computation of some terms must start at a finite, but unbounded time in the past.

\begin{example}\label{example:empty}
Another circuit with conflicting flow is $
\tikzset{x=1em, y=2.1ex}
}
\tikzset{x=1em, y=1.5ex}
$.
%$\One \poi \WcounitT$. 
%This circuit cannot perform any computations at any time.
Here there is no possible transition at $t=0$, since at that time $\One$ must emit a $1$ and $\WcounitT$ can only synchronise on a $0$.
Instead, the circuit $\ZeronetT$ can always perform an infinite computation $\context{t} \ZeronetT \dtrans{\bullet}{\bullet}  \ZeronetT \dtrans{\bullet}{\bullet} \dots$, for any $t\leq 0$. Roughly speaking, the computations of these two $\sort{0}{0}$ circuits are operational mirror images of the two possible denotations of Proposition~\ref{prop:twopossibility}. This intuition will be made formal in Section \ref{sec:fullabs}. For now, it is worth observing that for all $c$, $\ZeronetT \tns c$ can perform the same computations of $c$, while $
\tikzset{x=1em, y=2.1ex}
}
\tikzset{x=1em, y=1.5ex}
 \tns c$ cannot ever make a transition at time $0$.
\end{example}

\begin{example}\label{example:spancospan}
Consider the circuit $\spanregs{}$, which again features conflicting flow. Our equational theory equates it with $\IdnetT$, but the computations involved are subtly different. Indeed, for any sequence $a_i\in \field$, it is obvious that $\IdnetT$ admits the computation
\begin{equation}\label{eq:idcomp}
\context{0} \IdnetT \dtrans{a_0}{a_0} \IdnetT \dtrans{a_1}{a_1} \IdnetT \dtrans{a_2}{a_2} \dots
\end{equation}
The circuit $\spanregs{}$ admits a similar computation, but we must begin at time $t=-1$ in order to first ``load'' the registers with $a_0$:
\begin{equation}\label{eq:spanregscomp}
\context{-1} \spanregs{0} \dtrans{0}{0}  \spanregs{a_0} \dtrans{a_0}{a_0} \spanregs{a_1} \dtrans{a_1}{a_1} \spanregs{a_2}  \dtrans{a_2}{a_2} \dots
\end{equation} 
The circuit $\cospanregs{}{}$, which again is equated with $\IdnetT$ by the equational theory, is more tricky.
Although every computation of $\IdnetT$ can be reproduced, $\cospanregs{}{}$ admits additional, problematic computations. Indeed, consider
\begin{equation}\label{eq:cospanregscomp}
\context{0} \cospanregs{0}{0} \dtrans{0}{1}  \cospanregs{0}{1}
\end{equation}
at which point no further transition is possible---the circuit can deadlock. 
\end{example}

The following lemma is an easy consequence of the rules of \figref{fig:operationalSemantics} and follows by structural induction. It states that all circuits can stay idle \emph{in the past}.
\begin{lemma}\label{lemma:idle}
Let $c\in \ACD[n,m]$ with initial state $c_0$.
Then $\context{t} c_0 \dtrans{0}{0} c_0$ if $t < 0$.
\end{lemma}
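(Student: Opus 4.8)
The plan is to prove the statement by structural induction on the circuit $c$, following the grammar \eqref{eq:SFcalculusSyntax1}--\eqref{eq:SFcalculusSyntax3}. The single observation that makes everything work is that the only rules in Figure~\ref{fig:operationalSemantics} that depend on the runtime context are those for $\One$ and $\coOne$, and that for every time $t\neq 0$ these generators respectively emit and absorb the value $0$. Since we assume $t<0$, in particular $t\neq 0$, both behave just like their time-independent siblings once all $\field$-valued labels are set to $0$. Thus the whole argument reduces to checking that instantiating every label in every rule to $0$ is consistent and returns each generator to its initial (all-zero) state.

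For the base cases I would examine each generator of \eqref{eq:SFcalculusSyntax1}--\eqref{eq:SFcalculusSyntax2}, together with $\ZeronetT$, $\IdnetT$ and $\symNetT$, reading the required transition directly off Figure~\ref{fig:operationalSemantics}. For the stateless generators (copier, discard, adder, zero, scalar, and their mirror images) setting all labels to $0$ yields $\dtrans{0}{0}$ and leaves the generator unchanged. For a register $\stregister{l}$ in its initial state $l=0$, feeding $0$ on the left emits $0$ on the right and stores $0$, so the state is preserved; the coregister is symmetric. For $\One$ and $\coOne$ at $t<0$ the relevant rules give $\dtrans{\bullet}{0}$ and $\dtrans{0}{\bullet}$, which are exactly the zero transitions, since $\bullet$ is the unique (empty) vector of $\field^0$.

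For the inductive step I would treat the two binary operations. When $c=c_1\poi c_2$ its initial state is $(c_1)_0\poi(c_2)_0$; applying the induction hypothesis to $c_1$ and $c_2$ at the same time $t<0$ gives idle transitions whose common interface value is $0$, so the premises of the sequential-composition rule are met and it yields $\context{t}(c_1\poi c_2)_0 \dtrans{0}{0} (c_1\poi c_2)_0$. The case $c=c_1\tns c_2$ is analogous, using the parallel-composition rule and juxtaposing the two zero vectors.

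I do not expect a real obstacle here; the content is a routine verification. The one point deserving emphasis is that the hypothesis is $t<0$ rather than $t\leq 0$: this is precisely what keeps $\One$ and $\coOne$ idle, since at $t=0$ they are forced to fire a $1$, which would be incompatible with the all-zero transition. In other words, excluding $t=0$ is exactly the condition under which the time-dependent generators cease to obstruct idling, and it is what allows the induction to go through uniformly.
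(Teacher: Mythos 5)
Your proposal is correct and takes essentially the same route as the paper, which dismisses the lemma as an easy consequence of the rules of Fig.~\ref{fig:operationalSemantics} by structural induction---precisely the generator-by-generator check plus the two composition cases you carried out. Your emphasis on why the hypothesis $t<0$ (rather than $t\leq 0$) is what keeps $\One$ and $\coOne$ idle is exactly the key observation underlying the paper's claim.
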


\subsection{Trajectories}

For the non-affine version of the signal flow calculus, we studied in \cite{Bonchi2015} \emph{traces} arising from computations. For the affine extension, this is not possible since, as explained above, we must also consider computations that start in the past. In this paper, rather than traces we adopt a common control theoretic notion.

\begin{definition}
An $(n,m)$-\emph{trajectory} $\sigma$ is a $\Z$-indexed sequence $\sigma \from \Z \to \field^n\times\field^m$ that is finite in the past, i.e., for which $\exists j\in \Z$ such that $\sigma(i) = (0,0)$ for $i\leq j$.
\end{definition}
By the universal property of the product we can identify $\sigma \from \Z \to \field^n\times\field^m$ with the pairing $\langle\sigma_l, \sigma_r\rangle$ of  $\sigma_l \from \Z \to \field^n$ and $\sigma_r \from \Z \to\field^m$. A $(k,m)$-trajectory $\sigma$ and $(m,n)$-trajectory $\tau$ are \emph{compatible} if $\sigma_r = \tau_l$. In this case,
we can define their composite, a $(k,n)$-trajectory $\sigma\poi \tau$ by $\sigma\poi \tau \df \langle \sigma_l, \tau_r \rangle$.
Given an $(n_1,m_1)$-trajectory $\sigma_1$, 
and an $(n_2,m_2)$-trajectory $\sigma_2$, their product, an $(n_1+n_2,m_1+m_2)$-trajectory 
$\sigma_1\tns \sigma_2$, is defined $(\sigma_1\tns \sigma_2)(i)\df \left(\begin{array}{c}
				  \!\sigma(i)\! \\
				  \!\tau(i)\!
				\end{array}\right)$. %, for all $i\in\Z$. 
Using these two operations we can organise \emph{sets} of trajectories into a prop.
\begin{definition}\label{def:opsettraj}
The composition of two sets of trajectories is defined 
as $
S\poi T \df\{\sigma\poi \tau\mid \sigma\in S, \tau\in T \text{ are compatible}\}.
$
The product of sets of trajectories is defined 
as $
S_1\tns S_2\df \{ \sigma_1\tns \sigma_2 \mid  \sigma_1\in S_1, \sigma_2\in S_2\}.
$ 
\end{definition}
Clearly both operations are strictly associative. The unit for $\tns$ is the singleton with the unique $(0,0)$-trajectory.
Also $\poi$ has a two sided identity, given by sets of ``copycat'' $(n,n)$-trajectories. Indeed, we have that:
\begin{proposition}
Sets of $(n,m)$-trajectories are the arrows $n\rightarrow m$ of a prop $\Traj$ with composition and monoidal product given as in Definition \ref{def:opsettraj}. 
\end{proposition}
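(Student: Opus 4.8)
The plan is to recognise $\Traj$ as living inside the symmetric monoidal category $\mathsf{Rel}$ of sets and relations (with monoidal product the cartesian product), which organises all the bookkeeping. First I would observe that the finite-in-the-past condition on a trajectory $\sigma = \langle\sigma_l,\sigma_r\rangle$ is equivalent to requiring $\sigma_l$ and $\sigma_r$ to be separately finite in the past: if the pair vanishes before $j$ then so do both components, and conversely the pair vanishes before the minimum of the two individual thresholds. Writing $\mathcal{F}_n$ for the set of maps $\Z \to \field^n$ that are finite in the past, this gives a bijection between $(n,m)$-trajectories and $\mathcal{F}_n \times \mathcal{F}_m$, hence between sets of $(n,m)$-trajectories and binary relations $\mathcal{F}_n \to \mathcal{F}_m$. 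Under this identification, compatibility of $\sigma$ and $\tau$ is exactly the condition $\sigma_r = \tau_l$ picking out a common middle element of $\mathcal{F}_m$, and $S \poi T$ is precisely relational composition; the composite $\langle\sigma_l,\tau_r\rangle$ is again finite in the past, so the operation is well defined on trajectory sets.

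Next I would treat the category axioms. Associativity of $\poi$ is inherited from associativity of relational composition: both sides of $(S\poi T)\poi U = S\poi(T\poi U)$ unfold to the set of $\langle\sigma_l,\upsilon_r\rangle$ admitting a compatible pair of intermediate signals, and the two existentials are independent. For identities I take the ``copycat'' set $\iota_n \df \{\sigma \mid \sigma_l = \sigma_r\}$, the graph of the identity relation on $\mathcal{F}_n$; a one-line check using $\sigma_r = \tau_l$ shows $S\poi\iota_m = S = \iota_n\poi S$, since composing with a diagonal trajectory merely renames the shared interface. Thus $\Traj$ is a category with object-set $\N$.

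Finally I would verify the strict symmetric monoidal structure. Strict associativity of $\tns$, and of $+$ on objects, holds on the nose because the product is computed by concatenating value-vectors in $\field^n$, and both $+$ on $\N$ and tuple concatenation are strictly associative; the unit is $0$, whose only trajectory set is the singleton containing the unique $(0,0)$-trajectory, matching $\mathcal{F}_0 \cong \{*\}$. Bifunctoriality amounts to the interchange law $(S_1\poi T_1)\tns(S_2\poi T_2) = (S_1\tns S_2)\poi(T_1\tns T_2)$, whose only real content is that compatibility of $\sigma_1\tns\sigma_2$ with $\tau_1\tns\tau_2$ decomposes exactly as compatibility of $\sigma_1$ with $\tau_1$ together with that of $\sigma_2$ with $\tau_2$; the two sides then describe the same trajectories. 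Symmetries are the graphs of the value-swap maps $\field^n\times\field^m \to \field^m\times\field^n$, and naturality together with the coherence (hexagon) equations follow either by direct computation or by transporting the symmetric monoidal structure of $\mathsf{Rel}$ along the identification above. The main obstacle is not conceptual but resides in this last step: the slick appeal to $\mathsf{Rel}$ only yields a non-strict monoidal category, so strictness must be argued from the concrete coordinate-based encoding, and one must keep the relational and the strict descriptions aligned while carrying out the matching of compatibility conditions needed for the interchange law.
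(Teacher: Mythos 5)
Your proof is correct. Note that the paper gives no explicit proof of this proposition: it is stated immediately after remarking that both operations are clearly strictly associative, that the unit for $\tns$ is the singleton containing the unique $(0,0)$-trajectory, and that $\poi$ has two-sided identities given by the copycat sets---precisely the units and identities you exhibit, so in substance your verification coincides with what the paper treats as immediate. The genuine difference is your organising device: identifying a set of $(n,m)$-trajectories with a relation $\mathcal{F}_n\to\mathcal{F}_m$ and inheriting composition, associativity and the identity laws from $\mathsf{Rel}$. That buys the category axioms for free and makes it transparent that $\poi$ is relational composition (including the well-definedness point that composites remain finite in the past, which you rightly check). You also correctly identify the limit of this device: since $\mathcal{F}_{n+m}$ is only isomorphic, not equal, to $\mathcal{F}_n\times\mathcal{F}_m$, strict associativity of $\tns$, the interchange law and the symmetry coherences must still be verified in coordinates, and your decomposition of compatibility of $\sigma_1\tns\sigma_2$ with $\tau_1\tns\tau_2$ into the two separate compatibilities is exactly the crux of that verification. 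Two small omissions, both routine and of the same flavour as the checks you do spell out: functoriality of $\tns$ also requires $\iota_n\tns\iota_m=\iota_{n+m}$ (concatenation of copycats is a copycat), and the strict unit laws $S\tns I_0=S=I_0\tns S$ for the $(0,0)$-singleton deserve a word. These are bookkeeping, not substance, so the proposal stands.
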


$\Traj$ serves for us as the domain for operational semantics: given a circuit $c$ and an \emph{infinite} computation
$$\context{t} c_0 \dtrans{u_t}{v_t}c_1\dtrans{u_{t+1}}{v_{t+1}}c_2\dtrans{u_{t+2}}{v_{t+2}} \dots $$
its associated trajectory $\sigma$ is 
\begin{equation}\label{eq:compt-traj}
\sigma(i)= \begin{cases}
          (u_{i},v_{i}) & \text{ if } i \geq t,\\
          (0 ,0)             & \text{ otherwise.}\\   \end{cases}
\end{equation}

\vspace{-.2cm}

%This is finite in the past by definition. 
%Intuitively, given an infinite computation, we define its trajectory by shifting the trace by $t$ (to start it at its initial point). 
\begin{definition}\label{def:trajectories}
For a circuit $c$, $\osem{c}$ is the set of trajectories given by its infinite computations, following 
the translation~\eqref{eq:compt-traj} above.
% that start at time $t\leq 0$. 
\end{definition}
The assignment $c\mapsto\osem{c}$ is compositional, that is:
\begin{theorem}\label{thm:opsem-morphism}
$\osem{\cdot}\colon \ACD \to \Traj$ is a morphism of props.
\end{theorem}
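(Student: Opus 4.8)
The plan is to exploit that $\ACD$ is the \emph{free} prop on the generators \eqref{eq:SFcalculusSyntax1}--\eqref{eq:SFcalculusSyntax2}. By its universal property there is a unique prop morphism $F\colon \ACD \to \Traj$ whose value on each generator $g$ is the set $\osem{g}$ read off from the rules of \figref{fig:operationalSemantics}; this $F$ is identity-on-objects (both props have $\N$ as objects) and is a morphism of props by construction, so it descends through the quotient by the symmetric monoidal laws. It therefore suffices to prove that $\osem{\cdot}$ coincides with $F$, i.e.\ that $\osem{c}=F(c)$ for every circuit $c$, which I would do by structural induction on $c$. The base cases hold by definition of $F$ on the generators, together with a direct inspection of the rules showing that $\osem{\IdnetT}$ is the set of copycat trajectories (the identity of $\Traj$), $\osem{\symNetT}$ is the symmetry, and $\osem{\ZeronetT}$ is the unit $(0,0)$-trajectory. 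The inductive step reduces the whole statement to the two compositionality equations
\begin{align*}
\osem{c\poi d} = \osem{c}\poi\osem{d}, &&
\osem{c\tns d} = \osem{c}\tns\osem{d},
\end{align*}
which I would establish directly for arbitrary $c,d$: granting them, $\osem{c\poi d}=\osem{c}\poi\osem{d}=F(c)\poi F(d)=F(c\poi d)$ by the induction hypothesis, and likewise for $\tns$.

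Both equations are proved by matching infinite computations against pairs of trajectories, one time-step at a time. For sequential composition, the $\poi$-rule of \figref{fig:operationalSemantics} states that a transition of $c\poi d$ at time $t$ is exactly a pair of transitions of $c$ and $d$ at time $t$ agreeing on the shared interface value $v$. Since $(c\poi d)_0=c_0\poi d_0$, an infinite computation of $c\poi d$ starting at time $t$ splits into two infinite computations of $c$ and $d$, both beginning at their initial states at time $t$, whose right/left interface sequences coincide; reading off \eqref{eq:compt-traj} yields compatible $\sigma\in\osem{c}$ and $\tau\in\osem{d}$ with $\sigma\poi\tau$ the trajectory of the original computation. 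This gives $\osem{c\poi d}\subseteq\osem{c}\poi\osem{d}$, and the $\tns$-case is identical but simpler, there being no shared interface to synchronise.

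For the converse inclusion one must glue two independently obtained computations into one, and this is the main obstacle. Given compatible $\sigma\in\osem{c}$ and $\tau\in\osem{d}$ arising from computations starting at possibly different times $t_\sigma,t_\tau\le 0$, I would first realign them to the common start $t=\min(t_\sigma,t_\tau)$ by prepending idle transitions; this is legitimate by Lemma~\ref{lemma:idle} and leaves the trajectories unchanged, since each is $(0,0)$ before its own start time. The crucial point is that compatibility $\sigma_r=\tau_l$ forces the shared-interface values to agree at \emph{every} instant: in a region where, say, $d$ is still idling, $\tau_l$ is $0$ there, hence $\sigma_r$ is $0$ too, so the idle $\dtrans{0}{0}$ transition of $d$ synchronises with the genuine transition of $c$. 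Thus at each time $i\ge t$ the $\poi$-rule applies, and concatenating the resulting transitions produces an infinite computation of $c\poi d$ whose trajectory is exactly $\sigma\poi\tau$, giving $\osem{c}\poi\osem{d}\subseteq\osem{c\poi d}$.

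The parallel case follows the same pattern without the synchronisation constraint, using Lemma~\ref{lemma:idle} only to align the start times; the decomposition and recombination of computations are then entirely independent on the two components, matching the definition of $\tns$ on sets of trajectories in Definition~\ref{def:opsettraj}. Combining the two inclusions and their parallel analogues establishes both compositionality equations, which closes the induction and shows $\osem{\cdot}=F$, hence that $\osem{\cdot}$ is a morphism of props. I expect no difficulty beyond the time-alignment argument above; the only care required elsewhere is bookkeeping the "finite in the past" condition, which is automatic since every trajectory coming from a computation is $(0,0)$ before its start time.
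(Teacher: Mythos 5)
Your proposal is correct and follows essentially the same route as the paper's own proof: reduce to the two compositionality equations, split an infinite computation of $c\poi d$ into compatible computations of $c$ and $d$, and for the converse use Lemma~\ref{lemma:idle} to prepend idle $\dtrans{0}{0}$ transitions so both computations start at the same time, with compatibility ($\sigma_r=\tau_l$, both zero before the later start) guaranteeing the glued transitions synchronise. Your explicit freeness/universal-property framing and the check of identities and symmetries are a slightly more careful packaging of what the paper leaves implicit, but the substance is identical.
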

\begin{proof}In Appendix~\ref{app:proofs}.\end{proof}

\begin{example}
Consider the computations \eqref{eq:idcomp} and \eqref{eq:spanregscomp} from Example \ref{example:spancospan}. According to~\eqref{eq:compt-traj} both are translated into the trajectory $\sigma$ mapping $i\geq 0$ into $(a_i,a_i)$ and $i < 0$ into $(0,0)$. The reader can easily verify that, more generally, it holds that $\osem{\IdnetT}=\osem{\spanregs{}}$. At this point it is worth to remark that the two circuits would be distinguished when looking at their traces: the trace of computation \eqref{eq:idcomp} is different from the trace of \eqref{eq:spanregscomp}. Indeed, the full abstraction result in \cite{Bonchi2015} does not hold for all circuits, but only for those of a certain kind. The affine extension obliges us to consider computations that starts in the past and, in turn, this drives us toward a stronger full abstraction result, shown in the next section.

Before concluding, it is important to emphasise that $\osem{\IdnetT}=\osem{\cospanregs{}{}}$ also holds. Indeed, problematic computations, like \eqref{eq:cospanregscomp}, are all finite and, by definition, do not give rise to any trajectory.
The reader should note that the use of trajectories is not a semantic device to get rid of problematic computations. In fact, trajectories do not appear in the statement of our full abstraction result; they are merely a convenient tool to prove it.
Another result (Proposition \ref{prop:sfg-infinite}) independently takes care of ruling out problematic computations. 
\end{example}

%This last restriction is reasonable as it prevents us from looking at trajectories that can only start in the future.

\section{Contextual Equivalence and Full Abstraction}\label{sec:fullabs}

This section contains the main contribution of the paper: a traditional full abstraction result asserting that contextual equivalence agrees with denotational equivalence. It is not a coincidence that we prove this result in the affine setting: affinity plays a crucial role, both in its statement and proof. In particular, Proposition~\ref{prop:twopossibility} gives us two possibilities for the denotation of $\sort{0}{0}$ circuits: \textit{(i)} $\emptyset$---which, roughly speaking, means that there is a problem (see e.g. Example~\ref{example:empty}) and no infinite computation is possible---or \textit{(ii)} $id_0$, in which case infinite computations are possible. This provides us with a basic notion of observation, akin to observing termination vs non-termination in the $\lambda$-calculus. 
 
\begin{definition}
For a circuit $c\in\ACD[0,0]$ we write $c\infcomp$ if $c$ can perform an infinite computation and $c \ninfcomp$ otherwise. For instance $\ZeronetT \infcomp$, while $
\tikzset{x=1em, y=2.1ex}
}
\tikzset{x=1em, y=1.5ex}
\ninfcomp$.
\end{definition}

To be able to make observations about arbitrary circuits we need to introduce an appropriate notion of context.
Roughly speaking, contexts for us are $\sort{0}{0}$-circuits with a hole into which we can plug another circuit. 
Since ours is a variable-free presentation, ``dangling wires'' assume the role of free variables~\cite{GhicaL17}: restricting to $\sort{0}{0}$ contexts is therefore analogous to considering \emph{ground} contexts---i.e. contexts with no free variables---a standard concept of programming language theory.
%This notion is conceptually close to that of \emph{ground} context of programming language theory, i.e. contexts with no free variables. 

To define contexts formally, we extend the syntax of~Section \ref{sec:syntax} with an extra generator ``$-$'' of sort $\sort{n}{m}$. A $\sort{0}{0}$-circuit of this extended syntax is a \emph{context} when ``$-$'' occurs exactly once. Given an $\sort{n}{m}$-circuit $c$ and a context $C[-]$, we write $C[c]$ for the circuit obtained by replacing the unique occurrence of ``$-$''  by $c$.

With this setup, given an $\sort{n}{m}$-circuit $c$, we can insert it into a context $C[-]$ and observe the possible outcome: either $C[c]\infcomp$ or $C[c]\ninfcomp$. This naturally leads us to contextual equivalence and the statement of our main result.

\begin{definition}
Given $c,d\in \ACD[n,m]$, we say that they are \emph{contextually equivalent}, written $c \Obseq d$, if for all contexts $C[-]$, $$C[c]\infcomp \text{ iff } C[d]\infcomp\text{.}$$
\end{definition}

\begin{example}
Recall from Example~\ref{example:spancospan}, the circuits $\IdnetT$ and $\cospanregs{}{}$. 
Take the context $C[-]=c_\sigma \poi \; - \; \poi c_\tau$ for $c_\sigma \in \ACD[0,1]$ and $c_\tau \in \ACD[1,0]$. Assume that $c_\sigma$ and $c_\tau$ have a single infinite computation. Call $\sigma$ and $\tau$ the corresponding trajectories. 
If $\sigma = \tau$, both $C[\IdnetT]$ and $C[\cospanregs{}{}]$ would be able to perform an infinite computation. Instead if $\sigma \neq \tau$, none of them would perform any infinite computation: $\IdnetT$ would stop at time $t$, for $t$ the first moment
such that $\sigma(t)\neq \tau(t)$, while $C[\cospanregs{}{}]$ would stop at time $t+1$.

Now take as context $C[-] = \BunitT\poi -\poi \BcounitT$. In  contrast to $c_\sigma$ and $c_\tau$, $\BunitT$ and $\BcounitT$ can perform more than one single computation: at any time they can nondeterministically emit any value. Thus every computation of $C[\IdnetT] = 
\tikzset{x=1em, y=2.1ex}
\begin{tikzpicture}
	\begin{pgfonlayer}{nodelayer}
		\node [style=black] (0) at (0.5, -0) {};
		\node [style=black] (1) at (-0.75, -0) {};
	\end{pgfonlayer}
	\begin{pgfonlayer}{edgelayer}
		\draw (0) to (1);
	\end{pgfonlayer}
\end{tikzpicture}}
\tikzset{x=1em, y=1.5ex}
$ can \emph{always} be extended to an infinite one, forcing synchronisation of $\BunitT$ and $\BcounitT$ at each step. For $C[\cospanregs{}{}]=
\tikzset{x=1em, y=2.1ex}
\begin{tikzpicture}
	\begin{pgfonlayer}{nodelayer}
		\node [style=reg] (0) at (-0.25, 0) {$x$};
		\node [style=coreg] (1) at (1.25, 0) {$x$};
		\node [style=black] (2) at (-1.75, 0) {};
		\node [style=black] (4) at (2.75, 0) {};
	\end{pgfonlayer}
	\begin{pgfonlayer}{edgelayer}
		\draw (2) to (0);
		\draw (0) to (1);
		\draw (1) to (4);
	\end{pgfonlayer}
\end{tikzpicture}
}
\tikzset{x=1em, y=1.5ex}
$, $\BunitT$ and $\BcounitT$ may emit different values at time $t$, but the computation will get stuck at $t+1$. However, our definition of $\infcomp$ only cares about whether $C[\cospanregs{}{}]$ \emph{can} perform an infinite computation. Indeed it can, as long as $\BunitT$ and $\BcounitT$ consistently emit the same value at each time step. 

If we think of contexts as tests, and say that a circuit $c$ passes test $C[-]$ if $C[c]$ perform an infinite computation, then our notion of contextual equivalence is \emph{may-testing} equivalence~\cite{de1984testing}. From this perspective, $\IdnetT$ and $\cospanregs{}{}$ are not \emph{must equivalent}, since the former must pass the test $\BunitT\poi -\poi \BcounitT$ while $\cospanregs{}{}$ may not. It is worth to remark here that the distinction between may and must testing will cease to make sense in Section \ref{sec:sfg} where we identify a certain class of circuits equipped with a proper flow directionality and thus a deterministic, input-output, behaviour.
\end{example}

\begin{theorem}[Full abstraction]\label{thm:fullabstraction}
$c \Obseq d$ iff $c \eqAIH d$
\end{theorem}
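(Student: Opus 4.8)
The plan is to route the equivalence through the denotational semantics. The Proposition of Section~\ref{sec:denotational} already gives $c \eqAIH d$ iff $\dsem{c}=\dsem{d}$, so it suffices to prove that $c \Obseq d$ iff $\dsem{c}=\dsem{d}$. The whole argument rests on a single \emph{adequacy lemma} bridging observation and denotation: for every $e\in\ACD[0,0]$,
\[ e\infcomp \quad\text{iff}\quad \dsem{e}=id_0 \qquad(\text{equivalently } \dsem{e}\neq\emptyset). \]
Granting this lemma, both directions of the theorem are short, and by Proposition~\ref{prop:twopossibility} the case distinction on the right is exhaustive.

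For soundness ($c\eqAIH d \Rightarrow c\Obseq d$), I would assume $\dsem{c}=\dsem{d}$ and let $C[-]$ be any $\sort{0}{0}$-context. Since $\dsemO$ is a morphism of props and $C[-]$ is built from fixed generators, the hole, and the operations $\poi,\tns$, the denotation of $C[e]$ depends on $e$ only through $\dsem{e}$; hence $\dsem{C[c]}=\dsem{C[d]}$. The adequacy lemma then yields $C[c]\infcomp$ iff $\dsem{C[c]}=id_0$ iff $\dsem{C[d]}=id_0$ iff $C[d]\infcomp$, so $c\Obseq d$.

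For completeness ($c\Obseq d\Rightarrow c\eqAIH d$) I would argue contrapositively. If $\dsem{c}\neq\dsem{d}$, pick $(u,v)$ in the symmetric difference, say $(u,v)\in\dsem{c}\setminus\dsem{d}$. By Proposition~\ref{prop:denotation-context} there are circuits $c_u\in\ACD[0,n]$ and $c_v\in\ACD[m,0]$ with $\dsem{c_u}=\{(\bullet,u)\}$ and $\dsem{c_v}=\{(v,\bullet)\}$. Take the context $C[-]\df c_u\poi-\poi c_v$. Computing the relational composites one finds $\dsem{C[e]}=id_0$ when $(u,v)\in\dsem{e}$ and $\dsem{C[e]}=\emptyset$ otherwise. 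Thus $\dsem{C[c]}=id_0$ while $\dsem{C[d]}=\emptyset$, so by the adequacy lemma $C[c]\infcomp$ but $C[d]\ninfcomp$, witnessing $c\not\Obseq d$.

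The crux, and the step I expect to be the main obstacle, is the adequacy lemma, which I would derive from a tighter \emph{operational--denotational correspondence}. Observing that a trajectory, being finite in the past, is exactly a Laurent series, and that the rational fractions $\frpoly$ embed into the Laurent series $\laur$, I would establish
\[ \osem{c} \;=\; \dsem{c}_{\laur}, \]
where the right-hand side is the affine relation $\dsem{c}$ with scalars extended along $\frpoly\hookrightarrow\laur$. Both $\osemO$ (through the identification of $\Traj$ with relations over $\laur$, Theorem~\ref{thm:opsem-morphism}) and $(-)_\laur\comp\dsemO$ are prop morphisms $\ACD\to\ARel{\laur}$, so they coincide as soon as they are shown to agree on the finitely many generators---a direct computation (a register becomes multiplication by $x$, and the affine unit becomes the series $1$). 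Specialising to $\sort{0}{0}$, a trajectory carries no data, so $e\infcomp$ iff $\osem{e}\neq\emptyset$; and since an affine system is consistent over $\frpoly$ iff it is consistent over the extension $\laur$, we obtain $\osem{e}=\dsem{e}_\laur\neq\emptyset$ iff $\dsem{e}\neq\emptyset$, which is the lemma. The genuinely delicate part is the backward (realisability) direction of the correspondence: converting an abstract Laurent-series solution of the relation into a concrete infinite computation. This forces one to schedule register contents consistently across feedback loops and, as Example~\ref{example:1overX} shows, to start the computation far enough in the past---a starting time that exists precisely because trajectories are finite in the past, with Lemma~\ref{lemma:idle} guaranteeing that every circuit may idle until then.
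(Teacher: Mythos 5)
Your proof is correct and is essentially the paper's own: your adequacy lemma is Proposition~\ref{cor:dsemexperiments}, derived just as you suggest from Proposition~\ref{prop:twopossibility} together with the generator-wise operational--denotational correspondence $\osemO = \kappa \circ \iota \circ \dsemO$ (Theorem~\ref{thm:main}), and your distinguishing-context argument for completeness is exactly the paper's Proposition~\ref{lemma:contextequivimpliesdenequiv}. The only difference is presentational: you identify $\Traj$ with relations over $\laur$ outright, whereas the paper keeps the two prop morphisms $\iota$ and $\kappa$ explicit.
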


The remainder of this section is devoted to the proof of Theorem~\ref{thm:fullabstraction}.
%, thus relating denotational and contextual equivalence. 
We will start by clarifying the relationship between fractions of polynomials (the denotational domain) and trajectories (the operational domain). %This is addressed in the following subsection.

\subsection{From Polynomial Fractions to Trajectories}
\label{sec:frpoly-traj}
The missing link between polynomial fractions and trajectories are \emph{(formal) Laurent series}: we now recall this notion. Formally, a Laurent series is a function $\sigma\colon \Z \to \field$ for which there exists $j \in \Z$ such that $\sigma(i) = 0$ for all $i < j$. We write $\sigma$ as $\dots,\sigma(-1),\underline{\sigma(0)}, \sigma(1), \dots$ with position $0$ underlined, or as formal sum $\LSum{d}\sigma(i)x^i$. Each Laurent series $\sigma$ has then a \emph{degree} $d \in \Z$, which is the first non-zero element. Laurent series form a field $\laur$: sum is pointwise, product is by convolution, and the inverse $\sigma^{-1}$ of $\sigma$ with degree $d$ is defined as:
\begin{eqnarray}\label{eq:inverse}
\sigma^{-1}(i) = \,
\begin{cases} 0 & \text{ if } i < -d \\
\sigma(d)^{-1}  &\text{ if } i=-d \\
\frac{\sum_{i=1}^{n} \big( \sigma(d+i) \cdot \sigma^{-1}(-d+n-i)\big)}{-\sigma(d)} & \text{ if } i\!=\!-d\!+\!n \text{ for } n\!>\!0
\end{cases}\hspace{-0.7cm}%\notag%\raisetag{100\baselineskip}%\nonumber
\end{eqnarray}
Note (formal) power series, which form `just' a ring $\fps$, are a particular case of Laurent series, namely those $\sigma$s for which $d \geq 0$. What is most interesting for our purposes is how polynomials and fractions of polynomials relate to $\laur$ and $\fps$. First, the ring $\poly$ of polynomials embeds into $\fps$, and thus into $\laur$: a polynomial $p_0+p_1x+\dots +p_nx^n$ can also be regarded as the power series $\LSum{0}p_ix^i$ with $p_i=0$ for all $i>n$. Because Laurent series are closed under division, this immediately gives also an embedding of the field of polynomial fractions $\frpoly$ into $\laur$. Note that the full expressiveness of $\laur$ is required: for instance, the fraction $\frac{1}{x}$ is represented as the Laurent series $\dots ,0,1,\underline{0},0, \dots$, which is not a power series, because a non-zero value appears before position $0$. In fact, fractions that are expressible as power series are precisely the \emph{rational} fractions, i.e. of the form  $\frac{k_0+k_1x+k_2x^2 \dots + k_nx^n}{l_0+l_1x+l_2x^2 \dots + l_nx^n}$ where $l_0\neq 0$. 

\noindent \begin{minipage}[c]{.60\linewidth}
Rational fractions form a ring $\ratio$ which, differently from the full field $\frpoly$, embeds into $\fps$. Indeed, whenever $l_0\neq 0$, the inverse of $l_0+l_1x+l_2x^2 \dots + l_nx^n$ is, by \eqref{eq:inverse}, a \emph{bona fide} power series. The commutative diagram on the right is a summary. \end{minipage}
\begin{minipage}[c]{.40\linewidth}
\[
\xymatrix@R=1pt@C=1pt{
\fps \ar@{^{(}->}[rrrr] &  & & & \laur\\
\\
\\
& & \ratio \ar@{_{(}->}[lluuu] \ar@{^{(}->}[rrd]
\\
\poly \ar@{^{(}->}[urr] \ar@{_{(}->}[rrrr] \ar@{^{(}->}[uuuu] & & & & \frpoly \ar@{_{(}->}[uuuu]
}
\]
\end{minipage}

\medskip

Relations between $\laur$-vectors organise themselves into a prop $\ARel{\laur}$ (see Definition \ref{DEF_SV}). There is an evident prop morphism $\iota \colon \ARel{\frpoly} \to \ARel{\laur}$: it maps the empty affine relation on $\frpoly$ to the one on $\laur$, and otherwise applies pointwise the embedding of $\frpoly$ into $\laur$. 
For the next step, observe that trajectories are in fact rearrangements of Laurent series: each pair of vectors $(u,v)\in\laur^n \times \laur^m$, as on the left below, yields the trajectory $\kappa(u,v)$ defined for all $i\in \Z$ as on the right below.
$$(u,v) = \left(\!{\small\left( \begin{array}{c}
				  \!\alpha^1\! \\
				  \vdots\\
				  \!\alpha^n\!
				\end{array}\right)}, \,
{\small\left(%
				\begin{array}{c}
				  \!\beta^1\! \\
				  \vdots\\
				  \!\beta^m\!
				\end{array}\right)}\!\right)
\qquad				
\qquad
\kappa(u,v)(i)=\left(\!{\small\left( \begin{array}{c}
				  \!\alpha^1(i)\! \\
				  \vdots\\
				  \!\alpha^n(i)\!
				\end{array}\right)}, \,
{\small\left(%
				\begin{array}{c}
				  \!\beta^1(i)\! \\
				  \vdots\\
				  \!\beta^m(i)\!
				\end{array}\right)}\!\right)$$
Similarly to $\iota$, the assignment $\kappa$ extends to sets of vectors, and also to a prop morphism from $\ARel{\laur}$ to $\Traj$. Together, $\kappa$ and $\iota$ provide the desired link between operational and denotational semantics.

\begin{theorem}\label{thm:main}
$\osem{\cdot}=\kappa \circ \iota \circ \dsem{\cdot}$
\end{theorem}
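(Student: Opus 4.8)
The plan is to exploit the fact that every map in sight is a morphism of props. By Theorem~\ref{thm:opsem-morphism}, $\osemO$ is a prop morphism $\ACD \to \Traj$; and $\kappa \circ \iota \circ \dsemO$ is a composite of the prop morphisms $\dsemO \colon \ACD \to \ARel{\frpoly}$, $\iota \colon \ARel{\frpoly}\to\ARel{\laur}$ and $\kappa \colon \ARel{\laur}\to\Traj$, hence itself a prop morphism $\ACD\to\Traj$. Since $\ACD$ is the \emph{free} prop on the generators \eqref{eq:SFcalculusSyntax1}--\eqref{eq:SFcalculusSyntax2}, its universal property guarantees that a prop morphism out of $\ACD$ is determined by its action on these generators; two such morphisms therefore coincide as soon as they agree on each generator. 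The structural constants (identity wire, symmetry, empty diagram) and the two composition operations $\poi$ and $\tns$ are part of the prop structure, so they are preserved by both sides automatically. Thus it remains only to check, generator by generator, that $\osem{g} = \kappa(\iota(\dsem{g}))$.

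For each \emph{stateless} generator the verification is a direct, instant-by-instant comparison. Its operational rule relates, at every time step and independently of the time, the tuple of left values to the tuple of right values by exactly the (affine-)linear constraint defining $\dsem{g}$: the copier imposes $w=(v,v)$, the adder $w=v_1+v_2$, the amplifier $w=r\cdot v$, $\Wunit$ emits the constant $0$, and the mirror generators impose the transposed constraints. Reading $\iota(\dsem{g})$ as the $\laur$-relation cut out by the same constraints and applying $\kappa$, one obtains precisely the set of trajectories whose value at each instant satisfies that constraint, which is exactly $\osem{g}$. The only subtlety in this family concerns the genuinely nondeterministic generators---delete, co-delete, co-copy, co-add---whose denotations are full spaces (respectively diagonals): here one uses that $\iota$ sends the full relation over $\frpoly$ to the full relation over $\laur$, so that the operational freedom to emit or absorb an arbitrary value at each step matches the absence of constraints on the corresponding component.

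The two delicate generators are the register and $\One$ (together with their mirror images). For the register, $\dsemO$ yields $\{(p, p\cdot x)\mid p\in\frpoly\}$, whose $\iota$-image is the graph of multiplication by $x$, i.e.\ the pairs $(\sigma,\tau)$ of Laurent series with $\tau(i)=\sigma(i-1)$ for all $i$. Operationally the register reads a value on the left and emits its stored value on the right, so in any computation the right output at time $i$ equals the left input at time $i-1$; the \emph{zero-initialised} initial state, together with the ability to idle arbitrarily far in the past (Lemma~\ref{lemma:idle}), ensures both that every computation yields such a trajectory and, conversely, that every finite-in-the-past $\sigma$ is realised---starting early enough with $0$ stored produces exactly $(\sigma, x\cdot\sigma)$. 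For $\One$, $\dsemO$ gives $\{(\bullet,1)\}$, and the fraction $1$ is the Laurent series carrying value $1$ at position $0$ and $0$ elsewhere; the operational rules emit $1$ precisely at time $0$ and $0$ at every other time, so the unique trajectory produced is exactly $\kappa(\iota(\{(\bullet,1)\}))$. The co-register and $\coOne$ are symmetric.

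The main obstacle is the completeness (surjectivity) direction for the stateful and time-dependent generators: one must show that the buffering behaviour of the register, constrained by zero-initialisation, generates the \emph{whole} graph of multiplication by $x$ and nothing more, and that the time-anchored behaviour of $\One$ produces exactly the series $1$. This is precisely where the two ``finiteness in the past'' conditions must be reconciled---the operational one (a computation starts at some $t\le 0$ and is padded with $(0,0)$ before $t$, cf.\ \eqref{eq:compt-traj}) and the denotational one (a Laurent series vanishes below some degree). Once these generator-level equalities are established, the prop-freeness argument of the first paragraph closes the proof with no further computation, since compatibility of both sides with $\poi$ and $\tns$ is already guaranteed by their being prop morphisms.
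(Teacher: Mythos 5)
Your proposal is correct and takes essentially the same approach as the paper: both $\osemO$ and $\kappa\circ\iota\circ\dsemO$ are prop morphisms out of the free prop $\ACD$, so equality reduces to a generator-by-generator check. The paper's own proof spells out only the copier case and leaves the rest to the reader, so your more detailed treatment of the stateful and time-dependent generators (the register and the affine constant) is a faithful completion of the same argument rather than a different route.
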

\begin{proof}
Since both are symmetric monoidal functors from a free prop, it is enough to check the statement for the generators of $\ACD$. We show, as an example, the case of $\Bcomult$. By Definition \ref{def:SemIB}, $\dsem{\Bcomult} = \left\{\left( p ,\left(\begin{array}{c}
				 \! p\! \\
				 \! p\!
				\end{array}\right) \right) \mid p\in \frpoly \right\}  
$. This  is mapped by $\iota$ to $\left\{\left( \alpha ,\left(\begin{array}{c}
				 \! \alpha\! \\
				 \! \alpha\!
				\end{array}\right) \right) \mid \alpha\in \laur \right\}$. Now, to see that $\kappa(\iota(\dsem{\Bcomult}))= \osem{\Bcomult}$, it is enough to observe that a trajectory $\sigma$ is in $\kappa(\iota(\dsem{\Bcomult}))$ precisely when, for all $i$, there exists some $k_i\in \field$ such that $\sigma(i)=\left( k_i ,\left(\begin{array}{c}
				 \! k_i\! \\
				 \!k_i\!
				\end{array}\right) \right)$.
\end{proof}

\subsection{Proof of Full Abstraction}

We now have the ingredients to prove Theorem \ref{thm:fullabstraction}. First, 
we prove an adequacy result for $\sort{0}{0}$ circuits.

\begin{proposition}\label{cor:dsemexperiments}
Let $c\in \ACD[0,0]$. Then
$\dsem{c}=id_0$ if and only if $c\infcomp$.
\end{proposition}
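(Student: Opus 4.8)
The plan is to route the statement through the operational trajectory semantics, using Theorem~\ref{thm:main} as the bridge. First I would record the easy consequence of Proposition~\ref{prop:twopossibility}: the denotation $\dsem{c}$ of a $\sort{0}{0}$-circuit is one of only two affine relations, $id_0$ or $\emptyset$. Hence the assertion ``$\dsem{c}=id_0$'' is equivalent to ``$\dsem{c}\neq\emptyset$'', and the whole proposition reduces to establishing the chain of equivalences $\dsem{c}\neq\emptyset \iff \osem{c}\neq\emptyset \iff c\infcomp$.

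For the right-hand equivalence I would appeal directly to Definition~\ref{def:trajectories}: by construction $\osem{c}$ is the set of trajectories arising from the infinite computations of $c$. An infinite computation yields a trajectory, and conversely every element of $\osem{c}$ arises from one; therefore $c\infcomp$ holds precisely when $\osem{c}\neq\emptyset$. (Here one may also note that, since $\field^0$ is a singleton, there is a unique $(0,0)$-trajectory, so $\osem{c}$ is itself either $\emptyset$ or the identity $id_0$ of $\Traj$; only its emptiness matters for us.)

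For the left-hand equivalence I would invoke Theorem~\ref{thm:main}, which gives $\osem{c}=\kappa(\iota(\dsem{c}))$. Both $\iota\colon\ARel{\frpoly}\to\ARel{\laur}$ and $\kappa\colon\ARel{\laur}\to\Traj$ send the empty relation (resp.\ set) to the empty one and act pointwise on non-empty arguments, by an embedding and a rearrangement of indices respectively; consequently each preserves \emph{and} reflects emptiness, as does their composite $\kappa\circ\iota$. Thus $\osem{c}\neq\emptyset$ if and only if $\dsem{c}\neq\emptyset$. Chaining this with the trajectory characterisation of $\infcomp$ and the initial reduction yields $\dsem{c}=id_0 \iff c\infcomp$, as required.

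The argument is short once Theorem~\ref{thm:main} is available, so I expect the only delicate points to be bookkeeping: confirming that $\infcomp$ really corresponds to non-emptiness of $\osem{c}$ (immediate from the definition of $\osem{\cdot}$) and that $\iota$ and $\kappa$ \emph{reflect}, not merely preserve, non-emptiness. Both facts are built into the pointwise definitions of these morphisms, so no genuine obstacle remains; the substantive work has already been discharged in verifying Theorem~\ref{thm:main} on the generators of $\ACD$.
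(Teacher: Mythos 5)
Your proposal is correct and follows essentially the same route as the paper's own proof: both reduce via Proposition~\ref{prop:twopossibility} to the dichotomy $id_0$ versus $\emptyset$, then use Theorem~\ref{thm:main} together with the fact that $\kappa\circ\iota$ preserves and reflects emptiness, and finally read off $c\infcomp$ from the definition of $\osemO$ as the trajectories of infinite computations. The extra bookkeeping you flag (that $\iota$ and $\kappa$ reflect non-emptiness, and that $\infcomp$ corresponds to $\osem{c}\neq\emptyset$) is exactly what the paper's terser proof leaves implicit.
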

\begin{proof}
By Proposition \ref{prop:twopossibility}, either $\dsem{c}=id_0$ or $\dsem{c}=\emptyset$, which, combined with Theorem \ref{thm:main}, means that  $ \osem{c}=\kappa \circ \iota (id_0)$ or $\osem{c}=\kappa \circ \iota(\emptyset)$. By definition of $\iota$ this implies that either $\osem{c}$ contains a trajectory or not. In the first case $c\infcomp$; in the second $c \ninfcomp$.
\end{proof}

Next we obtain a result that relates denotational equality in all contexts to equality in $\AIH$.
Note that it is not trivial: since we consider ground contexts it does not make sense to merely consider  
``identity'' contexts. Instead, it is at this point that we make another crucial use of affinity, taking advantage
of the increased expressivity of affine circuits, as showcased by Proposition \ref{prop:denotation-context}.
\begin{proposition}\label{lemma:contextequivimpliesdenequiv}
%Let $c,d\in \ACD[n,m]$. 
If $\dsem{C[c]}=\dsem{C[d]}$ for all contexts $C[-]$, then $c \eqAIH d$.
\end{proposition}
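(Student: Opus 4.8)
The plan is to reduce the statement to a separation argument at the level of denotations. By soundness and completeness of $\AIH$ (the earlier proposition stating that $\dsem{c}=\dsem{d}$ iff $c\eqAIH d$), it suffices to prove that the hypothesis forces $\dsem{c}=\dsem{d}$. I would argue by contraposition: assuming $\dsem{c}\neq\dsem{d}$, I construct a single $\sort{0}{0}$ context that distinguishes them denotationally, contradicting the assumption that $\dsem{C[c]}=\dsem{C[d]}$ for all $C[-]$.

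First I would pick a witness of the inequality. Since $\dsem{c}$ and $\dsem{d}$ are affine relations in $\frpoly^n\times\frpoly^m$, if they differ then there is a pair $(u,v)$ lying in exactly one of them; without loss of generality $(u,v)\in\dsem{c}$ and $(u,v)\notin\dsem{d}$. The key move---and the point at which affinity is essential---is to turn this pair into a context. By Proposition~\ref{prop:denotation-context} there are circuits $c_u\in\ACD[0,n]$ and $c_v\in\ACD[m,0]$ with $\dsem{c_u}=\{(\matrixNull,u)\}$ and $\dsem{c_v}=\{(v,\matrixNull)\}$. I then set $C[-] \df c_u \poi - \poi c_v$, which is a well-sorted $\sort{0}{0}$ context.

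Next I would compute $\dsem{C[-]}$ using that $\dsem{\cdot}$ is a prop morphism. Composing relationally, $\dsem{c_u}\poi\dsem{c}\poi\dsem{c_v}$ equals $\{(\matrixNull,\matrixNull)\}$ exactly when $(u,v)\in\dsem{c}$, and equals $\emptyset$ otherwise; the identical computation applies with $d$ in place of $c$. Hence $\dsem{C[c]}=id_0$ while $\dsem{C[d]}=\emptyset$, so $\dsem{C[c]}\neq\dsem{C[d]}$, contradicting the hypothesis. Therefore $\dsem{c}=\dsem{d}$, and completeness yields $c\eqAIH d$.

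I do not expect a serious obstacle here: the entire content is supplied by Proposition~\ref{prop:denotation-context}, which isolates an arbitrary behaviour $(u,v)$ as a pair of plugs. The only routine points to verify are that the relational composite genuinely collapses to $id_0$ versus $\emptyset$ (a one-line calculation) and that $C[-]$ is a legitimate ground context. It is worth emphasising \emph{why} the linear fragment $\CD$ would fail at exactly this step: there the only singleton relation is $\{0\}$, so one cannot build plugs pinning down an arbitrary $(u,v)$. This is precisely the expressivity that affinity supplies, which is what makes the ground-context argument go through.
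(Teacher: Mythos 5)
Your proof is correct and follows essentially the same route as the paper's: both argue by contraposition, take a witness pair $(u,v)$ in exactly one of $\dsem{c}$, $\dsem{d}$, invoke Proposition~\ref{prop:denotation-context} to build the ground context $C[-]=c_u \poi - \poi c_v$, and observe that the relational composite collapses to $id_0$ versus $\emptyset$. Your closing remark about why the linear fragment fails here matches the paper's own discussion following Proposition~\ref{prop:denotation-context}.
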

\begin{proof}
Suppose that $c \stackrel{\AIH}{\neq} d$. Then $\dsem{c}\neq \dsem{d}$. Since both $\dsem{c}$ and $\dsem{d}$ are affine relations over $\frpoly$, there exists a pair of vectors $(u,v)\in  \frpoly^n \times  \frpoly^m$ that is in one of $\dsem{c}$ and $\dsem{d}$, but not both. Assume wlog that $(u,v)\in\dsem{c}$ and $(u,v)\notin\dsem{d}$. 
By Proposition \ref{prop:denotation-context}, there exists $c_u$ and $c_v$ such that $\dsem{c_u \poi c \poi c_v} = \dsem{c_u} \poi \dsem{c} \poi \dsem{c_v}=\{(\bullet, u)\}\poi\dsem{c} \poi \{(v,\bullet)\}$. Since $(u,v)\in \dsem{c}$, then $\dsem{c_u \poi c \poi c_v}=\{(\bullet, \bullet)\}$. Instead, since $(u,v)\notin \dsem{d}$, we have that $\dsem{c_u \poi d \poi c_v}=\emptyset$. 
Therefore, for the context $C[-]=c_u \poi - \poi c_v\text{,}$ we have that $\dsem{C[c]} \neq \dsem{C[d]}$.
\end{proof}
The proof of our main result is now straightforward.
\begin{proof}[Proof of Theorem \ref{thm:fullabstraction}] 
	Let us first suppose that $c \eqAIH d$. Then $\dsem{C[c]}=\dsem{C[d]}$ for  all contexts $C[-]$, since $\dsem{\cdot}$ is a morphism of props. By Corollary \ref{cor:dsemexperiments}, it follows immediately that $C[c]\infcomp$ if and only if $C[d]\infcomp$, namely $c \Obseq d$.

Conversely, suppose that, for all $C[-]$, $C[c]\infcomp$ iff $C[d]\infcomp$. Again by Corollary \ref{cor:dsemexperiments}, we have that $\dsem{C[c]}=\dsem{C[d]}$. We conclude by invoking Proposition~\ref{lemma:contextequivimpliesdenequiv}.
\end{proof}

\section{Functional Behaviour and Signal Flow Graphs}\label{sec:sfg}

%There is a sub-prop $\SFGform$ of $\CD$ whose circuits adhere to the classical notion of \emph{signal flow graph} (see \emph{e.g.} \cite{mason1953feedback}). 

There is a sub-prop $\SFGform$ of $\CD$ of classical \emph{signal flow graphs} (see \emph{e.g.} \cite{mason1953feedback}). 
Here 
%Circuits of $\SFGform$ are precisely those of $\CD$ in which 
signal flows left-to-right, %but 
 possibly featuring %having 
\emph{feedback loops}, provided that these go through at least one register. Feedback can be captured algebraically via
% for each $n$ and $m$, 
an operation
$\Tr{}(\cdot) \: \CD[n+1,m+1] \to \CD[n,m]$ taking $c \: n+1 \to m+1$ to:
\ctikzfig{trace-form}
Following~\cite{Bonchi2015}, let us call $\FC$ the free sub-prop of $\CD$ of circuits built from~\eqref{eq:SFcalculusSyntax3} and the generators of \eqref{eq:SFcalculusSyntax1}, without $\One$. Then $\SFGform$ is defined as the closure of $\FC$ under $\Tr{}(\cdot)$. %We will call its morphisms signal flow graphs. 
For instance, the circuit of Example~\ref{exm:opsem} is in $\SFGform$.

Signal flow graphs are intimately connected to the executability of circuits. In general, the rules of Figure~\ref{fig:operationalSemantics} %describe the step-by-step evolution of a state machines without relying on 
do not assume a fixed flow orientation.
%, and thus are not executable for all circuits.
 %Some circuits of $\CD$ do not have a clear directionality, resulting in their
As a result, some circuits in $\CD$ 
%have an  operational semantics that is 
are not executable as \emph{functional input-output} systems, as we have demonstrated with $\onecoreg{}$, $
\tikzset{x=1em, y=2.1ex}
}
\tikzset{x=1em, y=1.5ex}
$ and $\cospanregs{}{}$ of Examples~\ref{example:1overX}-\ref{example:spancospan}. % have shown.
Notice that none of these are signal flow graphs. In fact, the circuits of $\SFGform$ 
%constitutes a subprop of circuits without
do not have pathological behaviour, as we shall state more precisely in Proposition~\ref{prop:sfg-infinite}.

At the denotational level, signal flow graphs correspond precisely to \emph{rational} functional behaviours, that is, matrices whose coefficients are in the ring $\ratio$ of \emph{rational fractions} (see Section~\ref{sec:frpoly-traj}).
%i.e, fractions of polynomials $\frac{k_0+k_1x+k_2x^2 \dots + k_nx^n}{l_0+l_1x+l_2x^2 \dots + l_nx^n}$ where $l_0\neq 0$. 
We call such matrices, rational matrices. One may check that the semantics of a signal flow graph $c \colon\sort{n}{m}$ is always of the form $\dsem{c} = \{(v, A \cdot v) \mid v \in \frpoly^{n} \}$, for some $m \times n$ rational matrix $A$. Conversely, all relations that are the graph of rational matrices can be expressed as signal flow graphs.
\begin{proposition}\label{prop:sfg-rational}
Given $c\colon\sort{n}{m}$, we have $\dsem{c}=\{(p, A\cdot p)\mid p\in\frpoly^n\}$ for some rational $m\times n$ matrix $A$ iff there exists a signal flow graph $f$, i.e., a circuit $f\colon\sort{n}{m}$ of $\SFGform$, such that $\dsem{f}=\dsem{c}$.
\end{proposition}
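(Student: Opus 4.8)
The plan is to establish the two implications separately. The right-to-left direction (existence of a signal flow graph $f$ with $\dsem{f}=\dsem{c}$ $\Rightarrow$ rational-matrix form) is the routine one, and I would prove it by structural induction on the construction of $\SFGform$ as the closure of $\FC$ under $\Tr{}(\cdot)$. Since $\dsem{f}=\dsem{c}$, it suffices to show that $\dsem{f}$ is the graph of a rational matrix for every $f\in\SFGform$. For the generators of $\FC$ this is immediate: the copy, add, delete, zero, identity and symmetry generators have semantics given by $\{0,1\}$-matrices; each scalar $r$ contributes a constant, hence rational, entry; and the register contributes the entry $x=\frac{x}{1}\in\ratio$. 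Sequential composition multiplies the associated matrices and monoidal product places them in block-diagonal form, and since $\ratio$ is a ring these operations preserve rationality. In particular all these behaviours are functional (total and single-valued), so that the inductive hypothesis $\dsem{\,\cdot\,}=\{(v,A\cdot v)\}$ is maintained.

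The only genuinely interesting inductive step is the trace. Writing the matrix of $c\colon n{+}1\to m{+}1$ in block form $A=\left(\begin{smallmatrix}A_{11}&A_{12}\\ A_{21}&A_{22}\end{smallmatrix}\right)$, with $A_{22}$ the scalar coefficient of the fed-back wire, the feedback identity $s=A_{21}v+A_{22}s$ solves to $\dsem{\Tr{}(c)}=\{(v,(A_{11}+A_{12}(1-A_{22})^{-1}A_{21})\cdot v)\}$ whenever $1-A_{22}$ is invertible. Here I would invoke the defining constraint that feedback loops pass through at least one register: the delay on the feedback path forces the self-loop coefficient $A_{22}$ to have vanishing constant term, so that $1-A_{22}$ has nonzero constant term and is therefore invertible in $\ratio$ with rational inverse. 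Closure of $\ratio$ under sums and products then keeps $A_{11}+A_{12}(1-A_{22})^{-1}A_{21}$ rational, completing the induction.

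For the converse, given an $m\times n$ rational matrix $A$, I would construct a circuit $f\in\SFGform$ realising $\{(p,A\cdot p)\}$. The assembly of $A$ from its entries uses only the (co)monoid structure available in $\FC$: copy each of the $n$ inputs as many times as needed, route the copies through scalar realisers for the individual entries, and sum the contributions to each of the $m$ outputs with adders (discarding and feeding in zeros where an entry is zero). Since $\FC\subseteq\SFGform$ and $\SFGform$ is a sub-prop closed under $\poi$ and $\tns$, this keeps us inside $\SFGform$, so it remains to realise a single rational fraction $q=\frac{p(x)}{l(x)}$ with $l(0)\neq 0$. Normalising $l(0)=1$ I would write $l=1-x\,g(x)$ for a polynomial $g$, so that $q=p\cdot\frac{1}{1-xg}$. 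The polynomials $p$ and $g$ are realisable in $\FC$ from scalars, registers, copy and add, while the factor $\frac{1}{1-xg}$ is obtained as a trace whose feedback wire runs through a register (realising $x$) and then through the circuit for $g$ before being re-added to the input; recognising $\frac{1}{1-xg}=\sum_{k\geq 0}(xg)^k$ as the behaviour of this loop gives the claimed denotation. Because the feedback genuinely passes through a register, the resulting circuit is a bona fide element of $\SFGform$.

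I expect the main obstacle to be this realisability direction, and specifically the bookkeeping needed to (i) guarantee that every feedback loop introduced in the construction really does pass through a register, so that the built circuit is a legitimate signal flow graph, and (ii) verify that the trace-based realisation of $\frac{1}{1-xg}$ computes exactly $q$. The polynomial-matrix realisability underlying the assembly in step (i) is essentially the fullness result for the linear calculus, which may be imported from \cite{Bonchi2015,Bonchi2014b}; the remaining work is the reduction to scalar fractions and the explicit feedback construction for denominators.
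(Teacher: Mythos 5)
Your proposal is correct, but note that the paper does not actually prove this proposition: it dismisses it as ``a folklore result in control theory'' and defers entirely to Rutten's work on rational streams and to \cite[Section 7]{BonchiSZ17} for the translation between rational matrices and circuits of $\SFGform$. Your blind reconstruction is, in substance, the standard argument found in those references. For the direction $\SFGform \Rightarrow$ rational matrix, your structural induction is sound, and you correctly isolate the crux: guardedness of feedback makes the loop gain have vanishing constant term, so that the relevant $1-(\text{loop gain})$ is invertible in $\ratio$. One bookkeeping remark: in the paper's definition the register sits inside the trace operator $\Tr{}(\cdot)$ rather than inside the traced circuit $c$, so the loop gain is $x\cdot A_{22}$ with $A_{22}$ the corner entry of $c$'s matrix (which need not itself have zero constant term); your phrasing folds the register into $A_{22}$, which is an equivalent convention and does not affect the conclusion. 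For the converse, your assembly of the matrix from entry circuits plus the realisation of a rational fraction $q=p\cdot\frac{1}{1-xg}$ via a feedback loop through a register and then $g$ is exactly the construction of the cited works. What your version buys is self-containedness and an explicit display of where guardedness is used; what the paper's citation buys is brevity.
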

\begin{proof}
This is a folklore result in control theory which can be found in~\cite{Rutten08_rationalstreamscoalgebraically}. The details of the translation between rational matrices and circuits of $\SFGform$ can be found in~\cite[Section 7]{BonchiSZ17}. 
%Conceptually it amounts to (un)folding feedback loops into multiplication by the rational fraction $\frac{1}{1-x}$ (which represents the power series $1+x+x^2+\dots$).
\end{proof}

The following gives an alternative characterisation of rational matrices---and therefore, by Proposition~\ref{prop:sfg-rational}, of the behaviour of signal flow graphs---that clarifies their role as realisations of circuits.
\begin{proposition}\label{prop:rational-map}
An $m\times n$ matrix is rational iff $A\cdot r\in\ratio^m$ for all $r\in\ratio^n$. 
\end{proposition}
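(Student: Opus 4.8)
The plan is to prove the two implications separately, in both cases exploiting the fact---already recorded in Section~\ref{sec:frpoly-traj}---that the rational fractions $\ratio$ form a \emph{subring} of the field $\frpoly$, and in particular contain $0$ and $1$ and are closed under sums and products.

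First I would treat the forward direction. Assuming $A$ is rational, i.e.\ every entry $A_{ij}$ lies in $\ratio$, I take an arbitrary $r\in\ratio^n$ and note that each component $(A\cdot r)_i=\sum_{j}A_{ij}\,r_j$ is a finite sum of products of elements of $\ratio$. Closure of $\ratio$ under these operations then gives $(A\cdot r)_i\in\ratio$ for all $i$, hence $A\cdot r\in\ratio^m$. This direction needs nothing beyond the ring structure of $\ratio$.

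For the converse I would argue by probing $A$ with the standard basis vectors. The crucial observation is that each $e_j\in\frpoly^n$ actually lies in $\ratio^n$, since its entries are only $0$ and $1$, both elements of $\ratio$. Applying the hypothesis $A\cdot r\in\ratio^m$ to $r=e_j$ shows that $A\cdot e_j$---which is exactly the $j$-th column of $A$---belongs to $\ratio^m$. Ranging over $j=1,\dots,n$ then establishes that every entry of $A$ is in $\ratio$, i.e.\ $A$ is rational.

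I do not expect a genuine obstacle here: both directions reduce to elementary properties of $\ratio$ as a subring of $\frpoly$. The one subtlety worth flagging is that the argument relies on $\ratio$ (rather than merely $\frpoly$) containing $0$ and $1$, so that the probe vectors $e_j$ are themselves rational; this is precisely what legitimises testing $A$ one column at a time.
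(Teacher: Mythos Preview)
Your proposal is correct and matches the paper's proof essentially line for line: both directions use that $\ratio$ is a subring of $\frpoly$, and the converse is obtained by probing $A$ with the standard basis vectors $e_j\in\ratio^n$ to extract each column. The only cosmetic difference is that the paper phrases the converse as a proof by contradiction (assume some column has a non-rational entry), whereas you argue directly, but the underlying idea is identical.
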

\begin{proof}In Appendix~\ref{app:proofs}.\end{proof}
Proposition~\ref{prop:rational-map} is another guarantee of good behaviour---it justifies the name of inputs (resp. outputs) for the left (resp. right) ports of signal flow graphs. Recall from Section~\ref{sec:frpoly-traj} that rational fractions can be mapped to Laurent series of nonnegative degree, i.e., to plain power series. Operationally, these correspond to trajectories that start after $t=0$. Proposition~\ref{prop:rational-map} guarantees that any trajectory of a signal flow graph whose first nonzero  value on the left appears at time $t=0$, will not have nonzero values on the right starting before time $t=0$. In other words, signal flow graphs can be seen as processing a stream of values from left to right. As a result, their ports can be clearly partitioned into inputs and outputs.

But the circuits of $\SFGform$ are too restrictive for our purposes. For example,  $
\tikzset{x=1em, y=2.1ex}
\InputIfFileExists{x+1.tikz}{}{\input{./tikz/x+1.tikz}}
\tikzset{x=1em, y=1.5ex}
$ 
can also be seen to realise a functional behaviour transforming inputs on the left into outputs on the right yet it is not in $\SFGform$. Its behaviour is no longer linear, but affine. Hence, we need to extend signal flow graphs to include functional affine behaviour. The following definition does just that. %, by adding $\One$ to the definition of $\SFGform$.
\begin{definition}\label{def:ASFG}
Let $\ASFG$ be the sub-prop of $\ACD$ obtained from \emph{all} the generators in \eqref{eq:SFcalculusSyntax1}, closed under $\Tr{}(\cdot)$. Its circuits are called \emph{affine signal flow graphs}.
\end{definition}
%Note that by this definition, we only allow $\One$ as the input of signal flow graphs. 
%So, 
As before, none of $\onecoreg{}$, $
\tikzset{x=1em, y=2.1ex}
}
\tikzset{x=1em, y=1.5ex}
$ and $\cospanregs{}{}$ from Examples~\ref{example:1overX}-\ref{example:spancospan} are affine signal flow graphs. In fact, $\ASFG$ rules out pathological behaviour: all computations can be extended to be infinite, or in other words, do not get stuck.
\begin{proposition}\label{prop:sfg-infinite}
Given an affine  signal flow graph $f$, for every computation
$$\context{t} f_0 \dtrans{u_t}{v_t}f_{1}\dtrans{u_{t+1}}{v_{p+1}}\dots f_{n} $$
there exists a trajectory $\sigma\in \osem{c}$ such that $\sigma(i) = (u_i,v_i)$ for $t\leq i\leq t+n$.
\end{proposition}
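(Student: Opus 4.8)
The plan is to reduce the statement to a single structural property of affine signal flow graphs: that they are \emph{input-receptive} (indeed deterministic) in the left-to-right direction. Concretely, I would prove the following lemma by structural induction on the construction of $f\in\ASFG$ in Definition~\ref{def:ASFG}: for every state $f'$ of $f$ (that is, $f$ with some register contents), every time $t\in\Z$, and every left input $u\in\field^n$, there is a transition $\fullcontext{t}{f'}\dtrans{u}{v}\fullcontext{t+1}{f''}$ with $v$ and $f''$ uniquely determined by $u$, $f'$ and $t$. Granting this lemma, the proposition is immediate: given the finite computation ending in state $f_n$ at time $t+n$, one applies receptivity repeatedly (say with left input $0$, or with any chosen values) to obtain a transition at every later time, thereby extending the given computation to an infinite one. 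By the translation~\eqref{eq:compt-traj} this infinite computation determines a trajectory $\sigma\in\osem{f}$, and by construction $\sigma$ agrees with the labels $(u_i,v_i)$ of the given computation on its finite index range, as required.

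For the base cases one inspects the rules for the generators of row~\eqref{eq:SFcalculusSyntax1} in Figure~\ref{fig:operationalSemantics}: $\BcomultT$, $\BcounitT$, $\WmultT$, $\scalarT$ and $\circuitXT$ each accept an \emph{arbitrary} left input and emit a uniquely determined right value (updating the stored value, in the case of the register), while $\WunitT$ and $\One$ have empty left interface and emit a value fixed by the current time. In every case the transition exists and is unique. The cases of $\tns$ and $\poi$ are routine: for $c\tns d$ the two halves of the input are handled independently; for $c\poi d$ one uses receptivity of $c$ on $u$ to produce a value $v$ on the shared interface, then receptivity of $d$ on $v$ to produce the output, so that both premises of the structural rule for $\poi$ are met.

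The crucial and only delicate case is the trace $\Tr{}(c)$, and I expect this to be the main obstacle. Here the extra right port of $c$ is fed back to its extra left port \emph{through a register}, exactly as the signal-flow-graph discipline ``feedback through at least one register'' requires. The register is what breaks the apparent circularity: at each time step it presents its \emph{currently stored} value as the feedback input to $c$ --- a value fixed by the previous step, not by the output being computed now --- and only afterwards stores the value produced on $c$'s feedback port. Thus, given the external input $u$ together with the register's state, the combined left input to $c$ is determined, receptivity of $c$ yields a unique right output and next state, and the external output and the updated register value are read off from it; hence $\Tr{}(c)$ is again receptive and deterministic. It is precisely here that the register hypothesis is indispensable: without it one would face an instantaneous fixpoint equation on the feedback wire that need not be solvable, which is the pathology exhibited by $\onecoreg{}$ and $\cospanregs{}{}$ in Examples~\ref{example:1overX}--\ref{example:spancospan}. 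Note finally that those pathological circuits rely on the reflected generators of row~\eqref{eq:SFcalculusSyntax2} (notably the right-to-left register $\circuitXopT$), which are excluded from $\ASFG$; this exclusion is exactly what makes the receptivity lemma --- and with it the proposition --- hold.
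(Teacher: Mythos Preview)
Your proposal is correct and follows the same route as the paper, which proves the proposition ``by induction on the structure of affine signal flow graphs'' without further detail. Your receptivity lemma is exactly the invariant one needs to carry through that induction, and your analysis of the trace case---where the mandatory register decouples the feedback value from the output being computed---is the heart of the matter that the paper leaves implicit.
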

\begin{proof}
By induction on the structure of affine signal flow graphs.
\end{proof}

If $\SFGform$ circuits correspond precisely to $\ratio$-matrices, those of $\ASFG$ correspond precisely to $\ratio$-affine transformations.
\begin{definition}
A map $f:\;\frpoly^n\to \frpoly^m$ is an \emph{affine  map} if there exists an $m\times n$ matrix $A$  and $b\in\frpoly^m$ such that $f(p) = A\cdot p+b$ for all $p\in\frpoly^n$. We call the pair $(A,b)$ the representation of $f$.
\end{definition}
The notion of rational affine map is a straightforward extension of the linear case and so is the characterisation in terms of rational input-output behaviour.
\begin{definition}
An affine map $f:\; p\mapsto A\cdot p+b$ is \emph{rational} if $A$ and $b$ have coefficients in $\ratio$.
\end{definition}
\begin{proposition}\label{prop:affine-rational}
An affine map $f:\;\frpoly^n\to \frpoly^m$ is rational iff $f(r)\in\ratio^m$ for all $r\in\ratio^n$. 
\end{proposition}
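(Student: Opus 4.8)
An affine map $f:\;\frpoly^n\to \frpoly^m$, $f:\; p\mapsto A\cdot p+b$, is rational iff $f(r)\in\ratio^m$ for all $r\in\ratio^n$.

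The plan is to reduce the affine statement to the already-available linear characterisation of Proposition~\ref{prop:rational-map}, which says that a matrix $A$ is rational iff $A\cdot r\in\ratio^m$ for all $r\in\ratio^n$. The key observation is that rationality of the pair $(A,b)$ decouples: $f$ is rational precisely when $A$ is a rational matrix \emph{and} $b\in\ratio^m$, and these two conditions can be tested independently by exploiting the freedom to choose the input $r$.

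\medskip

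\begin{proof}
\textbf{($\Rightarrow$)} Suppose $f$ is rational, i.e. $A$ has coefficients in $\ratio$ and $b\in\ratio^m$. Take any $r\in\ratio^n$. Since $A$ is a rational matrix, Proposition~\ref{prop:rational-map} gives $A\cdot r\in\ratio^m$. As $\ratio$ is a ring (closed under addition, see Section~\ref{sec:frpoly-traj}) and $b\in\ratio^m$, we conclude $f(r)=A\cdot r+b\in\ratio^m$.

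\textbf{($\Leftarrow$)} Suppose $f(r)\in\ratio^m$ for all $r\in\ratio^n$. First, taking $r=0\in\ratio^n$ (the zero vector is rational since $0\in\ratio$) yields $b=f(0)=A\cdot 0+b\in\ratio^m$; hence $b$ has rational coefficients. Now for an arbitrary $r\in\ratio^n$ we have $A\cdot r=f(r)-b$, which is a difference of two vectors in $\ratio^m$ and therefore lies in $\ratio^m$ again because $\ratio$ is closed under subtraction. Thus $A\cdot r\in\ratio^m$ for all $r\in\ratio^n$, so by Proposition~\ref{prop:rational-map} the matrix $A$ is rational. Together with $b\in\ratio^m$, this shows that $f$ is rational.
\end{proof}

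\medskip

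No step here presents a genuine obstacle: the whole content is packaged in Proposition~\ref{prop:rational-map}, and the only work is to separate the constant part $b$ from the linear part $A$ by evaluating at $r=0$, then subtracting $b$ to recover the linear test. The one point worth stating explicitly is that $\ratio$ is a ring, so that $\ratio^m$ is closed under the additions and subtractions used above; this is exactly the fact recalled in Section~\ref{sec:frpoly-traj} (rational fractions form a ring $\ratio$).
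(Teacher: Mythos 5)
Your proof is correct and follows essentially the same route as the paper's: the forward direction uses that $\ratio$ is a ring, and the converse evaluates at $r=0$ to extract $b$ and then reduces the question to Proposition~\ref{prop:rational-map}. The only cosmetic difference is that you invoke Proposition~\ref{prop:rational-map} as a black box (after subtracting $b$), whereas the paper re-runs its basis-vector argument; the content is identical.
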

\begin{proof}In Appendix~\ref{app:proofs}.\end{proof}
The following extends the correspondence of Proposition~\ref{prop:sfg-rational}, showing that $\ASFG$ is the rightful affine heir of $\SFGform$.
\begin{proposition}\label{prop:asfg-rational}
Given $c\colon\sort{n}{m}$, we have $\dsem{c}=\{(p, f(p))\mid p\in\frpoly^n\}$ for some rational affine map $f$ iff there exists an affine signal flow graph $g$, i.e., a circuit $g\colon\sort{n}{m}$ of $\ASFG$, such that $\dsem{g}=\dsem{c}$.
\end{proposition}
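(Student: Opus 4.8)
The proof of Proposition~\ref{prop:asfg-rational} will follow the same two-directional strategy as Proposition~\ref{prop:sfg-rational}, lifting the linear result to the affine setting by isolating the ``constant part'' $b$ from the linear part $A$. The key structural observation is that an affine signal flow graph is built from the linear generators of~\eqref{eq:SFcalculusSyntax1} together with the single affine generator $\One$, all closed under $\Tr{}(\cdot)$; so a natural plan is to separate every use of $\One$ from the underlying purely-linear circuitry.

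\textbf{($\Leftarrow$)} Suppose $g$ is an affine signal flow graph with $\dsem{g}=\dsem{c}$. First I would argue that $\dsem{g}$ is the graph of an affine map. Since $g$ is built from the functional generators of~\eqref{eq:SFcalculusSyntax1} closed under trace, a straightforward structural induction (mirroring the computation underlying Proposition~\ref{prop:sfg-infinite}) shows that for each input $p\in\frpoly^n$ there is a unique output, so $\dsem{g}=\{(p,f(p))\mid p\in\frpoly^n\}$ for some $f$. To see that $f$ is affine, I would push every occurrence of $\One$ to the boundary: using the axioms of $\AIH$, each $\One$ can be treated as a constant source feeding into an otherwise linear graph, so that $g$ is equated with a circuit of the form $f_0(p) = A\cdot p + b$ where $A$ arises from the linear skeleton (a signal flow graph in $\SFGform$, hence rational by Proposition~\ref{prop:sfg-rational}) and $b$ is the image of the constant $1$'s under that same skeleton, hence also has rational coefficients. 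Rationality of $A$ and $b$ then follows, giving that $f$ is a rational affine map via Proposition~\ref{prop:affine-rational}.

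\textbf{($\Rightarrow$)} Conversely, suppose $\dsem{c}=\{(p,f(p))\mid p\in\frpoly^n\}$ with $f(p)=A\cdot p+b$ rational, i.e.\ $A$ and $b$ have coefficients in $\ratio$. I would realise $f$ as a circuit of $\ASFG$ by building the linear and affine parts separately and summing them. For the linear part, Proposition~\ref{prop:sfg-rational} gives a signal flow graph $f_A\in\SFGform\subseteq\ASFG$ with $\dsem{f_A}=\{(p,A\cdot p)\mid p\in\frpoly^n\}$. For the constant part $b\in\ratio^m$, each component $b_j\in\ratio$ is a rational fraction expressible via $\One$ composed with a signal flow graph (this is the affine analogue of Lemma~\ref{prop:laurentCircuit}, but landing in $\ASFG$ because $b_j$ is rational, so the realising circuit lies in the trace closure of~\eqref{eq:SFcalculusSyntax1}); tensoring these yields an affine signal flow graph $c_b\in\ASFG[0,m]$ with $\dsem{c_b}=\{(\bullet,b)\}$. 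Finally I would combine $f_A$ and $c_b$ using the (white) addition structure: post-composing $f_A$ with componentwise adders fed by $c_b$ gives a circuit $g$ with $\dsem{g}=\{(p,A\cdot p+b)\mid p\in\frpoly^n\}=\dsem{c}$, and since this construction only uses generators of~\eqref{eq:SFcalculusSyntax1} and $\Tr{}(\cdot)$, $g$ is an affine signal flow graph.

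The main obstacle I anticipate is the forward direction's claim that the constant realising $b$ can be taken inside $\ASFG$ rather than merely inside $\ACD$: one must check that the circuit realising each rational $b_j$ does not require the full expressiveness of $\frpoly$ (which would force a non-functional, non-$\ASFG$ gadget as in Example~\ref{example:1overX}), but only the rational fragment, so that it decomposes as $\One$ followed by a genuine signal flow graph. Dually, in the ($\Leftarrow$) direction the delicate point is justifying that every $\One$ in $g$ can be rewritten, up to $\AIH$, into this constant-source-plus-linear-skeleton normal form while preserving functionality; this is where the rationality constraint (equivalently, the absence of the pathological behaviour ruled out by Proposition~\ref{prop:sfg-infinite}) does the real work.
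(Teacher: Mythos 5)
Your proposal is correct and follows essentially the same route as the paper: for the construction direction, the paper likewise invokes Proposition~\ref{prop:sfg-rational} to obtain a signal flow graph $c_A$ for the linear part, realises $b$ as a signal flow graph fed by $\One$ (the paper uses a single circuit of sort $\sort{1}{m}$ where you tensor componentwise circuits---an inessential difference), and combines the two with the white addition structure; for the converse, the paper also appeals to a structural induction showing that denotations of $\ASFG$ circuits are graphs of rational affine maps. The extra detail you give for the converse (pushing $\One$ to the boundary to isolate the linear skeleton) is a reasonable way to flesh out what the paper leaves as ``straightforward.''
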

\begin{proof}
Let $f$ be given by $p\mapsto Ap+b$ for some rational $m\times n$ matrix $A$ and vector $b\in\ratio^m$. By Proposition~\ref{prop:sfg-rational}, we can find a circuit $c_A$ of $\SFGform$ such that
\smallskip

\noindent 
\begin{minipage}{.65\textwidth}
 $\dsem{c_A}=\{(p, A\cdot p)\mid p\in\frpoly\}$. Similarly, we can represent $b$ as a signal flow graph $c_b$ of sort $\sort{1}{m}$. Then, the circuit on the right is clearly in $\ASFG$ and verifies $\dsem{c} = \{(p,Ap+b)\mid p\in\frpoly\}$ as required. 
\end{minipage}
\begin{minipage}{.35\textwidth}
\begin{center}
$c\;:=\;
\tikzset{x=1em, y=2.1ex}
\InputIfFileExists{affine-sfg.tikz}{}{\input{./tikz/affine-sfg.tikz}}
\tikzset{x=1em, y=1.5ex}
$
\end{center}
\end{minipage}

For the converse direction it is straightforward to check by structural induction that the denotation of  affine signal flow graphs is the graph (in the set-theoretic sense of pairs of values) of some rational affine map.
\end{proof}

\section{Realisability}\label{sec:realisability}

In the previous section we gave a restricted class of morphisms with good behavioural properties. We may wonder how much of $\ACD$ we can capture with this restricted class. The answer is, in a precise sense: most of it. 

Surprisingly, the behaviours realisable in $\CD$---the purely linear fragment---are not more expressive. In fact, from an operational (or denotational, by full abstraction) point of view, $\CD$ is nothing more than jumbled up version of $\SFGform$. Indeed, it turns out that $\CD$ enjoys a \emph{realisability} theorem: any circuit $c$ of $\CD$ can be associated with one of $\SFGform$, that implements or realises the behaviour of $c$ into an executable form.
\smallskip

\noindent 
\begin{minipage}{.7\textwidth}
But the corresponding realisation may not flow neatly from left to right like  signal flow graphs do---its inputs and outputs may have been moved from one side to the other. Consider for example, the circuit on the right
\end{minipage}
\begin{minipage}{.3\textwidth}
$\quad
\tikzset{x=1em, y=2.1ex}
\InputIfFileExists{sfg-jumbled.tikz}{}{\input{./tikz/sfg-jumbled.tikz}}
\tikzset{x=1em, y=1.5ex}
$
\end{minipage}
\smallskip

\noindent It does not belong to $\SFGform$ but it can be read as a signal flow graph with an input that has been bent and moved to the bottom right. The behaviour it realises can therefore executed by rewiring this port to obtain a signal flow graph:
\[
\tikzset{x=1em, y=2.1ex}
\InputIfFileExists{sfg-rewired.tikz}{}{\input{./tikz/sfg-rewired.tikz}}
\tikzset{x=1em, y=1.5ex}
 \quad\eqAIH\quad 
\tikzset{x=1em, y=2.1ex}
\InputIfFileExists{sfg-rewired-1.tikz}{}{\input{./tikz/sfg-rewired-1.tikz}}
\tikzset{x=1em, y=1.5ex}
\]
We will not make this notion of rewiring precise here but refer the reader to~\cite{Bonchi2015} for the details. The intuition is simply that a rewiring partitions the ports of a circuit into two sets---that we call inputs and outputs---and uses $
\tikzset{x=1em, y=2.1ex}
\InputIfFileExists{cup.tikz}{}{\input{./tikz/cup.tikz}}
\tikzset{x=1em, y=1.5ex}
$ or $
\tikzset{x=1em, y=2.1ex}
\InputIfFileExists{cap.tikz}{}{\input{./tikz/cap.tikz}}
\tikzset{x=1em, y=1.5ex}
$ to bend input ports to the left and and output ports to the right. The realisability theorem then states that we can always recover a (not necessarily unique) signal flow graph from any circuit by performing these operations.
\begin{theorem}{\cite[Theorem~5]{Bonchi2015}}\label{thm:realisability}
Every circuit in $\CD$ is equivalent to the rewiring of a signal flow graph, called its \emph{realisation}.
\end{theorem}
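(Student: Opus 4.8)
The plan is to reduce the statement to a purely linear-algebraic fact about denotations. Since $\CD$ is a sound and complete presentation of the linear relations over $\frpoly$ (fullness of $\dsemO$ on $\CD$~\cite{Bonchi2014b}, together with the completeness of $\AIH$ recorded in the Proposition following Definition~\ref{def:SemIB}), and since by Proposition~\ref{prop:sfg-rational} the signal flow graphs of $\SFGform$ denote exactly the graphs $\{(v,A\cdot v)\}$ of \emph{rational} matrices $A$ (entries in $\ratio$), it suffices to prove the following. For every linear subspace $L\subseteq\frpoly^{n+m}$ there is a partition of the $n+m$ coordinates into a set $I$ of \emph{inputs} and a set $O$ of \emph{outputs} such that $L$ is the graph of a rational linear map $\frpoly^{I}\to\frpoly^{O}$. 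Given such a partition, the graph realises as a signal flow graph $f\colon\sort{|I|}{|O|}$, and bending the $I$-ports to the left and the $O$-ports to the right with $\tikzfig{cup}$ and $\tikzfig{cap}$ recovers $L$ in its original $\sort{n}{m}$ arrangement. This bending is exactly a rewiring, and the snake (yanking) equations, which hold in $\AIH$, guarantee that the rewiring of $f$ denotes $L=\dsem{c}$; completeness then yields that $c$ is $\eqAIH$-equal to the rewiring of $f$.

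The heart of the argument is the existence of a partition for which the functional matrix is \emph{rational} rather than merely an element of $\Mat{\frpoly}$. Here I would exploit that $\ratio$ is a discrete valuation ring with uniformiser $x$, maximal ideal $(x)$, residue field $\field$, and fraction field $\frpoly=\ratio[x^{-1}]$. Represent $L$ as the row space of a full-rank $d\times(n+m)$ matrix $G$ over $\frpoly$. After clearing denominators and dividing out by $x$ as many times as possible, I may assume $G$ has entries in $\ratio$ and that its reduction $\overline{G}$ modulo $x$, a matrix over $\field$, still has rank $d$ (a \emph{primitive} lattice basis).

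Now choose $d$ columns $B$ that are linearly independent in $\overline{G}$, which is possible since $\mathrm{rank}(\overline{G})=d$. The corresponding $d\times d$ submatrix $G_B$ is invertible modulo $x$, hence invertible over the \emph{local} ring $\ratio$, so $G_B^{-1}$ has entries in $\ratio$. Consequently $G_B^{-1}G$ lies in $\Mat{\ratio}$, carries the identity on the columns $B$, and expresses the remaining coordinates $O=\{1,\dots,n+m\}\setminus B$ as a rational linear function of the $B$-coordinates; taking $I=B$ gives $L=\{y\in\frpoly^{n+m}\mid y_{O}=M\,y_{I}\}$ with $M$ rational. This is the main obstacle: a careless choice of free coordinates produces a transition matrix in $\Mat{\frpoly}$ that may need a factor $x^{-1}$ (the converse of a register), which is \emph{not} realisable inside $\SFGform$. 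Choosing $B$ via a basis modulo $x$ and invoking invertibility over the local ring $\ratio$ is precisely what forces the forward behaviour of $f$ to be rational, so that the only reversals of flow — and hence the only way a factor $x^{-1}$ can surface at the original interface — occur through the bending of register wires.

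Finally I would assemble the pieces. By Proposition~\ref{prop:sfg-rational} the rational matrix $M$ is the denotation of a signal flow graph $f\colon\sort{|I|}{|O|}$ in $\SFGform$. Rewiring $f$ — using cups and caps to move the ports of $I$ to the left and those of $O$ to the right, thereby restoring the original $\sort{n}{m}$ interface — produces a circuit whose denotation is exactly $L=\dsem{c}$. Since the snake equations are theorems of $\AIH$, this rewiring of $f$ is $\eqAIH$-equal to $c$, which is the desired realisation.
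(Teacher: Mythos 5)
First, note that this paper contains no proof of Theorem~\ref{thm:realisability}: it is imported verbatim by citation from~\cite{Bonchi2015}, so your attempt has to be judged on its own terms rather than against an in-paper argument. Your overall reduction is sound and is the natural route: by completeness and Proposition~\ref{prop:sfg-rational} it suffices to show that every $d$-dimensional subspace $L\subseteq\frpoly^{n+m}$ admits a choice of $d$ coordinates $I$ such that $L$ is the graph of a \emph{rational} linear map from the $I$-coordinates to the remaining ones; the cup/cap rewiring and the snake equations (which do hold in $\AIH$ via the Frobenius structure) then transport this back to the original $\sort{n}{m}$ interface. Your column-selection step is also correct: once $G\in\Mat{\ratio}$ is a basis matrix whose reduction $\overline{G}$ modulo $x$ has rank $d$, choosing columns $B$ independent in $\overline{G}$ makes $\det G_B$ a unit of the local ring $\ratio$, so $G_B^{-1}G$ is rational and exhibits $L$ as the graph of a rational map with inputs $B$.

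There is, however, a genuine gap in how you produce such a $G$. You claim that ``clearing denominators and dividing out by $x$ as many times as possible'' yields a basis whose reduction mod $x$ still has rank $d$. Row-by-row scaling does not achieve this: primitivity of each row is not primitivity of the basis. Take $L=\frpoly^2$ presented by the rows of
\[
G=\begin{pmatrix}1 & 1\\ 1 & 1+x\end{pmatrix}.
\]
All entries already lie in $\ratio$, no row is divisible by $x$, so your procedure halts; yet $\overline{G}$ has rank $1<2$. What you need is a basis of the saturated lattice $L\cap\ratio^{n+m}$ (equivalently, a basis that extends to a basis of $\ratio^{n+m}$), and producing one requires row operations interleaved with division by $x$: whenever the rows become dependent modulo $x$, take a vanishing $\field$-linear combination, use a unit coefficient to replace one row by that combination divided by $x$, and repeat. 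This preserves the row space over $\frpoly$, keeps entries in $\ratio$, and terminates because each step strictly decreases the $x$-valuation of the gcd of the $d\times d$ minors (in the example: replace the second row by $(0,1)$). This is standard lattice theory over the discrete valuation ring $\ratio$---Smith normal form gives it at once---so the proof is repairable; but as written, the step that fails is precisely the one carrying the whole argument, namely the reason the transfer matrix lands in $\Mat{\ratio}$ rather than merely in $\Mat{\frpoly}$.
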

This theorem allows us to extend the notion of inputs and outputs to all circuits of $\CD$.
\begin{definition}
A port of a circuit $c$ of $\CD$ is an \emph{input} (resp. \emph{output}) port, if there exists a realisation for which it is an input (resp. output).
\end{definition}
Note that, since realisations are not necessarily unique, the same port can be both an input and an output. Then, the realisability theorem (Theorem~\ref{thm:realisability}) says that every port is always an input, an output or both (but never neither).

An output-only port is an output port that is not an input port. Similarly an input-only port in an input port that is not an output port.
\begin{example}
The left port of the register $\circuitX$ is input-only whereas its right port is output-only. In the identity wire, both ports are input and output ports. The single port of $\Wunit$ is output-only ; that of $\Bcounit$ is input-only.
\end{example}

While in the purely linear case, all behaviours are realisable, the general case of $\ACD$ is a bit more subtle. To make this precise, we can extend our definition of realisability to include affine signal flow graphs.
\begin{definition}
A circuit of $\ACD$ is \emph{realisable} if its ports can be rewired so that it is equivalent to a circuit of $\ASFG$.
\end{definition}
\begin{example}
$\One$ is realisable; $\boxtikzfig{one-coreg}{r}\,$ is not. 
\end{example}
Notice that Proposition~\ref{prop:asfg-rational}, gives the following equivalent semantic criterion for realisability. Realisable behaviours are precisely those that map rationals to rationals.
\begin{theorem}\label{thm:Filippo-realisability}
A circuit $c$ is realisable iff its ports can be partitioned into two sets, that we call inputs and outputs, such that the corresponding rewiring of $c$ is an affine rational map from inputs to outputs.
\end{theorem}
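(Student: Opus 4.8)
The plan is to reduce the statement to Proposition~\ref{prop:asfg-rational}, which already pins down the semantic signature of affine signal flow graphs: a circuit is equivalent to one of $\ASFG$ exactly when its denotation is the graph of a rational affine map. The crucial observation is that both sides of the claimed biconditional are existential statements over the \emph{same} data, namely a partition of the ports of $c$ into inputs and outputs --- equivalently, a rewiring $r$ that bends the input ports to the left and the output ports to the right, yielding a circuit $r(c)\colon\sort{n'}{m'}$ with $n'$ inputs and $m'$ outputs. I would therefore fix one such partition/rewiring and prove that the two conditions coincide for it; quantifying existentially over partitions then delivers the theorem.

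For the forward implication I would assume $c$ is realisable, so that some rewiring $r$ makes $r(c)\eqAIH g$ for an affine signal flow graph $g\in\ASFG$. Soundness of $\AIH$ yields $\dsem{r(c)}=\dsem{g}$, and Proposition~\ref{prop:asfg-rational}, applied to $g$, shows this relation is the graph $\{(p,f(p))\mid p\in\frpoly^{n'}\}$ of a rational affine map $f$; hence for the partition induced by $r$, the rewiring of $c$ is an affine rational map from inputs to outputs. For the converse I would assume a partition for which $\dsem{r(c)}$ is the graph of a rational affine map; Proposition~\ref{prop:asfg-rational} then furnishes a $g\in\ASFG$ with $\dsem{g}=\dsem{r(c)}$, and completeness of $\AIH$ upgrades this to $r(c)\eqAIH g$, exhibiting that $c$ can be rewired into a circuit equivalent to one of $\ASFG$, i.e.\ that $c$ is realisable.

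The one genuinely delicate ingredient --- and the step I expect to demand the most care --- is the bookkeeping around rewiring, which the paper keeps informal and defers to~\cite{Bonchi2015}. I would need that bending a port across sides via the cups $\tikzfig{cup}$ and caps $\tikzfig{cap}$ is an \emph{invertible} operation in $\AIH$ (the snake/yanking equations of the underlying Frobenius structure), so that a single partition simultaneously witnesses ``$c$ rewires to $\ASFG$'' and ``$r(c)$ is a rational affine map'', rather than these being merely correlated. With this identification of rewirings with port partitions in hand, the theorem falls out immediately from Proposition~\ref{prop:asfg-rational} together with soundness and completeness of the axiomatisation.
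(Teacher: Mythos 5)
Your proposal is correct and takes essentially the same route as the paper, which presents this theorem as an immediate consequence of Proposition~\ref{prop:asfg-rational} combined with the definition of realisability and the soundness/completeness of $\AIH$, exactly as you spell out. Your worry about invertibility of rewirings is in fact unnecessary: since both sides of the biconditional existentially quantify over the \emph{same} rewiring $r(c)$, the equivalence for each fixed rewiring is precisely Proposition~\ref{prop:asfg-rational} plus soundness and completeness, with no need to undo the bending of ports.
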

We offer another perspective on realisability below: realisable behaviours correspond precisely to those for which the $\One$ constants are connected to inputs of the underlying $\CD$-circuit.
First, notice that, since
\[
\tikzset{x=1em, y=2.1ex}
\InputIfFileExists{one-copy.tikz}{}{\input{./tikz/one-copy.tikz}}
\tikzset{x=1em, y=1.5ex}
 \quad \myeq{1-dup} \quad
\tikzset{x=1em, y=2.1ex}
\InputIfFileExists{one2.tikz}{}{\input{./tikz/one2.tikz}}
\tikzset{x=1em, y=1.5ex}
\quad \text{and}\quad  
\tikzset{x=1em, y=2.1ex}
\begin{tikzpicture}
	\begin{pgfonlayer}{nodelayer}
		\node [style=none] (0) at (-0.75, -0) {};
		\node [style=black] (1) at (0.5, -0) {};
		\node [style=none] (2) at (-0.75, -0.25) {};
		\node [style=none] (3) at (-0.75, 0.25) {};
	\end{pgfonlayer}
	\begin{pgfonlayer}{edgelayer}
		\draw (0.center) to (1);
		\draw (3.center) to (2.center);
	\end{pgfonlayer}
\end{tikzpicture}}
\tikzset{x=1em, y=1.5ex}
 \; \myeq{1-del} \quad
\tikzset{x=1em, y=2.1ex}
\InputIfFileExists{empty-diag.tikz}{}{\input{./tikz/empty-diag.tikz}}
\tikzset{x=1em, y=1.5ex}
\]
in $\AIH$, we can assume without loss of generality that each circuit contains exactly one $\One\,$.
\begin{proposition}\label{prop:single-foot}
Every circuit $c$ of $\ACD$ is equivalent to one with precisely one $\One$ and no $\coOne$.
\end{proposition}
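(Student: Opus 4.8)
The plan is to remove all $\coOne$ generators, then gather the remaining $\One$ generators and fuse them into a single one, using the two displayed equations (1-dup) and (1-del) together with the completeness of $\AIH$.

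\textbf{Eliminating $\coOne$.} The generator $\coOne$ is the mirror image of $\One$, and the black Frobenius structure supplies a cap $\Bmult\poi\Bcounit \colon 2\to 0$ with which we can bend it into a $\One$. Concretely I would establish $\coOne \eqAIH (\One\tns\id_1)\poi(\Bmult\poi\Bcounit)$: both sides denote the affine relation $\{(1,\bullet)\}$, so the equality holds in $\AIH$ by the soundness-and-completeness result for $\dsemO$. Rewriting each occurrence of $\coOne$ in this way yields a circuit, equal to $c$ in $\AIH$, whose only affine generator is $\One$. Crucially, the cap $\Bmult\poi\Bcounit$ lives entirely in $\CD$, so this step introduces neither a new $\One$ nor a new $\coOne$.

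\textbf{Collecting and fusing.} Since $\One$ has empty domain, it can be slid leftwards past any morphism of a symmetric monoidal category. Arguing by structural induction on the circuit---with the composition case handled by commuting the nullary $\One$'s of the second factor past the whole of the first---I would bring all $k$ occurrences of $\One$ to the left boundary simultaneously, obtaining $c\eqAIH (\One^{\tns k}\tns\id_n)\poi e$ with $e\colon k+n\to m$ a circuit of $\CD$. If $k\geq 1$, an easy induction on $k$ using (1-dup), i.e. $\One\poi\Bcomult \eqAIH \One\tns\One$, together with coassociativity of $\Bcomult$, gives $\One^{\tns k}\eqAIH \One\poi\delta_k$, where $\delta_k\colon 1\to k$ is the $k$-fold copy tree (a $\CD$-circuit); substituting leaves exactly one $\One$ and no $\coOne$. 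If $k=0$, the deletion law (1-del), $\One\poi\Bcounit \eqAIH \id_0$, lets me introduce one, since then $c\eqAIH (\One\poi\Bcounit)\tns c$.

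The main obstacle is the collecting step: one must justify rigorously that every $\One$ can be floated to a common leftmost layer. This is intuitively clear because $\One$ has no inputs, but a careful proof requires either a coherence/normal-form argument for the free symmetric monoidal category or an explicit induction whose composition and tensor cases commute nullary boxes past the rest of the diagram; since the footnote guarantees that $\ACD$ validates the symmetric monoidal axioms on the nose, only these axioms are needed here. By contrast, the elimination of $\coOne$ and the final fusion are routine, relying respectively on completeness and on the two given equations.
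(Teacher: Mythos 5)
Your proposal is correct and follows essentially the same route as the paper's own proof: eliminate $\coOne$ by bending it into $\One$ with the black cap (the paper invokes its axiom (co1) directly where you re-derive the same equation via completeness), then use the symmetric monoidal structure to slide the remaining $\One$'s around and apply (1-dup) and (1-del) to reduce their number to exactly one. The only difference is organizational---the paper peels off and merges one $\One$ at a time by induction on the number of occurrences, whereas you collect all of them at the left boundary first and then fuse them with a copy tree---but the key equations and the SMC-sliding idea are identical.
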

\begin{proof}In Appendix~\ref{app:proofs}.\end{proof}

For $c\colon\sort{n}{m}$ a circuit of $\ACD$, we will call $\hat{c}$ the circuit of $\CD$ of sort $\sort{n+1}{m}$ that one obtains by first transforming $c$ into an equivalent circuit with a single $\One$ and no $\coOne$ as above, then removing this $\One$, and replacing it by an identity wire that extends to the left boundary.
\begin{theorem}\label{thm:affine-realisability}
A circuit $c$ is realisable iff $\One$ is connected to an input port of $\hat{c}$.
\end{theorem}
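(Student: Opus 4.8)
The plan is to establish the two directions of the equivalence by connecting realisability to the structure of $\hat{c}$ via the characterisation already proved in Theorem~\ref{thm:Filippo-realisability}. The key observation is that $\hat{c}$ is a $\CD$-circuit, so the linear realisability theory of Section~\ref{sec:realisability} applies to it: in particular, by Theorem~\ref{thm:realisability} every port of $\hat{c}$ is an input, an output, or both, and the notions of input-only and output-only port make sense. The extra left port of $\hat{c}$ (the wire that replaced the unique $\One$) is the crucial one to track: saying that ``$\One$ is connected to an input port of $\hat{c}$'' means precisely that this distinguished port is an input port of the $\CD$-circuit $\hat{c}$.

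For the forward direction, I would assume $c$ is realisable, so by Theorem~\ref{thm:Filippo-realisability} its ports partition into inputs and outputs so that the corresponding rewiring computes a rational affine map $p \mapsto A\cdot p + b$ with $A,b$ rational. The affine part $b$ is exactly what the $\One$ constant contributes. Feeding the single $\One$ through the realisation, the value $1$ it emits is routed (via the underlying linear circuit $\hat{c}$) into the output side as the constant vector $b$; since $b$ is rational, the $\One$ input is mapped to a rational output, which forces the distinguished port of $\hat{c}$ to behave as a genuine input port of the linear circuit. More concretely, I would argue that if the distinguished port were output-only in $\hat{c}$, then in the rewiring of $c$ the constant $1$ of $\One$ would be forced as an \emph{output} rather than supplied as an input, so the rewired behaviour could not be a total affine map (it would constrain the outputs rather than the inputs), contradicting Theorem~\ref{thm:Filippo-realisability}.

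For the converse, I would assume $\One$ is connected to an input port of $\hat{c}$ and build an affine signal flow graph realising $c$. Since $\hat{c}\in\CD$, Theorem~\ref{thm:realisability} gives a realisation of $\hat{c}$ as (a rewiring of) a signal flow graph $g\in\SFGform$, and because the distinguished port is an input, we may choose the rewiring so that this port is among the inputs of $g$. Then $\dsem{g}$ is the graph of a rational matrix by Proposition~\ref{prop:sfg-rational}, and plugging $\One$ back into the designated input wire corresponds exactly to precomposing that input coordinate with $\{(\bullet,1)\}$. The resulting circuit is built from the generators of~\eqref{eq:SFcalculusSyntax1} closed under $\Tr{}(\cdot)$ together with one $\One$, hence lies in $\ASFG$; its denotation is $\{(p, A'\cdot p + b)\mid p\in\frpoly\}$ for the rational submatrix $A'$ and rational vector $b$ obtained from the relevant column of $A$. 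By Proposition~\ref{prop:asfg-rational} this is a rational affine map, and the chosen rewiring of $c$ therefore realises it.

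The main obstacle I anticipate is making precise the informal step ``the $\One$ is connected to an input port,'' since realisations are not unique and the same port may be simultaneously input and output. The delicate point is to show that the \emph{existence} of a realisation of $\hat{c}$ in which the distinguished port is an input is both necessary and sufficient, and to reconcile this with the freedom in choosing the partition in Theorem~\ref{thm:Filippo-realisability}. I would handle this by working with the semantic criterion of Theorem~\ref{thm:Filippo-realisability} throughout, translating ``input port'' into the statement that the rewired map is total and rational in that coordinate, so that the combinatorial notion of connection reduces to the algebraic condition that $\One$ contributes a rational constant vector $b$ rather than imposing a constraint on the outputs.
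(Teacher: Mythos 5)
Your converse direction (if $\One$ is connected to an input port of $\hat{c}$ then $c$ is realisable) is essentially the paper's own argument: by definition of input port, $\hat{c}$ has a realisation as a signal flow graph $g$ in which the distinguished port is an input, and plugging $\One$ into that input of $g$ yields a circuit of $\ASFG$ equivalent to the corresponding rewiring of $c$ (your extra appeal to Proposition~\ref{prop:asfg-rational} is harmless but unnecessary). Your forward direction, by contrast, genuinely departs from the paper. The paper reasons syntactically: a realisation $f\in\ASFG$ of $c$ has, by the shape of $\ASFG$-circuits, its unique $\One$ feeding an input of the signal flow graph $\hat{f}$, and $\hat{f}$ is then a realisation of $\hat{c}$ witnessing that the distinguished port is an input. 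You reason semantically via Theorem~\ref{thm:Filippo-realisability}: a realising partition of $c$'s ports gives a rational affine map $p\mapsto Ap+b$, and you want to conclude that adding the distinguished port to the inputs exhibits $\hat{c}$ as a rational \emph{linear} map, namely $(x_0,p)\mapsto Ap+x_0b$. This is a legitimate alternative; if anything it replaces the paper's informal claim that ``$\One$ can only appear as an input of $\hat{f}$'' (which hides a structural induction over $\ASFG$) by a purely algebraic argument.

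The one place where your sketch falls short of a proof is exactly this algebraic step. Saying that, were the port output-only, the constant $1$ ``would constrain the outputs rather than the inputs'' is not an argument; and knowing only that feeding $1$ into the distinguished port produces rational outputs says nothing, by itself, about feeding in other values. What you must prove is: if the linear relation $R=\dsem{\hat{c}}$ (rewired according to your realising partition, with the distinguished port as an extra left coordinate $x_0$) has slice at $x_0=1$ equal to the graph of $p\mapsto Ap+b$ with $A,b$ rational, then $R$ is exactly the graph of the total rational linear map $(x_0,p)\mapsto Ap+x_0b$. This is where linearity of $\dsem{\hat{c}}$, i.e.\ $\hat{c}\in\CD$, is used, and it is a short computation: from $(1,p,Ap+b)\in R$ for all $p$ one gets $(1,0,b)\in R$ and hence $(0,p,Ap)\in R$, which gives one inclusion; conversely, for any $(x_0,p,q)\in R$, adding $(1-x_0)(1,0,b)$ gives $(1,p,q+(1-x_0)b)\in R$, so $q+(1-x_0)b=Ap+b$, i.e.\ $q=Ap+x_0b$, which gives functionality and totality. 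With this lemma the matrix $(\,b\mid A\,)$ is rational, so by Proposition~\ref{prop:sfg-rational} and completeness of $\AIH$ the rewiring of $\hat{c}$ with the distinguished port among the inputs is equivalent to a signal flow graph, i.e.\ that port is an input port of $\hat{c}$. Once this step is inserted, your proof is complete.
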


\section{Conclusion and Future Work}\label{sec:conclusion}

We introduced the operational semantics of the \emph{affine} extension of the signal flow calculus and proved that contextual equivalence coincides with denotational equality, previously introduced and axiomatised in~\cite{BonchiPSZ19}.
We have observed that, at the denotational level, affinity provides two key properties (Propositions~\ref{prop:denotation-context} and~\ref{prop:twopossibility}) for the proof of full abstraction.
However, at the operational level, affinity forces us to consider computations starting in the \emph{past} (Example \ref{example:1overX}) as the syntax allows terms lacking a proper flow directionality. This leads to circuits that might deadlock ($
\tikzset{x=1em, y=2.1ex}
}
\tikzset{x=1em, y=1.5ex}
$ in Example \ref{example:empty}) or perform some problematic computations ($\cospanregs{}{}$ in Example \ref{example:spancospan}). We have identified a proper subclass of circuits, called affine signal flow graphs (Definition \ref{def:ASFG}), that possess an inherent flow directionality:  in these circuits, the same pathological behaviours do not arise (Proposition \ref{prop:sfg-infinite}). This class is not too restrictive as it captures all desirable behaviours: a realisability result (Theorem \ref{thm:Filippo-realisability}) states that all and only the circuits that do not need computations to start in the past are equivalent to (the rewiring of) an affine signal flow graph.  

The reader may be wondering why we do not restrict the syntax to affine signal flow graphs. The reason is that, like in the behavioural approach to control theory \cite{Willems2007}, the lack of flow direction is what allows the (affine) signal flow calculus to achieve a strong form of compositionality and a complete axiomatisation (see~\cite{Bonchi2015} for a deeper discussion).

We expect that similar methods and results can be extended to other models of computation. Our next step is to tackle Petri nets, which, as shown in~\cite{BHPSZ-popl19}, can be regarded as terms of the signal flow calculus, but over $\N$ rather than a field.

\bibliographystyle{splncs04}
\bibliography{affinepaper}

% APPENDICES
%\onecolumn
\newpage
\appendix

\section*{Appendix}\label{app:main}
\section{Interacting Hopf Algebras: a Complete Axiomatisation of Affine Circuits}
\label{sec:axiomatisation}

This appendix contains the equational theory of affine relations over a field $\field$, called the theory of Affine Interacting Hopf algebras ($\AIH{\field}$), as it appears in~\cite{interactinghopf} (for the linear fragment) and~\cite{BonchiPSZ19} (for the affine extension).
The axioms are in Figure~\ref{fig:ih}; we briefly explain them below.
\begin{figure*}[htbp]
\begin{align*}

\tikzset{x=1em, y=2.1ex}
\InputIfFileExists{ax/add-associative.tikz}{}{\input{./tikz/ax/add-associative.tikz}}
\tikzset{x=1em, y=1.5ex}
\;\myeq{$\circ$-as}\;\; 
\tikzset{x=1em, y=2.1ex}
\InputIfFileExists{ax/add-associative-1.tikz}{}{\input{./tikz/ax/add-associative-1.tikz}}
\tikzset{x=1em, y=1.5ex}
&\qquad  
\tikzset{x=1em, y=2.1ex}
\InputIfFileExists{ax/add-commutative.tikz}{}{\input{./tikz/ax/add-commutative.tikz}}
\tikzset{x=1em, y=1.5ex}
\myeq{$\circ$-co}\;\;\, 
\tikzset{x=1em, y=2.1ex}
\InputIfFileExists{ax/add.tikz}{}{\input{./tikz/ax/add.tikz}}
\tikzset{x=1em, y=1.5ex}
\qquad  
\tikzset{x=1em, y=2.1ex}
\InputIfFileExists{ax/add-unital-left.tikz}{}{\input{./tikz/ax/add-unital-left.tikz}}
\tikzset{x=1em, y=1.5ex}
\myeq{$\circ$-unl}\;
\tikzset{x=1em, y=2.1ex}
\InputIfFileExists{ax/id.tikz}{}{\input{./tikz/ax/id.tikz}}
\tikzset{x=1em, y=1.5ex}
%&\myeq{$\circ$-unr}\;\;\;&\tikzfig{ax/add-unital-right}
\\
 
\tikzset{x=1em, y=2.1ex}
\InputIfFileExists{ax/co-add-associative-1.tikz}{}{\input{./tikz/ax/co-add-associative-1.tikz}}
\tikzset{x=1em, y=1.5ex}
\;\;\myeq{$\circ$-coas}\; 
\tikzset{x=1em, y=2.1ex}
\InputIfFileExists{ax/co-add-associative.tikz}{}{\input{./tikz/ax/co-add-associative.tikz}}
\tikzset{x=1em, y=1.5ex}
&\qquad  
\tikzset{x=1em, y=2.1ex}
\InputIfFileExists{ax/co-add-commutative.tikz}{}{\input{./tikz/ax/co-add-commutative.tikz}}
\tikzset{x=1em, y=1.5ex}
\;\;\,\myeq{$\circ$-coco}\;\; 
\tikzset{x=1em, y=2.1ex}
\InputIfFileExists{ax/co-add.tikz}{}{\input{./tikz/ax/co-add.tikz}}
\tikzset{x=1em, y=1.5ex}
\qquad  
\tikzset{x=1em, y=2.1ex}
\InputIfFileExists{ax/co-add-unital-left.tikz}{}{\input{./tikz/ax/co-add-unital-left.tikz}}
\tikzset{x=1em, y=1.5ex}
\;\myeq{$\circ$-counl}\;
\tikzset{x=1em, y=2.1ex}
\InputIfFileExists{ax/id.tikz}{}{\input{./tikz/ax/id.tikz}}
\tikzset{x=1em, y=1.5ex}
%\myeq{$\circ$-unr}\;&\tikzfig{ax/co-add-unital-right}
 \\
 
\tikzset{x=1em, y=2.1ex}
\InputIfFileExists{ax/copy-associative.tikz}{}{\input{./tikz/ax/copy-associative.tikz}}
\tikzset{x=1em, y=1.5ex}
\;\;\,\myeq{$\bullet$-coas}\;\; 
\tikzset{x=1em, y=2.1ex}
\InputIfFileExists{ax/copy-associative-1.tikz}{}{\input{./tikz/ax/copy-associative-1.tikz}}
\tikzset{x=1em, y=1.5ex}
&\qquad  
\tikzset{x=1em, y=2.1ex}
\InputIfFileExists{ax/copy-commutative.tikz}{}{\input{./tikz/ax/copy-commutative.tikz}}
\tikzset{x=1em, y=1.5ex}
\;\;\,\myeq{$\bullet$-coco}\;\; 
\tikzset{x=1em, y=2.1ex}
\InputIfFileExists{ax/copy.tikz}{}{\input{./tikz/ax/copy.tikz}}
\tikzset{x=1em, y=1.5ex}
\qquad  
\tikzset{x=1em, y=2.1ex}
\InputIfFileExists{ax/copy-unital-left.tikz}{}{\input{./tikz/ax/copy-unital-left.tikz}}
\tikzset{x=1em, y=1.5ex}
\;\myeq{$\bullet$-counl}\;
\tikzset{x=1em, y=2.1ex}
\InputIfFileExists{ax/id.tikz}{}{\input{./tikz/ax/id.tikz}}
\tikzset{x=1em, y=1.5ex}
%&\myeq{$\bullet$-counr}\;&\tikzfig{ax/copy-unital-right}
 \\

\tikzset{x=1em, y=2.1ex}
\InputIfFileExists{ax/co-copy-associative.tikz}{}{\input{./tikz/ax/co-copy-associative.tikz}}
\tikzset{x=1em, y=1.5ex}
\;\myeq{$\bullet$-as}\;\; 
\tikzset{x=1em, y=2.1ex}
\InputIfFileExists{ax/co-copy-associative-1.tikz}{}{\input{./tikz/ax/co-copy-associative-1.tikz}}
\tikzset{x=1em, y=1.5ex}
&\qquad  
\tikzset{x=1em, y=2.1ex}
\InputIfFileExists{ax/co-copy-commutative.tikz}{}{\input{./tikz/ax/co-copy-commutative.tikz}}
\tikzset{x=1em, y=1.5ex}
\,\myeq{$\bullet$-co}\;\, 
\tikzset{x=1em, y=2.1ex}
\InputIfFileExists{ax/co-copy.tikz}{}{\input{./tikz/ax/co-copy.tikz}}
\tikzset{x=1em, y=1.5ex}
\qquad  
\tikzset{x=1em, y=2.1ex}
\InputIfFileExists{ax/co-copy-unital-left.tikz}{}{\input{./tikz/ax/co-copy-unital-left.tikz}}
\tikzset{x=1em, y=1.5ex}
\myeq{$\bullet$-unl}\;
\tikzset{x=1em, y=2.1ex}
\InputIfFileExists{ax/id.tikz}{}{\input{./tikz/ax/id.tikz}}
\tikzset{x=1em, y=1.5ex}

 %&\myeq{$\bullet$-unl}\;\;\;&\tikzfig{ax/co-copy-unital-right}
\end{align*}
\hdashrule{\linewidth}{1pt}{1pt}

\begin{equation*}

\tikzset{x=1em, y=2.1ex}
\InputIfFileExists{ax/add-copy-bimonoid.tikz}{}{\input{./tikz/ax/add-copy-bimonoid.tikz}}
\tikzset{x=1em, y=1.5ex}
\;\;\myeq{$\circ\bullet$-bi}\;\;
\tikzset{x=1em, y=2.1ex}
\InputIfFileExists{ax/add-copy-bimonoid-1.tikz}{}{\input{./tikz/ax/add-copy-bimonoid-1.tikz}}
\tikzset{x=1em, y=1.5ex}
 \qquad 
\tikzset{x=1em, y=2.1ex}
\InputIfFileExists{ax/add-copy-bimonoid-unit.tikz}{}{\input{./tikz/ax/add-copy-bimonoid-unit.tikz}}
\tikzset{x=1em, y=1.5ex}
 \,\;\;\,\myeq{$\circ\bullet$-biun}\;\;\;\, 
\tikzset{x=1em, y=2.1ex}
\InputIfFileExists{ax/add-bimonoid-unit-1.tikz}{}{\input{./tikz/ax/add-bimonoid-unit-1.tikz}}
\tikzset{x=1em, y=1.5ex}
 \qquad 
\tikzset{x=1em, y=2.1ex}
\InputIfFileExists{ax/add-copy-bimonoid-counit.tikz}{}{\input{./tikz/ax/add-copy-bimonoid-counit.tikz}}
\tikzset{x=1em, y=1.5ex}
 \;\;\;\myeq{$\bullet\circ$-biun}\;\;\;\, 
\tikzset{x=1em, y=2.1ex}
\InputIfFileExists{ax/add-copy-bimonoid-counit-1.tikz}{}{\input{./tikz/ax/add-copy-bimonoid-counit-1.tikz}}
\tikzset{x=1em, y=1.5ex}
\qquad
\tikzset{x=1em, y=2.1ex}
\InputIfFileExists{ax/bone-white-black.tikz}{}{\input{./tikz/ax/bone-white-black.tikz}}
\tikzset{x=1em, y=1.5ex}
\;\;\myeq{$\circ\bullet$-bo}\;\;\;
\tikzset{x=1em, y=2.1ex}
\InputIfFileExists{empty-diag.tikz}{}{\input{./tikz/empty-diag.tikz}}
\tikzset{x=1em, y=1.5ex}

\end{equation*}
\hdashrule{\linewidth}{1pt}{1pt}
\vspace{1pt}
\begin{equation*}

\tikzset{x=1em, y=2.1ex}
\InputIfFileExists{ax/copy-Frobenius-left.tikz}{}{\input{./tikz/ax/copy-Frobenius-left.tikz}}
\tikzset{x=1em, y=1.5ex}
\;\;\myeq{$\bullet$-fr1}\;\; 
\tikzset{x=1em, y=2.1ex}
\InputIfFileExists{ax/copy-Frobenius.tikz}{}{\input{./tikz/ax/copy-Frobenius.tikz}}
\tikzset{x=1em, y=1.5ex}
\;\;\myeq{$\bullet$-fr2}\;\; 
\tikzset{x=1em, y=2.1ex}
\InputIfFileExists{ax/copy-Frobenius-right.tikz}{}{\input{./tikz/ax/copy-Frobenius-right.tikz}}
\tikzset{x=1em, y=1.5ex}
 \qquad 
\tikzset{x=1em, y=2.1ex}
\InputIfFileExists{ax/copy-special.tikz}{}{\input{./tikz/ax/copy-special.tikz}}
\tikzset{x=1em, y=1.5ex}
\myeq{$\bullet$-sp}
\tikzset{x=1em, y=2.1ex}
\InputIfFileExists{ax/id.tikz}{}{\input{./tikz/ax/id.tikz}}
\tikzset{x=1em, y=1.5ex}
\qquad 
\tikzset{x=1em, y=2.1ex}
\InputIfFileExists{ax/bone-black.tikz}{}{\input{./tikz/ax/bone-black.tikz}}
\tikzset{x=1em, y=1.5ex}
\;\;\myeq{$\bullet$-bo}\;\;
\tikzset{x=1em, y=2.1ex}
\InputIfFileExists{empty-diag.tikz}{}{\input{./tikz/empty-diag.tikz}}
\tikzset{x=1em, y=1.5ex}

\end{equation*}
\vspace{3pt}
\begin{equation*}

\tikzset{x=1em, y=2.1ex}
\InputIfFileExists{ax/add-Frobenius-left.tikz}{}{\input{./tikz/ax/add-Frobenius-left.tikz}}
\tikzset{x=1em, y=1.5ex}
\;\;\myeq{$\circ$-fr1}\;\; 
\tikzset{x=1em, y=2.1ex}
\InputIfFileExists{ax/add-Frobenius.tikz}{}{\input{./tikz/ax/add-Frobenius.tikz}}
\tikzset{x=1em, y=1.5ex}
\;\;\myeq{$\circ$-fr2}\;\; 
\tikzset{x=1em, y=2.1ex}
\InputIfFileExists{ax/add-Frobenius-right.tikz}{}{\input{./tikz/ax/add-Frobenius-right.tikz}}
\tikzset{x=1em, y=1.5ex}
 
\qquad 
\tikzset{x=1em, y=2.1ex}
\InputIfFileExists{ax/add-special.tikz}{}{\input{./tikz/ax/add-special.tikz}}
\tikzset{x=1em, y=1.5ex}
\myeq{$\circ$-sp}
\tikzset{x=1em, y=2.1ex}
\InputIfFileExists{ax/id.tikz}{}{\input{./tikz/ax/id.tikz}}
\tikzset{x=1em, y=1.5ex}
\qquad
\tikzset{x=1em, y=2.1ex}
\InputIfFileExists{ax/bone-white.tikz}{}{\input{./tikz/ax/bone-white.tikz}}
\tikzset{x=1em, y=1.5ex}
\;\;\myeq{$\circ$-bo}\;\;
\tikzset{x=1em, y=2.1ex}
\InputIfFileExists{empty-diag.tikz}{}{\input{./tikz/empty-diag.tikz}}
\tikzset{x=1em, y=1.5ex}

\end{equation*}
\vspace{2pt}
\hdashrule{\linewidth}{1pt}{1pt}
\begin{align*}
  
\tikzset{x=1em, y=2.1ex}
\InputIfFileExists{ax/reals-add.tikz}{}{\input{./tikz/ax/reals-add.tikz}}
\tikzset{x=1em, y=1.5ex}
\;\myeq{add}\;
\tikzset{x=1em, y=2.1ex}
\InputIfFileExists{ax/reals-add-1.tikz}{}{\input{./tikz/ax/reals-add-1.tikz}}
\tikzset{x=1em, y=1.5ex}
 \qquad 
\tikzset{x=1em, y=2.1ex}
\InputIfFileExists{ax/zero.tikz}{}{\input{./tikz/ax/zero.tikz}}
\tikzset{x=1em, y=1.5ex}
\;\myeq{zer}\;
\tikzset{x=1em, y=2.1ex}
\InputIfFileExists{ax/reals-zero.tikz}{}{\input{./tikz/ax/reals-zero.tikz}}
\tikzset{x=1em, y=1.5ex}
\\ 
  
\tikzset{x=1em, y=2.1ex}
\InputIfFileExists{ax/reals-copy.tikz}{}{\input{./tikz/ax/reals-copy.tikz}}
\tikzset{x=1em, y=1.5ex}
\;\myeq{dup}\; 
\tikzset{x=1em, y=2.1ex}
\InputIfFileExists{ax/reals-copy-1.tikz}{}{\input{./tikz/ax/reals-copy-1.tikz}}
\tikzset{x=1em, y=1.5ex}
 \qquad 
\tikzset{x=1em, y=2.1ex}
\InputIfFileExists{ax/reals-delete.tikz}{}{\input{./tikz/ax/reals-delete.tikz}}
\tikzset{x=1em, y=1.5ex}
\;\myeq{del}\;
\tikzset{x=1em, y=2.1ex}
\InputIfFileExists{ax/delete.tikz}{}{\input{./tikz/ax/delete.tikz}}
\tikzset{x=1em, y=1.5ex}

\end{align*}
\begin{equation*}
  
\tikzset{x=1em, y=2.1ex}
\InputIfFileExists{ax/reals-multiplication.tikz}{}{\input{./tikz/ax/reals-multiplication.tikz}}
\tikzset{x=1em, y=1.5ex}
\;\myeq{$\times$}\;
\tikzset{x=1em, y=2.1ex}
\InputIfFileExists{ax/reals-multiplication-1.tikz}{}{\input{./tikz/ax/reals-multiplication-1.tikz}}
\tikzset{x=1em, y=1.5ex}
 \qquad    
\tikzset{x=1em, y=2.1ex}
\InputIfFileExists{ax/reals-sum.tikz}{}{\input{./tikz/ax/reals-sum.tikz}}
\tikzset{x=1em, y=1.5ex}
\;\myeq{$+$}\;
\tikzset{x=1em, y=2.1ex}
\InputIfFileExists{ax/reals-sum-1.tikz}{}{\input{./tikz/ax/reals-sum-1.tikz}}
\tikzset{x=1em, y=1.5ex}
\qquad   
\tikzset{x=1em, y=2.1ex}
\InputIfFileExists{ax/reals-scalar-zero.tikz}{}{\input{./tikz/ax/reals-scalar-zero.tikz}}
\tikzset{x=1em, y=1.5ex}
\;\myeq{$0$}\;
\tikzset{x=1em, y=2.1ex}
\InputIfFileExists{ax/reals-scalar-zero-1.tikz}{}{\input{./tikz/ax/reals-scalar-zero-1.tikz}}
\tikzset{x=1em, y=1.5ex}

\end{equation*}
\hdashrule{\linewidth}{1pt}{1pt}

\begin{equation*}
    
\tikzset{x=1em, y=2.1ex}
\InputIfFileExists{ax/scalar-division.tikz}{}{\input{./tikz/ax/scalar-division.tikz}}
\tikzset{x=1em, y=1.5ex}
\;\myeq{$r$-inv}\; 
\tikzset{x=1em, y=2.1ex}
\InputIfFileExists{ax/id.tikz}{}{\input{./tikz/ax/id.tikz}}
\tikzset{x=1em, y=1.5ex}
 \qquad 
\tikzset{x=1em, y=2.1ex}
\InputIfFileExists{ax/id.tikz}{}{\input{./tikz/ax/id.tikz}}
\tikzset{x=1em, y=1.5ex}
\;\myeq{$r$-coinv}\;
\tikzset{x=1em, y=2.1ex}
\InputIfFileExists{ax/scalar-co-division.tikz}{}{\input{./tikz/ax/scalar-co-division.tikz}}
\tikzset{x=1em, y=1.5ex}
\quad \text{ for } r\neq 0, r\in \field
\end{equation*}
\vspace{3pt}
\hdashrule{\linewidth}{1pt}{1pt}

\begin{equation*}

\tikzset{x=1em, y=2.1ex}
\InputIfFileExists{ax/one-copy.tikz}{}{\input{./tikz/ax/one-copy.tikz}}
\tikzset{x=1em, y=1.5ex}
\quad \myeq{1-dup}\quad 
\tikzset{x=1em, y=2.1ex}
\InputIfFileExists{ax/one2.tikz}{}{\input{./tikz/ax/one2.tikz}}
\tikzset{x=1em, y=1.5ex}
\qquad\qquad\qquad\qquad

\tikzset{x=1em, y=2.1ex}
\InputIfFileExists{ax/one-delete.tikz}{}{\input{./tikz/ax/one-delete.tikz}}
\tikzset{x=1em, y=1.5ex}
\quad \myeq{1-del}\quad 
\tikzset{x=1em, y=2.1ex}
\InputIfFileExists{empty-diag.tikz}{}{\input{./tikz/empty-diag.tikz}}
\tikzset{x=1em, y=1.5ex}

\end{equation*}
\begin{equation*}

\tikzset{x=1em, y=2.1ex}
\InputIfFileExists{ax/one-false.tikz}{}{\input{./tikz/ax/one-false.tikz}}
\tikzset{x=1em, y=1.5ex}
\quad \myeq{$\varnothing$}\quad
\tikzset{x=1em, y=2.1ex}
\InputIfFileExists{ax/one-false-disconnect.tikz}{}{\input{./tikz/ax/one-false-disconnect.tikz}}
\tikzset{x=1em, y=1.5ex}

\end{equation*}
\vspace{3pt}
\hdashrule{\linewidth}{1pt}{1pt}
\begin{equation*}

\tikzset{x=1em, y=2.1ex}
\InputIfFileExists{generators/co-one.tikz}{}{\input{./tikz/generators/co-one.tikz}}
\tikzset{x=1em, y=1.5ex}
\;\;\myeq{co1}\;\;
\tikzset{x=1em, y=2.1ex}
\InputIfFileExists{ax/co-one-def.tikz}{}{\input{./tikz/ax/co-one-def.tikz}}
\tikzset{x=1em, y=1.5ex}
\qquad\quad 
\tikzset{x=1em, y=2.1ex}
\InputIfFileExists{generators/co-scalar-r.tikz}{}{\input{./tikz/generators/co-scalar-r.tikz}}
\tikzset{x=1em, y=1.5ex}
\;\;\myeq{coreg}\;\;
\tikzset{x=1em, y=2.1ex}
\InputIfFileExists{ax/co-scalar-def.tikz}{}{\input{./tikz/ax/co-scalar-def.tikz}}
\tikzset{x=1em, y=1.5ex}

\end{equation*}
\caption{Axioms of  Affine Interacting Hopf Algebras ($\AIH{\field}$).\label{fig:ih}}
\end{figure*}

\begin{itemize}
\item In the first block, both the black and white structures are commutative monoids and comonoids, expressing fundamental properties of addition and copying.
\item In the second block, the white monoid and black comonoid interact as a bimonoid. Bimonoids are one of two canonical ways that monoids and comonoids interact, as shown in~\cite{Lack2004a}.
\item In the third and fourth block, both the black and the white monoid/comonoid pair form an extraspecial Frobenius
monoid. The Frobenius equations (\textsf{fr 1}) and (\textsf{fr 2}) are a famous algebraic pattern which establishes a bridge between algebraic and topological phenomena, see~\cite{Carboni1987,Kock2003,Coecke2017}. The ``extraspecial'' refers to the two additional equations, the \emph{special} equation ($\bullet$\textsf{-sp}) and the \emph{bone} equation ($\bullet$-\textsf{bo}). The Frobenius equations, together with the special equation, are the another canonical pattern of interaction between monoids and comonoids identified in~\cite{Lack2004a}. Together with the bone equation, the set of four equations characterises \emph{corelations}, see~\cite{Bruni01somealgebraic,Zanasi16,CF}.
\item The equations in the fourth block are parametrised over $r \in \field$ and describe commutativity of $\tscalar{r}$ with respect to the other operations, as well as multiplication and addition of scalars.  
\item The fifth block encodes multiplicative inverses of the field, guaranteeing that $\tcoscalar{r}$ behaves  as division by $r$.
\item The sixth block deals with the truly affine part of the calculus, the constant $\One$ and its relationship to other generators. The first two equations just say that $\One$ can be copied and deleted by the black structure, in other words that it denotes a single value. More interestingly, the third equation of this block is justified by the possibility of expressing the empty set, by, for example,
\begin{equation}
  \label{eq:empty-set}
  \dsem{
\tikzset{x=1em, y=2.1ex}
}
\tikzset{x=1em, y=1.5ex}
}\,=\{(\bullet,1)\}\poi\{(0,\bullet)\} = \varnothing.
\end{equation}
The last equation thus guarantees that this diagram behaves like logical false, since for any $c$ and $d$ in, $\varnothing \tns c = \varnothing \tns d = \varnothing$; composing or taking the monoidal product of $\varnothing$ with any relation results in $\varnothing$.
\item Finally, the last block constrains the mirror generators for $\One$ and $\tscalar{r}$ to be obtained from them by bending the wires around, using the black Frobenius structure. Note that there is some redundancy in the presentation, as we could have used only $\One$ and $\tscalar{r}$ as generators, and taken these to be definitions of $\coOne$ and $\tcoscalar{r}$. 
 \end{itemize}

\section{From Matrices to Circuits and Back}
\label{sec:mat}

Several proofs in Section~\ref{sec:realisability} exploit the ability to represent matrices and vectors in the graphical syntax. Details can be found in~\cite[Sec. 3.2]{ZanasiThesis} but we recall the basics below.

Roughly speaking, for any field $\field$ the theory of matrices lives inside both $\IH{\field}$ as the subprop generated by $\Bcomult,\Bcounit,\Wmult,\Wunit$ along with the scalars $\scalar$. It means that, using only these we can represent any matrix with coefficients in $\field$. And, moreover, reasoning about them can be done entirely graphically, as the corresponding equational theory is complete.

To develop some intuition for this correspondence, let us demonstrate how matrices are represented diagrammatically. Vectors can just be seen as $m\times 1$ matrices. An $m\times n$ matrix $\mathcal{M}_{\scriptscriptstyle d}$ corresponds to a diagram $d$ with $n$ wires on the left and $m$ wires on the right---the left ports can be interpreted as the columns and the right ports as the rows of $\mathcal{M}_{\scriptscriptstyle d}$. The left $j$th port is connected to the $i$th port on the right through an $r$-weighted wire whenever coefficient $({\mathcal{M}_{\scriptscriptstyle d}})_{ij}$ is a nonzero scalar $r\in \field$. When the $({\mathcal{M}_{\scriptscriptstyle d}})_{ij}$ entry is $0$, they are disconnected. Since composition along a wire carries the multiplicative structure of $\field$, we can simply draw the connection as a plain wire if $({\mathcal{M}_{\scriptscriptstyle d}})_{ij}=1$. For example,
\begin{equation*}
\text{The matrix }
  \label{eq:matrix-diagram}
  \mathcal{M}_{\scriptscriptstyle d} = 
 \begin{pmatrix}
  a & 0 & 0 \\
  b & 0 & 1 \\
  1 & 0 & 0 \\
  0 & 0 & 0 
 \end{pmatrix}
\end{equation*}
is represented by the following diagram:
\[d = 
\tikzset{x=1em, y=2.1ex}
\InputIfFileExists{ex-matrix.tikz}{}{\input{./tikz/ex-matrix.tikz}}
\tikzset{x=1em, y=1.5ex}
\]
Conversely, given a diagram, we recover the matrix by counting weighted paths from left to right ports. Then we have $\dsem{d} = \{ (v,\mathcal{M}_{\scriptscriptstyle d}\cdot v) \mid v \in \field^n\}$.

\section{Missing Proofs}\label{app:proofs}

\begin{proof}[Proof of Theorem~\ref{thm:opsem-morphism}]
We need to prove that 
\begin{enumerate}
\item $\osem{c \poi d}=\osem{c}\poi \osem{d}$,
\item $\osem{c\tns d}=\osem{c} \tns \osem{d}$.
\end{enumerate}
We prove 1. here. 2. is completely analogous.
\medskip

By rule for $\poi$ in \figref{fig:operationalSemantics},
 for each $\zeta \in \osem{c;d}$ there exist two infinite computations starting at time $t$,
$$\context{t} c_0 \dtrans{u_t}{v_t}c_1\dtrans{u_{t+1}}{v_{t+1}}c_2\dtrans{u_{t+2}}{v_{t+2}} \dots \quad\text{ and }\quad \context{t} d_0 \dtrans{v_t}{w_t}d_1\dtrans{v_{t+1}}{w_{t+1}}d_2\dtrans{v_{t+2}}{w_{t+2}}\dots $$ 
for $c_0$ and $d_0$ the initial states of $c$ and $d$ respectively, and such that $\zeta(i)=(u_i,w_i)$.
Then,  by defining $\sigma(i)=(u_i,v_i)$ and $\tau(i)=(v_i,w_i)$ for all $i\in \N$, we have that $\sigma \in \osem{c}$ and $\tau \in \osem{d}$. By construction, $\sigma$ is compatible with  $\tau$ and $\zeta=\sigma ; \tau$. Therefore $\zeta \in \osem{c};\osem{\tau}$.

Conversely, suppose that  $\zeta \in \osem{c}\poi\osem{\tau}$. Then there exists $\sigma\in \osem{c}$ and $\tau \in \osem{d}$ such that $\sigma$ and $\tau$ are compatible and $\zeta=\sigma\poi\tau$. This means that there exist
two infinite computations (starting at potentially different times),
$$\context{t} c_0 \dtrans{u_t}{v_t}c_1\dtrans{u_{t+1}}{v_{t+1}}c_2\dtrans{u_{t+2}}{v_{t+2}} \dots \quad\text{ and }\quad \context{s} d_0 \dtrans{v_t}{w_t}d_1\dtrans{v_{t+1}}{w_{t+1}}d_2\dtrans{v_{t+2}}{w_{t+2}}\dots $$ 
for $c_0$ and $d_0$ the initial states of $c$ and $d$ respectively, such that $\sigma(i)=(u_{i},v_{i})$ for $i \geq t$ and $\tau(i)=(v_{i},w_{i})$ for $i \geq s$. 

Without loss of generality, we can assume that $t\leq s$. We now have two cases: either $t < s$ or $t=s$.
\begin{itemize}
\item If $t<s$, since $s\leq 0$ by assumption (cf. Definition \ref{def:trajectories}) we can apply Lemma \ref{lemma:idle} iteratively to extend the computation 
$$\context{s} d_0 \dtrans{v_t}{w_t}d_1\dtrans{v_{t+1}}{w_{t+1}}d_2\dtrans{v_{t+2}}{w_{t+2}}\dots $$ 
by $q-p$ $\dtrans{0}{0}$ transitions into the past, to obtain 
$$\context{t} d_0 \dtrans{0}{0} d_0 \dtrans{0}{0} \dots d_0 \dtrans{v_t}{w_t}d_1\dtrans{v_{t+1}}{w_{t+1}}d_2\dtrans{v_{t+2}}{w_{t+2}}\dots $$
Clearly, the trajectory associated to this computation is still $\tau$ since $\tau(i) = 0$ for $i<s$. We have now reduced the problem to the next case.
\item If $t=s$, since $\zeta=\sigma\poi\tau$, then $\zeta(i)=(u_i,w_i)$. By the rule for $\poi$ in \figref{fig:operationalSemantics}, there exists an infinite computation
$$\context{t} e_0 \dtrans{u_t}{w_t}e_1\dtrans{u_{t+1}}{w_{t+1}}e_2\dtrans{u_{t+2}}{w_{t+2}} \dots$$
for $e_o$ the initial state of $c\poi d$. 
\end{itemize}
Therefore $\zeta\in \osem{c \poi d}$.
\end{proof}

\begin{proof}[Proof of Proposition~\ref{prop:rational-map}]
Suppose that $A$ is rational. Since rational fractions of polynomials form a ring, the multiplication and addition of two rational fractions is still rational, and therefore $A\cdot r\in\ratio^m$ for all $r\in\ratio^n$. 

Conversely, suppose that $A\cdot r\in\ratio^m$ for all $r\in\ratio^n$.
% We call \emph{irrational} any fraction of polynomials that is not in $\ratio$. 
Suppose now that $A$ has an non rational coefficient, say in column $i$. Let $e_i = \begin{pmatrix}0 \dots 0 & 1 & 0 \dots 0\end{pmatrix}^T$ be the element of the canonical basis fo $\frpoly^n$ where the single $1$ is at position $i$. Then $A\cdot e_i$ returns the $i$th column of $A$, which contains an non rational coefficient by hypothesis. As a result, all coefficients of $A$ have to be rational as required.
\end{proof}

\begin{proof}[Proof of Proposition~\ref{prop:affine-rational}]
Suppose that $f\colon p\mapsto A\cdot p+b$ is rational. Since rational fractions of polynomials form a ring, the multiplication and addition of two rational fractions is still rational, and therefore $f(r)\in\ratio^m$ for all $r\in\ratio^n$. 

Conversely, suppose that $f(r)\in\ratio^m$ for all $r\in\ratio^n$. Then $f(0) = b\in\ratio^m$. We can reason as in the proof of Proposition~\ref{prop:rational-map} to prove that $A$ must be rational and conclude that $f$ is a rational affine map.
\end{proof}

\begin{proof}[Proof of Proposition~\ref{prop:single-foot}]
First to see that we can eliminate all $\coOne$, it is sufficient to notice that
\[\coOne\quad \eqAIH\quad 
\tikzset{x=1em, y=2.1ex}
\InputIfFileExists{ax/co-one-def.tikz}{}{\input{./tikz/ax/co-one-def.tikz}}
\tikzset{x=1em, y=1.5ex}
\]
Now we are left with only $\One$, we can prove the statement by induction on the number of $\One$.
If $c$ contains no $\One$, then let
\[ c' =\;\; 
\tikzset{x=1em, y=2.1ex}
\InputIfFileExists{c-after-del.tikz}{}{\input{./tikz/c-after-del.tikz}}
\tikzset{x=1em, y=1.5ex}
\]
By (1-\textsf{del}), $\dsem{c}=\dsem{c'}$.
Assume that the proposition is true for some nonnegative integer $n$. Then, given $c$ with $n+1$ $\One$, we can use the symmetric monoidal structure to pull one $\One$ through and write $c$ as follows:
\[
\tikzset{x=1em, y=2.1ex}
\InputIfFileExists{c-n-ones.tikz}{}{\input{./tikz/c-n-ones.tikz}}
\tikzset{x=1em, y=1.5ex}
\]
for some circuit $d$ with $n$ $\One$. We can apply the induction hypothesis to $d$ and get $d'$ with a single $\One$ such that $\dsem{d}=\dsem{d'}$. Thus, by the same reasoning as before, there exists $d''$ with no $\One$ such that
\[ c \eqAIH\;\; 
\tikzset{x=1em, y=2.1ex}
\InputIfFileExists{two-ones.tikz}{}{\input{./tikz/two-ones.tikz}}
\tikzset{x=1em, y=1.5ex}
\quad\myeq{1-dup}\quad
\tikzset{x=1em, y=2.1ex}
\InputIfFileExists{single-one.tikz}{}{\input{./tikz/single-one.tikz}}
\tikzset{x=1em, y=1.5ex}
\]
Since the last diagram contains only a single $\One$, we are done.
\end{proof}

\begin{proof}[Proof of Theorem~\ref{thm:affine-realisability}]
Assume that $\One$ is connected to an input port. Then, $\hat{c}$ can be rewired to a signal flow graph of which that port is an input. Then we have obtained a rewiring of $c$ as a circuit of $\ASFG$.

Conversely, assume that the circuit $c$ is realisable. Then it can be rewired to an equivalent affine signal flow graph $f$. Then, by definition of $\ASFG$, the constant $\One$ can only appear as an input of $\hat{f}$. Finally, $\hat{f}$ is a signal flow graph that is a rewiring of $\hat{c}$, so this same port is also an input of $\hat{c}$.
\end{proof}

\end{document}